\documentclass[aps,superscriptaddress,nofootinbib,showpacs,notitlepage, twocolumn]{revtex4-1}

\usepackage{amsfonts}
\usepackage{amsmath}
\usepackage{amsthm}
\usepackage{braket}
\usepackage{graphicx}
\usepackage{makecell}
\usepackage{blkarray}
\usepackage[dvipsnames]{xcolor}
\usepackage{amssymb}
\usepackage{dsfont}
\usepackage{verbatim}
\usepackage{amsmath,amscd}
\usepackage[all,cmtip]{xy}
\usepackage{multirow}


\usepackage{blkarray}
\usepackage{booktabs}

\AtBeginDocument{
\lightrulewidth=.05em
\cmidrulewidth=.03em
\belowrulesep=.65ex
\belowbottomsep=0pt
\aboverulesep=.4ex
\abovetopsep=0pt
\cmidrulesep=\doublerulesep
\cmidrulekern=.5em
\defaultaddspace=.5em
}

\DeclareMathOperator{\Ima}{Im}

\definecolor{darkblue}{RGB}{0,0,127} 
\definecolor{darkgreen}{RGB}{0,130,80}
\definecolor{darkred}{RGB}{150,10,10}



\usepackage{afterpage}

\usepackage{calc}
\graphicspath{{./Figures/}}
\usepackage{tikz,pgfplots}\pgfplotsset{compat=newest}
\usepackage{xtab}
\usetikzlibrary{external,calc,decorations.pathreplacing,decorations.markings,decorations.pathmorphing,arrows.meta,shapes.geometric,cd}
\newlength\figureheight
\newlength\figurewidth



\newcommand{\Z}{\ensuremath{\mathbb{Z}}}

\newcommand{\R}[1]{{Ref.~\onlinecite{#1}}}
\newcommand{\strop}{\ensuremath{S}}

\newtheorem{lemma}{Lemma}

\newtheorem{definition}{Definition}

\newcommand{\drawgenerator}[8]{%
\xymatrix@!0{%
& #8 \ar@{-}[ld]\ar@{.}[dd] \ar@{-}[rr] & & #7 \ar@{-}[ld]  \\%
#1 \ar@{-}[rr] \ar@{-}[dd] &  & #2 \ar@{-}[dd] &            \\%
& #6 \ar@{.}[ld] &  & #5 \ar@{-}[uu] \ar@{.}[ll]       \\%
#3 \ar@{-}[rr] &  & #4 \ar@{-}[ru]                       %
}%
}

\usepackage{array,floatrow}
\usepackage{graphicx}
\usepackage[caption=false,label font={bf,normalsize}]{subfig}
\floatsetup[figure]{style=plain,subcapbesideposition=top}
\captionsetup{%
    format=plain,%
    textformat=period,
    justification=justified,
    singlelinecheck=false,
}

\usepackage{hyperref}
\hypersetup{colorlinks,
	linkcolor=darkblue,
	citecolor=darkgreen,
	filecolor=red,
	urlcolor=blue,
	pdftitle={},
	pdfauthor={}
}
\begin{document}

\title{Sorting topological stabilizer models in three dimensions}
\author{Arpit Dua}
\affiliation{Department of Physics, Yale University, New Haven, CT 06520-8120, USA}
\affiliation{Yale Quantum Institute, Yale University, New Haven, CT 06520, USA}
\author{Isaac H. Kim}
\affiliation{Stanford Institute for Theoretical Physics, Stanford University, Stanford CA 94305 USA} 
\author{Meng Cheng}
\affiliation{Department of Physics, Yale University, New Haven, CT 06520-8120, USA}
\author{Dominic~J. Williamson}
\affiliation{Department of Physics, Yale University, New Haven, CT 06520-8120, USA}

\begin{abstract}
The S-matrix invariant is known to be complete for translation invariant topological stabilizer models in two spatial dimensions, as such models are phase equivalent to some number of copies of toric code. 
In three dimensions, much less is understood about translation invariant topological stabilizer models due to the existence of fracton topological order. Here we introduce bulk commutation quantities inspired by the 2D S-matrix invariant that can be employed to coarsely sort 3D topological stabilizer models into qualitatively distinct types of phases: topological quantum field theories, foliated or fractal type-I models with rigid string operators, or type-II models with no string operators. 
\end{abstract}
  
\maketitle
In recent years the study of topological phases of matter has moved to the forefront of theoretical condensed matter physics. This has been fueled in part by the theorized existence of exotic phases of matter that can serve as topological quantum memories and computers~\cite{qdouble}. The classification of topological phases of matter in two spatial dimensions in terms of anyon theories, or modular tensor categories, and chiral central charges forms the cornerstone of the subject~\cite{Moore1988,kitaev2006anyons}. For the special case of 2D translation invariant topological stabilizer models this classification was established rigorously at the level of lattice Hamiltonians~\cite{Haah2018a,haah2016algebraic}. 
In three spatial dimensions, recent steps have been taken towards a similar goal, with some success for topological phases that admit a topological quantum field theory (TQFT) description~\cite{lan2017classification,Lan2019,Zhu2018}. However, the burgeoning field of \textit{fracton} topological phases~\cite{chamon2005quantum,PhysRevB.81.184303,bravyi2011topological,doi:10.1080/14786435.2011.609152,haah2011local,PhysRevLett.107.150504,kim20123d,yoshida2013exotic,bravyi2013quantum,Haah2013,haah2013commuting,haah2014bifurcation,PhysRevLett.116.027202,new_TO_vijay,vijay2016fracton,PhysRevB,PhysRevB.95.245126,vijay2017isotropic,vijay2017generalization,PhysRevB.96.165106,PhysRevB.96.195139,HHB_models,hsieh_halasz_partons,PhysRevB.97.155111,PhysRevB.97.041110,PhysRevB.97.165106,Bulmash2018,PhysRevB.97.125102,PhysRevB.97.125101,PhysRevB.97.144106,finite_temp_Xcube,you2018symmetric,prem2018cage,hao_twisted,Brown2019,Dua2019,Schmitz2019,You2019,You2019b,Weinstein2019,Prem2019,Bulmash2019} present a new and challenging facet of the classification problem, as practically all the familiar tools from the study of TQFTs no longer apply. 
These phases are characterized by the restricted mobility of their topological superselection sectors. In the most extreme case of type-II fracton models~\cite{haah2011local, PhysRevLett.107.150504,bravyi2013quantum,Haah2013,haah2013commuting,
haah2014bifurcation,kim20123d,yoshida2013exotic}, there exist no string operators capable of moving any nontrivial topological excitation. More precisely, there is a logarithmic energy barrier as a function of the distance a particle is to be moved. 

Even in the relatively simple setting of translation invariant topological stabilizer models in 3D there are a plethora of known models with very little organizing structure. An exception to this is the growing understanding of \textit{foliated} fracton phases~\cite{shirley2017fracton,shirley2018FoliatedFracton,shirley2018Fractional,shirley2018Foliated,shirley2018universal,Slagle2018foliated,Shirley2019,Wang2019}, which has proved quite successful due to many tools from the study of 2D topological phases being applicable there. 

In this work we set our sights beyond any particular subclass of fracton models to consider techniques that apply to arbitrary translation invariant topological stabilizer models in 3D. Our aim is to provide diagnostics that can be applied to an unknown model to sort it into one of a few coarsely defined classes of topological phases that share similar qualitative characteristics. To achieve this we focus on the presence and deformability of string and membrane operators, providing several generalizations of the familiar S-matrix from two dimensional anyon models that capture qualitative differences between distinct classes of fracton topological order. 
Our approach is based solely on bulk properties of each model and so is not sensitive to boundary conditions, unlike previously used quantities such as the ground space degeneracy. 
Furthermore, the operator based quantities employed do not suffer from spurious contributions~\cite{Williamson2018}, unlike entanglement entropy based quantities~\cite{PhysRevB.97.125102,PhysRevB.97.125101,PhysRevB.97.144106}. 

Our main results are presented in several tables: the outcome of applying the sorting procedure to a number of fracton models, including Haah's 18 cubic codes~\cite{haah2011local,Haah2013}, is summarized in table~\ref{classes_results}. 
The characteristic behaviour of different classes of fracton order under our diagnostics, which determines the sorting, is summarized in table~\ref{procedure}. Table~\ref{table_invariants} contains detailed results of the diagnostics for each model we have considered. 

The paper is laid out as follows: 
In section~\ref{top_classes} we describe the qualitatively distinct kinds of topological order that have been observed in 3D stabilizer models. 
In section~\ref{tools} we describe the tools we use to identify which kind of topological order a 3D stabilizer model supports. 
In section~\ref{algo} we describe how to apply these tools to a given 3D stabilizer model to identify the class of topological order it supports. 
In section~\ref{sec:conclusions} we conclude with a discussion of the results. 

In addition to the main text there are several appendices containing examples and useful reference information: In appendix~\ref{sec:SCdeformability} we frame the deformation of pair-creation operators more formally in terms of a  sufficient condition for deformability. 
In appendix~\ref{sec:deformabilityexamples} we present examples of the deformation procedure for particle pair-creation operators. 
In appendix~\ref{sec:memmemscaling} we list numerical results that demonstrate how the number of anti-commuting membrane-membrane operator pairs scales with their size. 
In appendix~\ref{zoo} we attempt to provide an exhaustive list of 3D translation invariant topological stabilizer models that have previously appeared in the literature, along with their known properties. 

\begin{table*}[t]
    \centering
    \renewcommand{\arraystretch}{1.5}{
    \begin{tabular}{c|c|c|c|c}
         &  TQFT~[\ref{tqftzoo}] & Foliated type-I~[\ref{foliatedType1zoo}] & Fractal type-I~[\ref{fractalType1zoo}] & Type-II~[\ref{Type2zoo}]\tabularnewline
         \hline
    Mobilities & 3 & 0,1,2 & 0,1,2 & 0\tabularnewline
    of particles & & & & \tabularnewline
\hline
    Scaling of & constant & sub-extensive & sub-extensive & fluctuating \tabularnewline
    number of qubits & & & + fluctuations & with sub-extensive envelope \tabularnewline
\hline    
Examples & 3D toric code  & Checkerboard Model & Sierpinski FSL model & Cubic codes 1-4,7,8,10\tabularnewline
    & (with bosonic or & X-Cube Model & Cubic codes 0,5,6,9,11-17 & Hsieh-Halász-II model\footnote{Our results are consistent with fractal type-I or type-II.} \tabularnewline
    & fermionic charge) & Chamon's Model & & 
    \end{tabular}}
\caption{Classes of topological order. The possible mobilities of particles and scaling of the number of qubits with system size are characteristic of each class. Links are provided to a zoo of examples in the appendix.}
\label{classes_results}
\end{table*}

\section{Topological Order in Stabilizer models}
\label{top_classes}
We focus on translation invariant stabilizer Hamiltonians with topological order. These Hamiltonians are specified by a choice of commuting local Pauli stabilizer generators $h^{(i)}$, which become the interaction terms in a Hamiltonian, 
\begin{align}
    H = \sum_{i,\vec{v}} (\openone -  h^{(i)}_{\vec{v}})  \, ,
\end{align}
where $\vec{v}$ are lattice vectors. 
In the above equation, $h^{(i)}_{\vec{v}}$ indicates a local generator $h^{(i)}$ after translation by a lattice vector $\vec{v}$. 
Each local generator $h^{(i)}$ is a tensor product of Pauli matrices acting on a spatially local set of qudits, tensor producted with the identity on all other qudits. If each local generator consists of exclusively $X$ or $Z$ operators, we refer to the Hamiltonian as CSS~\cite{PhysRevA.54.1098,Steane2551}. 

In the recent literature on fractons, topologically ordered stabilizer models in three dimensions have been coarsely classified into three categories~\cite{vijay2016fracton}: TQFT order, type-I topological order and type-II topological order. We further divide type-I topological order into \emph{foliated} type-I topological order and \emph{fractal} type-I topological order, the latter of which has been referred to as type-I.5 in some works. We present a summary of the properties characterizing the different classes below. 

\subsection{TQFT order}
TQFT order, sometimes referred to as conventional topological order, is defined by the presence of particle excitations in nontrivial superselection sectors that can be moved in all directions by deformable string operators. It is characterized by a constant topological ground space degeneracy (constant w.r.t. the system size after sufficient coarse graining) and deformable logical operator segments. In two dimensions, all topological stabilizer models are essentially built by stacking 2D toric codes, and hence are TQFTs. This is due to a structure theorem~\cite{Haah2018a,haah2016algebraic} which states that under a locality-preserving unitary, any 2D translation invariant topologically ordered Pauli stabilizer model can be mapped to copies of the toric code and some ancillary qubits in a trivial product state. 
We conjecture that a similar structure theorem holds for TQFT stabilizer models in three dimensions i.e. a translation invariant TQFT stabilizer model in 3D is equivalent to copies of 3D toric codes and some trivial ancillas. There is a slight subtlety compared to the 2D case as there are two in-equivalent versions of the 3D toric code, one has a bosonic point charge and the other has a fermionic point charge~\cite{Levin_wen_fermion}.  
Hence, properties of logical operators for TQFT stabilizer models can be extrapolated from properties of logical operators in the toric code. For 2D toric code, the logical operator pairs are anti-commuting logical string operators. For 3D toric code, the logical operator pairs are composed of deformable string operators with corresponding anti-commuting logical operators as deformable membrane operators. Therefore, we expect TQFT stabilizer models to be characterized by a constant ground space degeneracy and logical string operators that are fully deformable. 

For example, the stabilizer generators of the 3D toric code with bosonic point particle are given by 
\begin{align}
\begin{array}{c}
\drawgenerator{IXI}{XXX}{}{IIX}{}{}{XII}{}
\quad
\drawgenerator{}{}{}{}{IIZ}{IZZ}{}{IZI}
\quad
\\
\drawgenerator{}{}{IIZ}{}{}{ZIZ}{}{ZII}
\quad
\drawgenerator{}{}{IZI}{}{ZII}{ZZI}{}{}
\end{array}
\, ,
\end{align}	
where $X,Y,Z$ denote the Pauli matrices and  ${IXI=I \otimes X \otimes I}$. 
Here, we have performed a coarse graining such that qubits sitting on the adjacent edges in the $\hat{x},\hat{y},\hat{z}$ directions, respectively, in the original model, are merged onto a single vertex. Representative logical operators on a torus are generated by three pairs of anti-commuting membrane and string operators $\bar{X}_{\hat{j}},\bar{Z}_{\hat{j}}$, respectively, where 
\begin{align}
\bar{X}_{\hat{x}}& = \prod_{y,z} (XII)_{0,y,z}\, ,
&
\bar{Z}_{\hat{x}}& =  \prod_{x} (ZII)_{x,0,0}\, . 
\end{align}
The subscript $\hat{x}$ in $\bar{X}_{\hat{x}}$ indicates the direction perpendicular to the plane on which the logical operator acts. 
The subscript $x,y,z$ indicates the vertex coordinate on which each three qubit operator acts. For $\bar{Z}_{\hat{x}}$, the subscript $\hat{x}$ indicates the direction of the string operator. 
One can construct membrane and string operators along $\hat{y}$ and $\hat{z}$ analogously. 
In section~\ref{tools}, we present an invariant that selectively detects the presence and number of such anti-commuting operator pairs that are deformable in all three dimensions. This should serve to identify the number of copies of 3D toric code in a given stabilizer model. 

\subsection{Type-I topological order}
Type-I topological order is defined by the presence of particles in nontrivial superselection sectors that have restricted mobility. It is characterized by a sub-extensive ground space degeneracy and rigid string logical operators. We call the number of dimensions in which a particle can be mobile as its mobility dimension.  The mobility dimension of the most restricted particle is sometimes specified in the nomenclature i.e. a planon, lineon, or fracton topological order contains particles with a minimum mobility dimension of 2, 1, or 0 respectively. Type-I topological orders can be further divided into two broad categories, foliated and fractal as described below. 

\subsubsection{Foliated type-I topological order}
A foliated topological stabilizer model is defined by a foliation structure~\cite{shirley2017fracton} which implies that the model can be grown by stacking with a 2D topological state and applying a local unitary. 
Due to the 2D classification result, one only needs to consider 2D toric code states for foliated topological stabilizer models. 
The most studied example of this type is the $X$-cube model~\cite{vijay2016fracton} which can be grown by stacking with a 2D toric code as shown in Fig.~\ref{foliated_TO}. More precisely, under entanglement renormalization group flow~\cite{shirley2018FoliatedFracton} copies of the 2D toric code can be extracted from the $X$-Cube model. 
This leads to a formal definition of an equivalence class of foliated topological orders~\cite{shirley2018universal} which states that two Hamiltonians are in the same equivalence class if they are connected by stacking with layers of 2D gapped Hamiltonians and gap preserving adiabatic evolution. 
We remark that these equivalence classes are much coarser than the usual definition of gapped phase~\cite{Pai2019}, and were designed to include decoupled stacks of 2D topological orders in the trivial equivalence class, thereby modding them out from the nontrivial equivalence classes. Foliated stabilizer models can support particles of different mobilities such as fractons, lineons and planons. Due to their foliation structure in terms of 2D toric codes, they always support some planons. 
For example, the $X$-Cube model supports planons in all three lattice directions, leading to a foliation structure with stacks of toric codes along the three lattice directions. 
Due to the underlying foliation structure of this class of models, the ground space degeneracy scales with the linear size of the system.  

\begin{figure}
\centering
\includegraphics[scale=0.22]{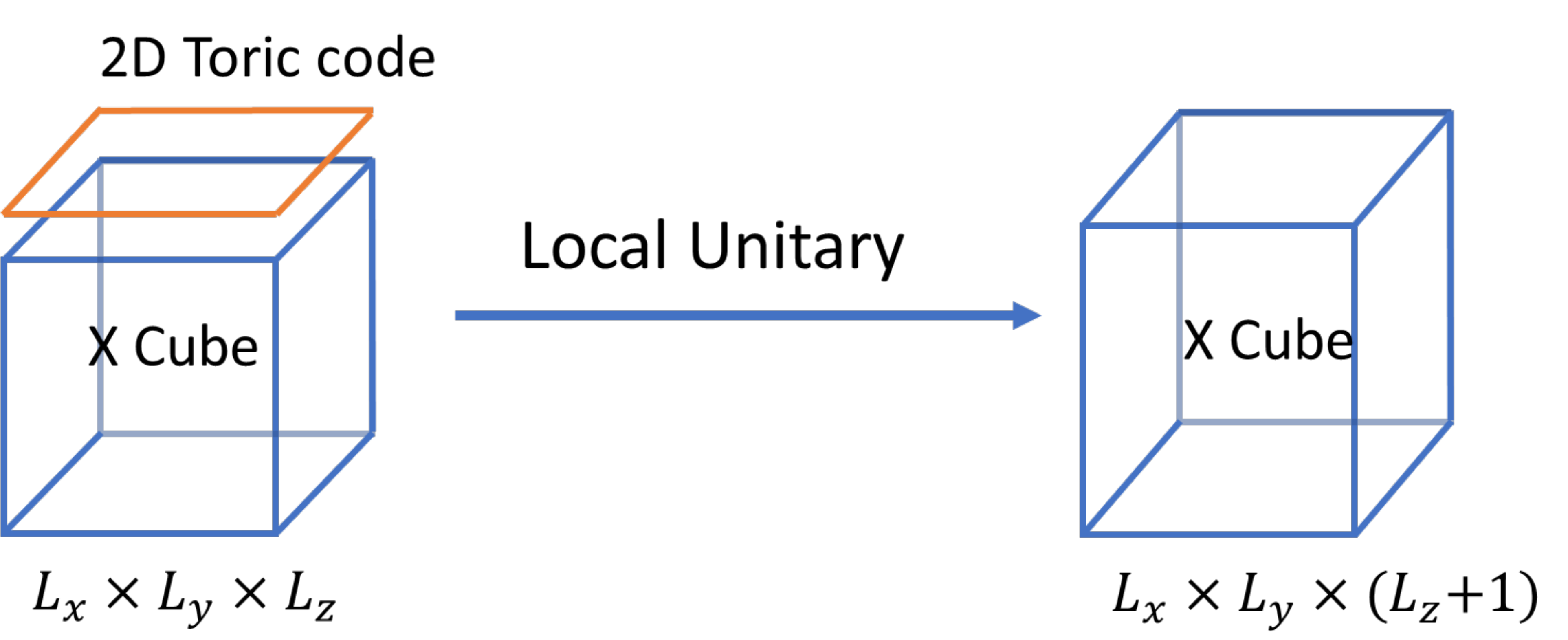}\\
\caption{The foliation structure for X-cube.}
\label{foliated_TO}
\end{figure}

Let us look at the $X$-cube model in more detail. The stabilizer generators are given by
\begin{align}
\begin{array}{c}
\drawgenerator{IXI}{}{IXX}{IIX}{XIX}{XXX}{XII}{XXI}
\quad
\drawgenerator{}{}{}{}{IIZ}{}{IZZ}{IZI}
\quad
\\
\drawgenerator{ZIZ}{}{IIZ}{}{}{}{}{ZII}
\quad
\drawgenerator{}{}{IZI}{ZZI}{ZII}{}{}{}
\end{array}
\end{align}
and translations. Here again, we have performed a coarse graining such that 3 edge qubits in the original $X$-cube model~\cite{vijay2016fracton} are merged onto a single vertex in the same manner as we did for the 3D toric code. 
We consider the model on an $L_x \times L_y \times L_z$ cuboid with periodic boundary conditions. We have logical operators $\bar{X}^{\hat{i}}_{\hat{k},\ell},\bar{Z}^{\hat{j}}_{\hat{k},\ell}$ on pairs of non-contractible loops, where $i\neq j \neq k$ run over $\{x,y,z\}$ and $\ell = 0,\dots,L_k-1$. They are defined as follows 
\begin{align}
\bar{X}^{\hat{x}}_{\hat{z},\ell} = \prod_x (XII)_{x,0,\ell} \, ,
&&
\bar{Z}^{\hat{y}}_{\hat{z},\ell} = \prod_y (ZII)_{0,y,\ell}\, ,
\end{align}
and in a similar fashion for other permutations of $x,y,z$. These string operators are not independent due to three relations 
\begin{align}
    \prod_{\ell} \bar{X}^{\hat{i}}_{\hat{k},\ell} = \prod_{\ell} \bar{X}^{\hat{k}}_{\hat{i},\ell} 
    \, ,
    &&
    \bar{Z}^{\hat{i}}_{\hat{j},0} = \bar{Z}^{\hat{i}}_{\hat{k},0}
    \, .
\end{align}
Thus, overall, there are $2(L_x+L_y+L_z)-3$ logical operator pairs. These string operators are rigid in nature as is characteristic of type-I models. The rigidity of the string operators directly corresponds to the restricted mobility of excitations. For example, particles that are pair-created by a completely rigid undeformable string operator are restricted to move in one-dimension and are therefore lineons. From now on, we will refer to the segments of string operators that create excitations in pairs as pair-creation operators. 


\subsubsection{Fractal type-I topological order}

\begin{figure}[t]
\centering
\sidesubfloat[]{\includegraphics[scale=0.22]{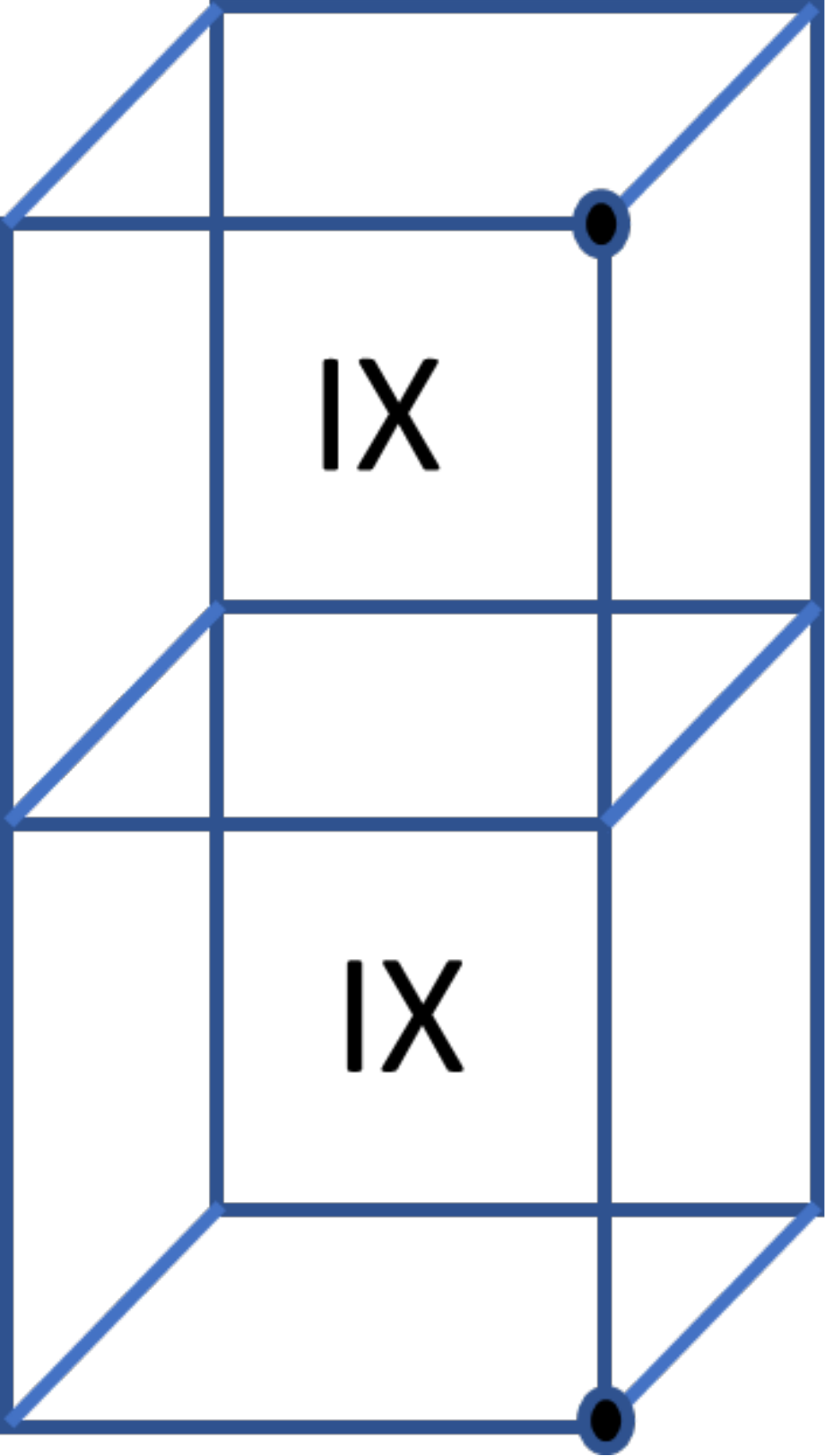}}\\
\sidesubfloat[]{\includegraphics[scale=0.22]{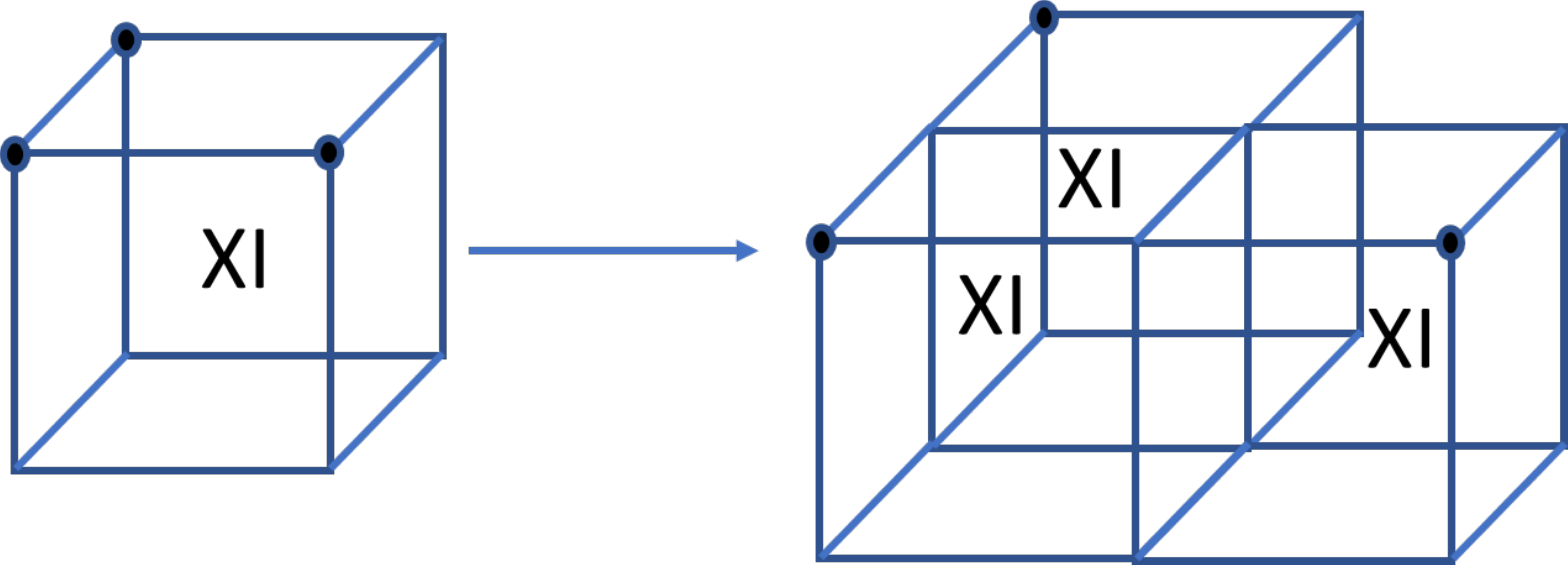}}\\

\caption{Logical operator segments in the SFSL model~\cite{doi:10.1080/14786435.2011.609152,yoshida2013exotic}. a) String operator along the $\hat{z}$ direction. A product of $I X$ operators on adjacent sites along $\hat{z}$ creates the excitation pattern shown in the dual lattice where the stabilizers live on vertices. b) A fractal operator moves three excitations apart in the $xy$-plane. Repeating this at longer length scales results in a Sierpinski triangle shaped operator. 
}
\label{Yoshida_operators}
\end{figure}

Fractal type-I topological orders are defined to be the type-I topological orders that do not admit a foliation structure. Fractal type-I topological order is characterized by the presence of operators supported on a fractal shape that move isolated topological excitations arbitrarily far apart. This subclass of orders have been referred to as type-1.5 (and even incorrectly as type-II) in some of the recent literature, although technically they fall within the original definition of type-I~\cite{vijay2016fracton}.  
Due to the presence of fractal operators, these models indeed can not support a foliation structure. 
In fact, these models need not support planons which again is inconsistent with the existence of a foliation structure. 
Also, due to the \emph{materialized} fractal symmetries of this class of models~\cite{qdouble,PhysRevB,PhysRevB.97.134426,Schmitz2018,Brown2019}, the ground space degeneracy is observed to fluctuate with the system size.  
The simplest example of a fractal type-I lineon topological order is the Sierpinski fractal spin liquid (SFSL) model due to Castelnovo, Chamon and Yoshida~\cite{doi:10.1080/14786435.2011.609152,yoshida2013exotic} specified by the following stabilizer generators
\begin{align}
\begin{array}{c}
\drawgenerator{IX}{XX}{}{XI}{}{}{}{IX}
\quad
\drawgenerator{}{}{}{ZI}{ZI}{ZZ}{}{IZ}
\end{array}
\, .
\end{align}
This model supports rigid string operators (corresponding to one-dimensional particles or lineons) in the $\hat{z}$ direction and a Sierpinski triangle fractal operator that moves topological excitations apart in 2D as shown in Fig.~\ref{Yoshida_operators}. 
Hence this model provides an example with no planons. 
We have also studied fractal type-I models that have fractal operators embedded in 3D that are not embedded in any 2D plane, similar to some type-II models, but that still support lineons and planons along with fractons and hence are not themselves type-II. 
Many of the cubic codes discovered by Haah~\cite{haah2011local,Haah2013} are of this type.

\subsection{Type-II topological order}

Type-II topological order is defined by the absence of any string operators for all topological excitations. Hence type-II orders are always fracton topological orders and furthermore no topologically nontrivial excitations, even composites, are mobile in any direction. They are characterized by discrete fractal logical operators and a sub-extensive ground space degeneracy that fluctuates with system size. 
The most well-studied type-II topological order is Haah's cubic code 1, see Ref.~\onlinecite{haah2011local}, whose stabilizer generators are given by
\begin{align}
\begin{array}{c}
\drawgenerator{XI}{II}{IX}{XI}{IX}{XX}{XI}{IX}
\quad
\drawgenerator{ZI}{ZZ}{IZ}{ZI}{IZ}{II}{ZI}{IZ}
\end{array}
\end{align}
and their translations. This model does not have any string operators and hence \emph{all} topologically non-trivial excitations are immobile. However, single excitations can still be created in isolation at the corners of a fractal operator given by a Sierpinski tetrahedron. Fig.~\ref{CC1_operators}~(a) shows the excitation pattern generated by $X I$ on the dual lattice. Fig.~\ref{CC1_operators}~(b) shows the first iteration of a fractal operator that moves these excitations apart in three dimensions. Such fractal operators are characteristic of type-II fracton orders and are tied to the absence of string operators. However, it should be noted that the presence of fractal logical operators does not imply the absence of string operators, as demonstrated by the fractal type-I models.

\begin{figure}[t]
\centering
\sidesubfloat[]{\includegraphics[scale=0.21]{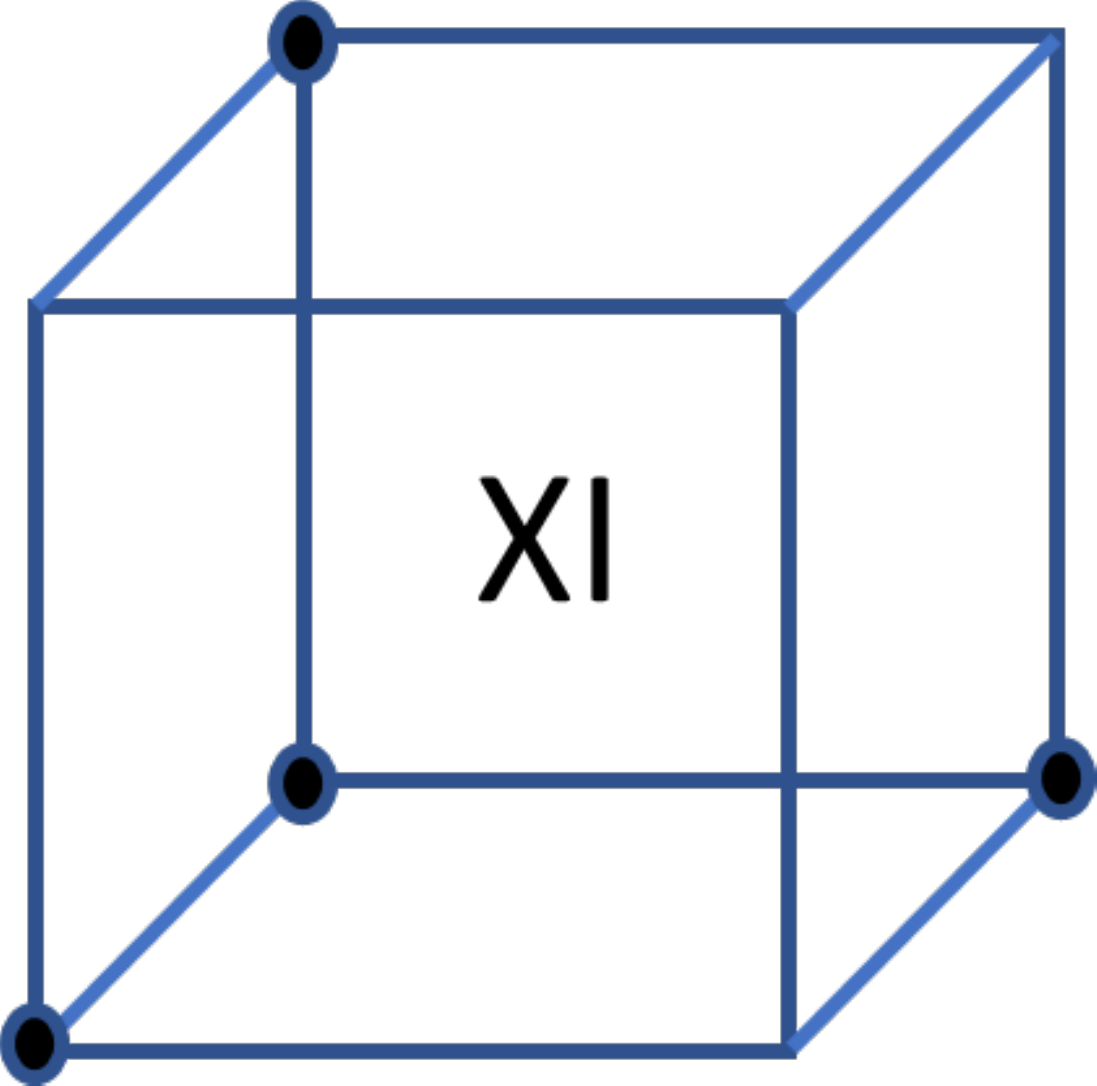}} \hspace{2mm}
\sidesubfloat[]{\includegraphics[scale=0.21]{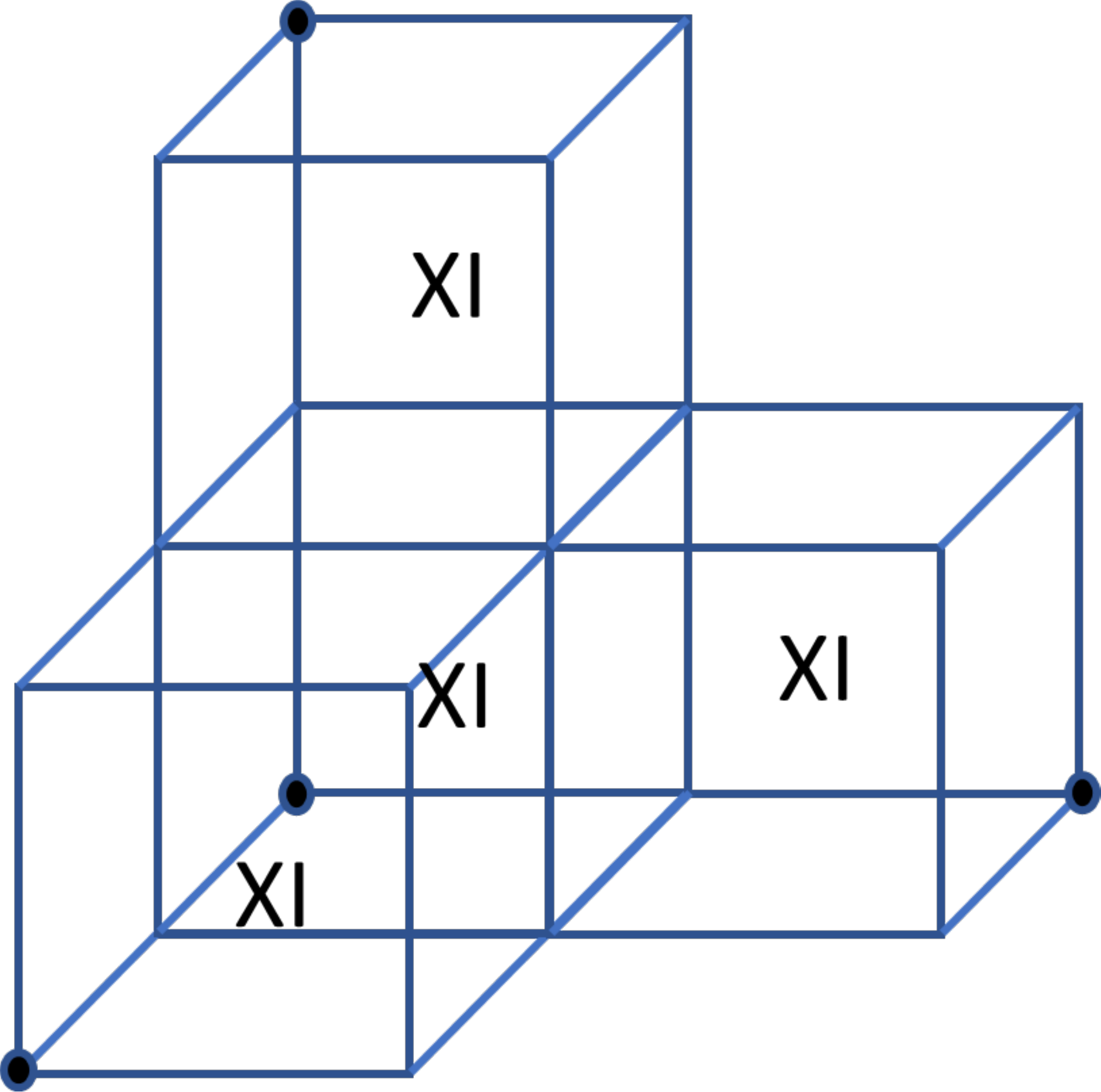}}\\

\caption{Fractal excitation patterns in the cubic code 1 model~\cite{haah2011local}. a) The excitation pattern generated by the $XI$ operator on a single site. b) A small fractal operator that moves excitations apart.
}
\label{CC1_operators}
\end{figure}

\begin{table*}[t]
    \centering
    \renewcommand{\arraystretch}{1.5}{
    \begin{tabular}{c|c|c|c|c}
         Tools & TQFT~[\ref{tqftzoo}] & Foliated type-I~[\ref{foliatedType1zoo}] & Fractal type-I~[\ref{fractalType1zoo}] & Type-II~[\ref{Type2zoo}]\\
         \hline
         Deformable pair-creation operators & all & generically not all\footnote{all pair-creation operators are deformable if all the rigid string operators are along lattice directions\label{xf}} & generically not all\textsuperscript{\ref{xf}} & all\\
         \hline
         String-membrane configurations with & all & not all (possibly none)\footnote{this is not necessarily needed to identify the type as foliated type-I or fractal type-I.\label{xx}} & not all (possible none)\textsuperscript{\ref{xx}} & none \\
          nonzero  commutation rank & & & & \\
         \hline
         Scaling of membrane- & 0 & generically linear & linear and/or & constant and/or\\
         membrane operator pairs & & & fluctuating corrections & fluctuating corrections \\
    \end{tabular}}
    \caption{Tools to identify the type of a topological order. 
    The second row indicates whether or not all pair-creation operators, as shown in Fig.~\ref{excitation_config}, are deformable into flat-rods. 
    The third row displays the number of different string-membrane (or flat-rod) configurations, as shown in Fig.~\ref{flat_rods1}, that lead to a commutation matrix of non-zero rank. 
    The fourth row contains the scaling of the membrane-membrane commutation matrix rank with size.}
    \label{procedure}
\end{table*}

\section{Tools to identify classes of topological order}
\label{tools}
This section describes diagnostic tools that are sufficient to identify the class of topological order that a given 3D topological stabilizer model falls into. 
Our primary motivation is to generalize the string-string commutation matrix invariant~\cite{haah,ribbons,Dua2019} from two to three dimensions. 
A straightforward  string-membrane commutation matrix generalization of the invariant only works for TQFT stabilizer models. 
We want to accommodate the possibility that some string operators may be fully rigid or deformable only in two dimensions. 
Hence, we consider anti-commuting logical operator pairs that can be supported on regions of certain shapes cut out from the 3D stabilizer model in order to gain some information about their rigidity. 
We refer to the shapes of regions that support anti-commuting logical operator pairs as configurations. 
We focus on two types of configurations in particular, string-membrane configurations and membrane-membrane configurations. 

Our design of the string-membrane configurations is informed by complimentary information about deformability of string operators that can be derived from a stabilizer Hamiltonian. The deformation process indicates whether a string operator can be deformed into an equivalent operator supported on the union of flat-rods along the lattice directions. For this reason we refer to the string-membrane configurations as flat-rod configurations. The deformability and generalized commutation matrix data suffice to sort a given topological stabilizer model into one of the four classes of topological order introduced in the previous section. For type-I models we further employ the intersection of generalized Gauss's laws~\cite{Brown2019} to find the minimal mobility dimension, $d$. This determines whether a type-I model has a planon, lineon, or fracton topological order.


\subsection{String-membrane configurations}
We consider several configurations involving pairs of anti-commuting operators, one supported on a string segment that creates excitations at the endpoints, and the other on a membrane patch that creates excitations along the loop-like boundary. The distinct string-membrane configurations, shown in Fig.~\ref{flat_rods1}, are inspired by the deformation of string operators to contiguous flat-rods along lattice directions in the proof of the no-strings condition for Haah's cubic code~\cite{haah2011local} and Kim's qupit code~\cite{kim20123d}. 
Hence only a small number of flat-rod configurations, aligned with the cubic lattice directions, are considered. 
There are 48 configurations with flat-rods along all three lattice directions, an example is shown in Fig.~\ref{flat_rods1}~(a). Similarly there are 12 configurations, 4 in each lattice plane, with flat-rods along two lattice directions, for example see Fig.~\ref{flat_rods1}~(b), and 3 configurations with a flat-rod along a single lattice direction, see Fig.~\ref{flat_rods1}~(c). We don't need to check all of these configurations though since we assume that if there exists a string operator that is deformable in 3D for example, then it can be supported on any of the 48 3D flat-rod configurations. Similarly, for each lattice plane, we need to consider only one flat-rod configuration. 


Due to the existence of particles with limited mobility, the geometry of the flat-rod configuration has a profound impact on the properties of the associated string-membrane commutation matrix. The string operators detected by each flat-rod configuration depend upon the deformability of the strings. 
The three dimensional flat-rod configurations detect string operators for particles with three dimensional mobility, the two dimensional flat-rod configurations are additionally sensitive to string operators for planon particles in the same plane, and the one dimensional flat-rod configurations are further sensitive to lineon particles with mobility along the flat-rod. 

Owing to the rigid nature of lineon and planon string operators, the commutation matrix ranks associated with the one and two dimensional flat-rod configurations depend sensitively upon the width of the flat-rod. This demonstrates that even the naive generalization of the TQFT commutation matrix to capture anti-commuting deformable string and membrane operator pairs must be taken with some care. Employing a three dimensional flat-rod configuration ensures that one does in fact obtain a local unitary invariant. We speculate that this invariant counts the number of copies of 3D toric code, with either bosonic or fermionic point particle, that can be disentangled from a stabilizer Hamiltonian.

We tabulate the deformability and commutation matrix data we have obtained for cubic codes 0 to 17 from Ref.~\cite{haah2011local} and several other fracton models in table~\ref{table_invariants}. 
We have used the Hamiltonians for cubic codes 1 to 10 given in Ref.~\cite{Haah2013}; these are equivalent to the Hamiltonians given in the earlier Ref.~\cite{haah2011local} under symplectic transformations~\cite{private_haah}. In the table, full deformability of a model is indicated by a $\checkmark$, while an obstruction to deformability is indicated by $\times$. In one case a ? indicates that we were not able to resolve whether the model was deformable or not. The numerical value of the commutation rank is displayed for the 3D flat-rod configurations, while for the 2D and 1D flat-rod configurations we only record whether the result is zero or nonzero as the numerical value depends upon the width of the flat-rods.

\begin{figure}
\centering
\sidesubfloat[]{\includegraphics[scale=0.3]{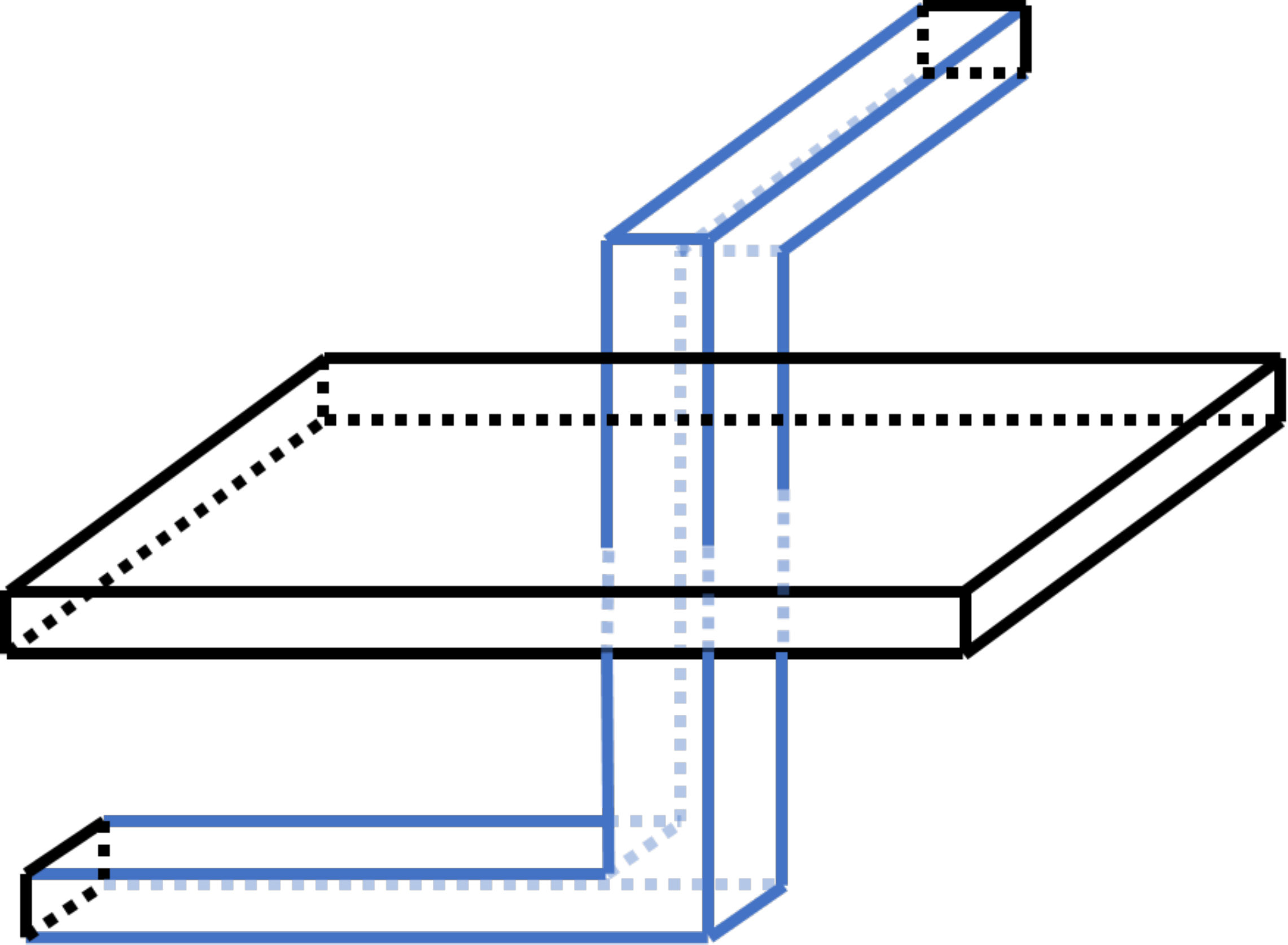}}\\
\sidesubfloat[]{\includegraphics[scale=0.3]{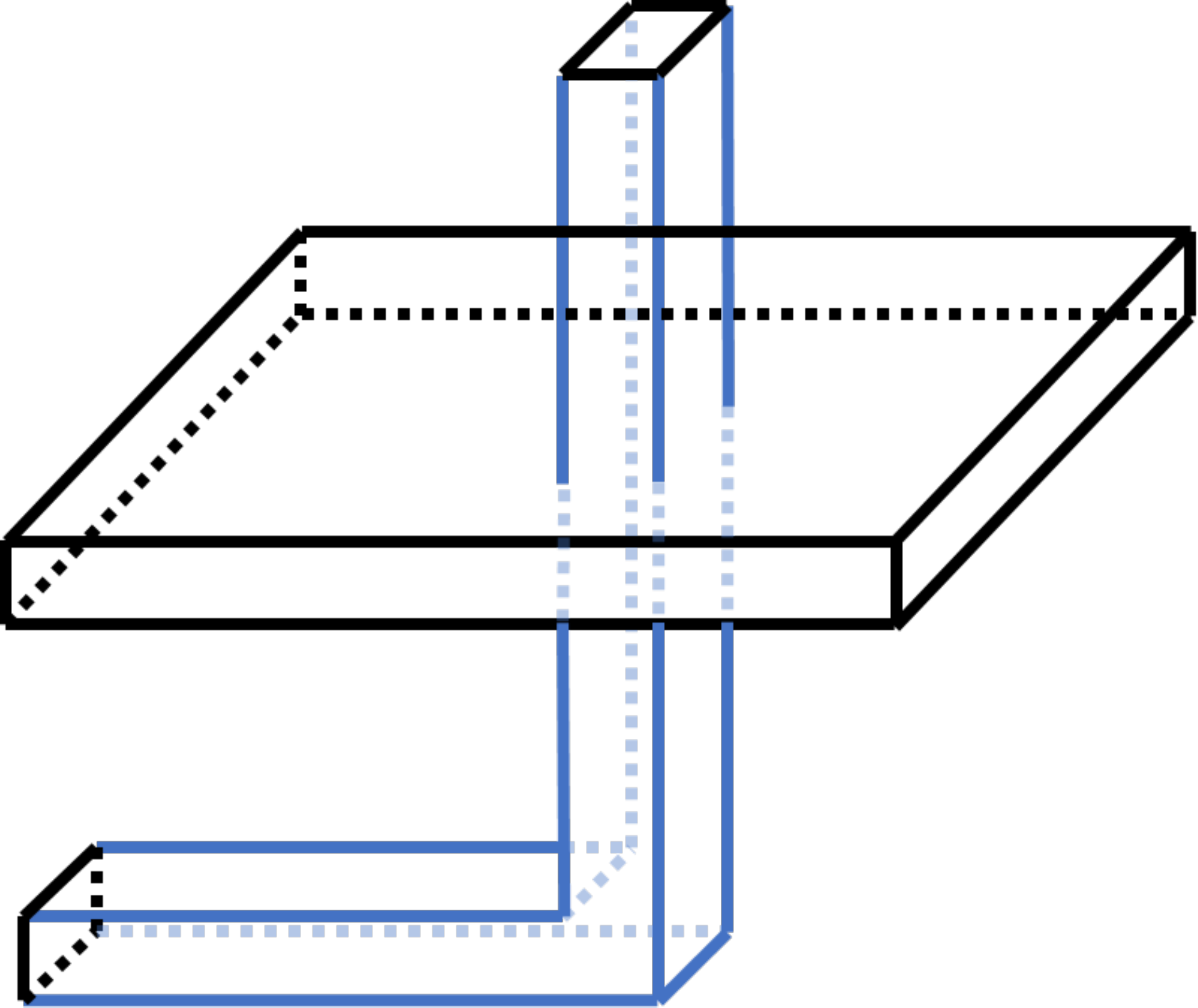}}\\
\sidesubfloat[]{\includegraphics[scale=0.3]{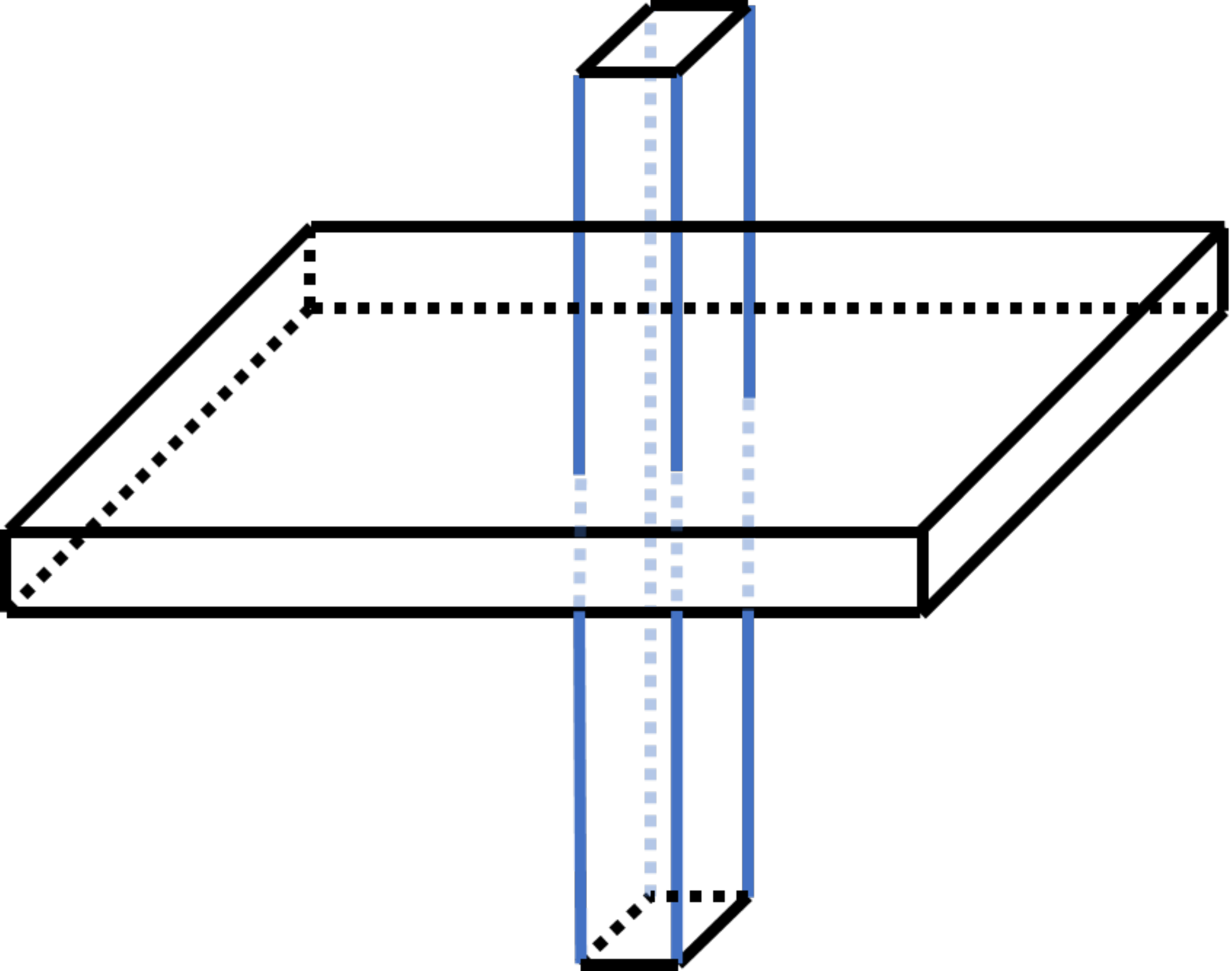}}\\
\caption{Flat-rod configurations. Open boundary conditions where excitations may be created are marked with black edges. Closed boundary conditions where no excitations are created are marked with blue edges}
\label{flat_rods1}
\end{figure}

\subsubsection{Flat-rod commutation matrix}

As discussed, the flat-rod configurations we have considered are of the string-membrane kind where the string operator is supported on a contiguous configuration of flat-rods while the membrane operator is supported on a planar patch. The commutation matrix $C$ associated with such a configuration is defined via
\begin{align}
    C_{i,j} = 
    \begin{cases} 
   0 & \text{if }  [\strop^{r}_i,\strop^{m}_j]=0 \, , \\
   1       & \text{if }  \{\strop^{r}_i,\strop^{m}_j\} = 0 \, .
  \end{cases}
\end{align}
where $\strop^{r}$ are a basis of Pauli operators supported on the flat-rods that commute with the Hamiltonian except at the endpoints and $\strop^{m}$ are a similar basis of Pauli operators supported on the membrane. The $\Z_2$-rank of $C$ is the number of independent pairs of anti-commuting operators supported on these regions. For these operators to anti-commute, the excitations they create must be topologically nontrivial. Similarly, a commutation matrix can be defined for any configuration supporting anti-commuting operator pairs that create well-separated excitations, in particular for the membrane-membrane configuration discussed below. 

\subsubsection{Deformability of pair-creation operators to flat-rod configurations}
To interpret the results of the flat-rod configurations we supplement them with a test of deformability for pair-creation operators in each model. 
Specifically, we consider operators that create a pair of well-separated excitations, each of which may have some spatial extent that is small compared to the separation. We do not assume the excitations are topologically nontrivial, although they may be. 
The pairs of excitations fall into several distinct cases depending upon the vector between their positions. For excitations that share a lattice plane there are two distinct cases per plane, while for excitations that do not, there are four distinct cases, see Fig.~\ref{excitation_config}. 

In the deformation process we first examine the constraints placed on the form of an arbitrary pair-creation operator by the fact that it must commute with the Hamiltonian terms away from the excitations. Next we proceed to check whether the support of such an  operator can be deformed to a flat-rod configuration through multiplication with local stabilizer generators. 
See appendix~\ref{sec:SCdeformability} for a detailed discussion of deformability for pair-creation operators. 

The flat-rod commutation matrices count the number of distinct nontrivial pair-creation operators that can be supported on each region. Hence if a model has full deformability i.e. all pair-creation operators are deformable to flat-rods, keeping the positions of excitations fixed, then the flat-rod commutation matrices count the number of distinct topological charges that can be pair created at their endpoints by operators without any shape restriction. 
In particular, a model with full deformability and with all flat-rod commutation matrix ranks equal to zero supports no string operators and hence is type-II. Similarly, for a model with full deformability, if the 3D, 2D and 1D flat-rod commutation matrix ranks are equal and non-zero, the model must be either TQFT or a mix of TQFT and type-II. The converse is also true, hence full deformability and equality of commutation matrix ranks associated with 3D, 2D and 1D flat-rod configurations is an 'if and only if' condition for a model to be TQFT or a mix of TQFT and type-II. In practice the deformability and the values of flat-rod configurations can only be checked up to a finite length scale. In some cases this leads to an inconclusive result, consistent with both fractal type-I and type-II, where a model is not fully deformable but no string operators are found. 

The 3D and 2D flat-rod configurations are designed to count the number of 3D particles and planons supported on a given region, respectively. 
Hence we expect the value of all 3D flat-rod configurations to be equal, and similarly we expect 2D flat rod configurations of the same width and in the same plane to be equal. Consequently, if some pair-creation operator is deformable to one flat-rod configuration but not an equivalent one we expect the relevant commutation matrix rank to be zero. In the extreme case that a pair-creation operator is not deformable to any flat-rod configuration, it implies the existence of a nontrivial rigid string operator. In particular, if certain pair-creation operators can be cleaned onto rigid lines that do not run along an axis we can check those directions separately for the presence of nontrivial rigid string operators.

\begin{figure}
\vspace{6mm}
\centering
\includegraphics[scale=0.28]{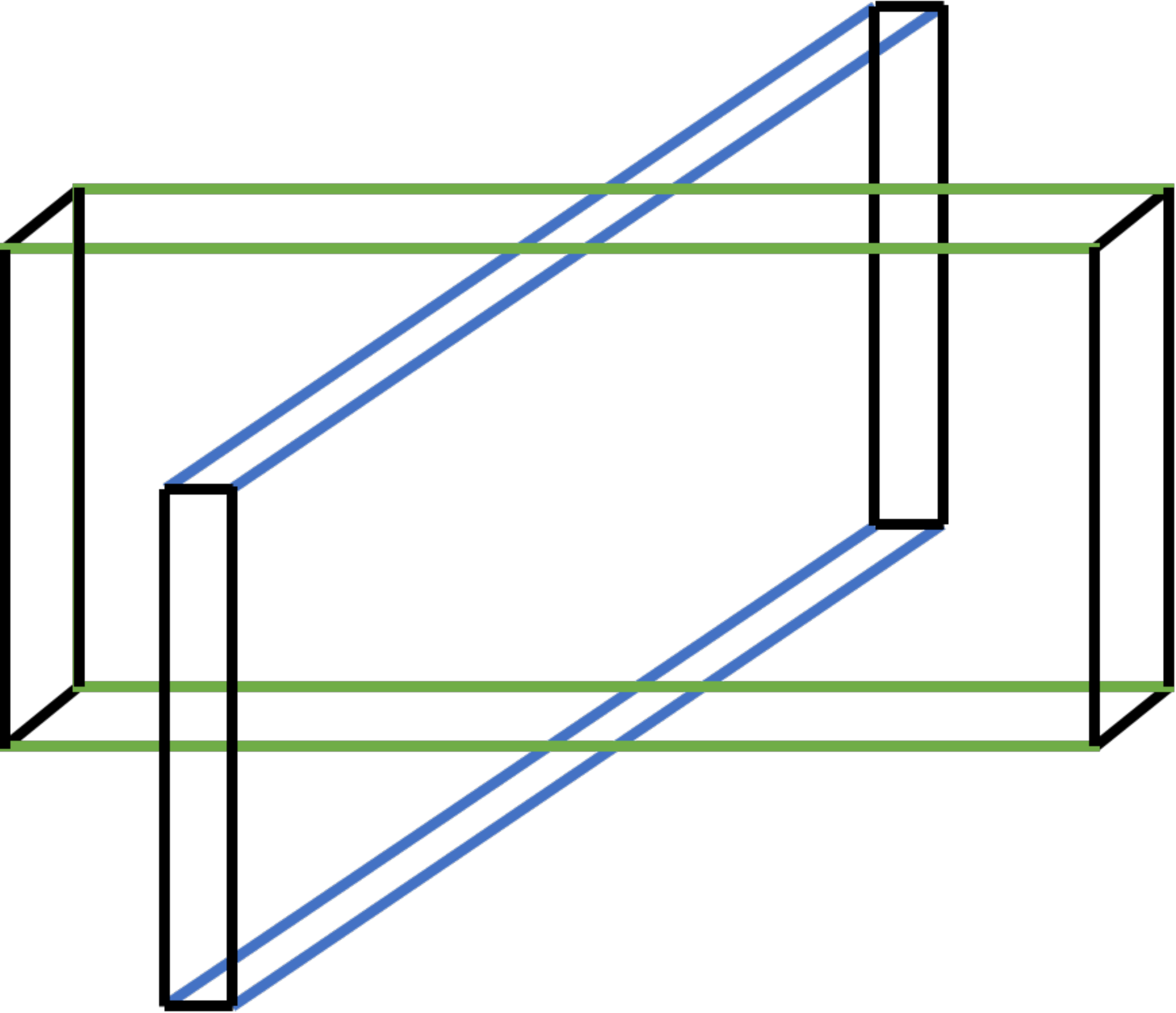}
\caption{The membrane-membrane configuration. Open boundary conditions where excitations may be created are indicated by black edges. Closed boundary conditions where no excitations are created are indicated in green and blue for the two membranes respectively}
\label{planon_planon}
\end{figure}

\setlength{\tabcolsep}{9pt}
\begin{table*}[!p]
\renewcommand{\arraystretch}{1.4}{
\begin{tabular}{c|ccccccc}
 
Model & Deformability & $n_{\text{rods}}^{3D}$ & $n_{\text{rods}}^{2D}\left(zx,xy,yz\right)$ &
$n_{\text{rods}}^{1D}\left(x,y,z\right)$ & $n_{m}\left(zx,xy,yz\right)$ & $d$ & Type\tabularnewline
\hline

CC0 & $\times$ & 0 & $\left(0,0,0\right)$ & $\left(0,0,0\right)$ & $\left(c,c,c\right)$ & 0 & fractal type-I \tabularnewline

CC1 & $\checkmark$ & 0 & $\left(0,0,0\right)$ & $\left(0,0,0\right)$ & $\left(c,c,c\right)$ & 0 & type-II\tabularnewline
 
CC2 & $\checkmark$ & 0 & $\left(0,0,0\right)$ & $\left(0,0,0\right)$ & $\left(c,c,c\right)$ & 0 & type-II\tabularnewline
 
CC3 & $\checkmark$ & 0 & $\left(0,0,0\right)$ & $\left(0,0,0\right)$ & $\left(c,c,c\right)$ & 0 & type-II\tabularnewline
 
CC4 & $\checkmark$ & 0 & $\left(0,0,0\right)$ & $\left(0,0,0\right)$ & $\left(c,c,c\right)$ & 0 & type-II\tabularnewline
 
CC5 & $\times$ & 0 & $\left(0,0,0\right)$ & $\left(0,0,0\right)$ & $\left(c,0,c\right)$ & 1 & fractal type-I\tabularnewline
 
CC6 & $\times$ & 0 & $\left(0,0,0\right)$ & $\left(0,0,0\right)$ & $\left(c,0,c\right)$ & 1 & fractal type-I\tabularnewline
 
CC7 & $\checkmark$ & 0 & $\left(0,0,0\right)$ & $\left(0,0,0\right)$ & $\left(c,c,c\right)$ & 0 &   type-II\tabularnewline
 
CC8 & $\checkmark$ & 0 & $\left(0,0,0\right)$ & $\left(0,0,0\right)$ & $\left(c,c,c\right)$ & 0 & type-II\tabularnewline
 
CC9 & $\times$ & 0 & $\left(0,0,0\right)$ & $\left(0,0,0\right)$ & $\left(c,c,c\right)$ & 1 & fractal type-I\tabularnewline
 
CC10 & $\checkmark$ & 0 & $\left(0,0,0\right)$ & $\left(0,0,0\right)$ & $\left(c,c,c\right)$ & 0 & type-II\tabularnewline
 
CC11 & $\checkmark$ & 0 & $\left(0,0,n_1\right)$ & $\left(n_2,n_1,n_1\right)$ & $\left(c,c,\ell\right)$ & 0 & fractal type-I\tabularnewline
 
CC12 & $\checkmark$ & 0 & $\left(n_1,0,0\right)$ & $\left(n_1,0,n_1\right)$ & $\left(\ell,c,c\right)$ & 0 & fractal type-I\tabularnewline
 
CC13 & $\checkmark$ & 0 & $\left(n_1,0,0\right)$ & $\left(n_1,0,n_1\right)$ & $\left(\ell,c,0\right)$ & 0 & fractal type-I\tabularnewline

CC14 & $\checkmark$ & 0 & $\left(n_1,0,0\right)$ & $\left(n_1,0,n_1\right)$ & $\left(\ell,c,c\right)$ & 0 & fractal type-I\tabularnewline
 
CC15 & $\checkmark$ & 0 & $\left(0,0,n_1\right)$ & $\left(0,n_1,n_1\right)$ & $\left(c,c,\ell\right)$ & 0 & fractal type-I\tabularnewline
 
CC16 & $\times$ & 0 & $\left(0,0,0\right)$ & $\left(0,0,0\right)$ & $\left(c,c,c\right)$ & 0 & fractal type-I\tabularnewline
 
CC17 & $\checkmark$ & 0 & $\left(0,n_1,0\right)$ & $\left(n_1,n_1,n_2\right)$ & $\left(c,\ell,c\right)$ & 0 & fractal type-I\tabularnewline
 
3DTC & $\checkmark$ & 1 & $\left(1,1,1\right)$ & $\left(1,1,1\right)$ & $\left(0,0,0\right)$ & 3 & TQFT\tabularnewline

2DTC$_{xy}$ & $\checkmark$ & 0 & $\left(0,n_1,0\right)$ & $\left(n_1,n_1,0\right)$ & $\left(0,\ell,0\right)$ & 2 & foliated type-I \tabularnewline
 
2DTC$_{yz}$ & $\checkmark$ & 0 & $\left(0,0,n_1\right)$ & $\left(0,n_1,n_1\right)$ & $\left(\ell,0,0\right)$ & 2 & foliated type-I \tabularnewline
 
2DTC$_{xz}$ & $\checkmark$ & 0 & $\left(n_1,0,0\right)$ & $\left(n_1,0,n_1\right)$ & $\left(0,0,\ell\right)$ & 2 & foliated type-I \tabularnewline
 
XC & $\checkmark$ & 0 & $\left(n_1,n_1,n_1\right)$ & $\left(n_2,n_2,n_2\right)$ & $\left(\ell,\ell,\ell\right)$ & 0 & foliated type-I\tabularnewline
 
CB & $\checkmark$ & 0 & $\left(n_1,n_1,n_1\right)$ & $\left(n_2,n_2,n_2\right)$ & $\left(\ell,\ell,\ell\right)$ & 0 & foliated type-I\tabularnewline

Chm & $\checkmark$ & $0$ & $\left(n_1,n_1,n_1\right)$ & $\left(n_2,n_2,n_2\right)$ & $\left(\ell,\ell,\ell\right)$ & 0 & foliated type-I \tabularnewline

SFSL & $\checkmark$ & 0 & $\left(0,0,0\right)$ & $\left(0,0,n_1\right)$ & $\left(0,0,c\right)$ & 1 & fractal type-I  \tabularnewline

HH-I & $\checkmark$ & 0 & $\left(n_1,n_1,n_1\right)$ & $\left(n_2,n_2,n_2\right)$ & $\left(\ell,\ell,\ell\right)$ & 0 & foliated type-I\tabularnewline

HH-II & ? & 0 & $\left(0,0,0\right)$ & $\left(0,0,0\right)$ & $\left(c,c,c\right)$ & 0 & inconclusive\footnote{Our results are consistent with fractal type-I or type-II.}\tabularnewline 
\end{tabular}}
\caption{Sorting data for a range of 3D topological stabilizer models.  We consider the cubic codes 0-17~\cite{Haah2013,haah2011local} labeled CC0-17, the 3D toric code, labeled 3DTC, stacks of 2D toric code parallel to the $ij$ plane, labeled 2DTC$_{ij}$, the X-cube model and checkerboard model~\cite{vijay2016fracton}, labeled XC and CB respectively, Chamon's model~\cite{chamon2005quantum,bravyi2011topological}, labeled Chm, the Sierpinski fractal spin liquid~\cite{yoshida2013exotic} labeled SFSL, and finally the so-called type-I and II spin models in \R{hsieh_halasz_partons}, labeled HH-I and HH-II  respectively. 
The first column indicates whether pair creation operators in a model are fully deformable $\checkmark$, not $\times$, or inconclusive $?$. 
The second column shows the commutation matrix rank of the 3D flat-rod configuration $n_{\text{rods}}^{3D}$. 
The third column shows the commutation matrix rank of the 2D flat-rod configurations, $n_{\text{rods}}^{2D}$, in the $zx$, $xy$ and $yz$ lattice planes, where the membrane is perpendicular to the $x$, $y$ and $z$ direction, respectively. 
The fourth column shows the commutation matrix rank of the 1D flat-rod configurations, $n_{\text{rods}}^{1D}$, with a string along the lattice direction $x$, $y$ and $z$. 
The placeholders $n_1$ and $n_2$ represent some non-zero numbers which depend on the width of the flat rods and may be different for each model. 
The fifth column indicates the scaling of the membrane-membrane commutation matrix rank, $n_{m}$, with the size of the membranes, where $c$ stands for constant and $\ell$ stands for linear scaling, both up to fluctuating corrections. The notation $zx,xy,yz$ indicate the directions of the membranes, where $zx$ refers to a membrane in the $xy$ plane intersecting a membrane in the $yz$ plane and similarly for the others. The sixth column shows the mobility dimension, $d$, of an excited $X$ stabilizer generator. 
The final column displays the class of topological order resulting from the sorting procedure. 
All data except for CC0 (which has width five string operators) and HH-II was taken with excitation configurations within a box of dimensions $L_{x}=L_{y}=L_{z}=L=20$, or with a size $L$ for which the values become stable with respect to increasing the system size further. The flat-rods were taken to be as wide as three stabilizer generators and the membranes as wide as two. The type of topological order for HH-II is listed as inconclusive since the model is not fully deformable using constraints up to third order, but we also did not find a nontrivial string operator. It is possible that the model is fully deformable using higher order constraints and hence is type-II, or that it contains a nontrivial string operator that we did not find and hence is fractal type-I.} 
\label{table_invariants}
\end{table*}

\setlength{\tabcolsep}{6pt}

\subsection{Membrane-membrane configurations}

We compliment the string-membrane configurations by considering membrane-membrane configurations that support rectangular shaped operators which create excitations along only two edges, see Fig.~\ref{planon_planon}, such that the commutation matrix is well defined. 
These commutation quantities are inspired by the possibility of anti-commuting operators in fracton models that cannot appear in TQFTs. In fact the rank of the membrane-membrane commutation matrix is zero for any TQFT, and hence this quantity detects whether a model has fracton topological order. Furthermore, the scaling of the rank with the size of the membranes can be used to distinguish whether a type-I model is an instance of foliated or fractal type-I order.  

The shape of the membrane-membrane configuration determines the type of operators it is sensitive to. 
We only consider configurations where the membranes are aligned with lattice planes. Without loss of generality suppose one membrane has dimensions $L_x \times L_z$ in the $xz$ plane, and the other has dimensions $L_y \times L_z$ in the $yz$ plane. We suppose both are of width $w$, small compared to their lengths $L_x,L_y$. 
For $L_z$ small compared to $L_x,L_y$, the configuration counts anti-commuting string operator pairs along the $\hat{x}$ and $\hat{y}$ axes. This will only detect the presence of lineon and/or planon pairs along these axes. 

We find that configurations where $L_x=L_y=\alpha L_z$, for a constant aspect ratio $\alpha$ of order 1 as $L_z$ is scaled, provide more useful information that allows us to distinguish foliated and fractal type-I orders. 
For foliated type-I orders we expect the scaling to be linear due to the presence of planons. This is because the plane of mobility for an arbitrary planon will generically intersect the plane of a membrane operator along a line, and hence the string operator for that planon can be supported on some translation of the membrane, provided $\alpha$ is sufficiently large. For fractal type-I orders, and type-II orders, the presence of fractal operators that create topological charges at their corners lead to contributions to the membrane-membrane commutation rank that fluctuate with the precise size of the membranes. Depending on the aspect ratio $\alpha$ the contribution of the fractal operators will either be of order constant, or may scale within a linear envelope. 
It is also possible for fractal type-I orders to support planon excitations, in which case we expect linear scaling plus fluctuating corrections. 

We present results obtained from the membrane-membrane configurations for a large range of topological stabilizer models in table~\ref{table_invariants}. 
The scaling of the membrane-membrane commutation rank with $L_z$ for ${\alpha=1}$ is reported, with $c$ indicating a nonzero result of order constant that may be fluctuating, and $l$ indicating a linear scaling with a possible fluctuating constant correction.

\begin{figure}
\vspace{6mm}
\centering
\includegraphics[scale=0.3]{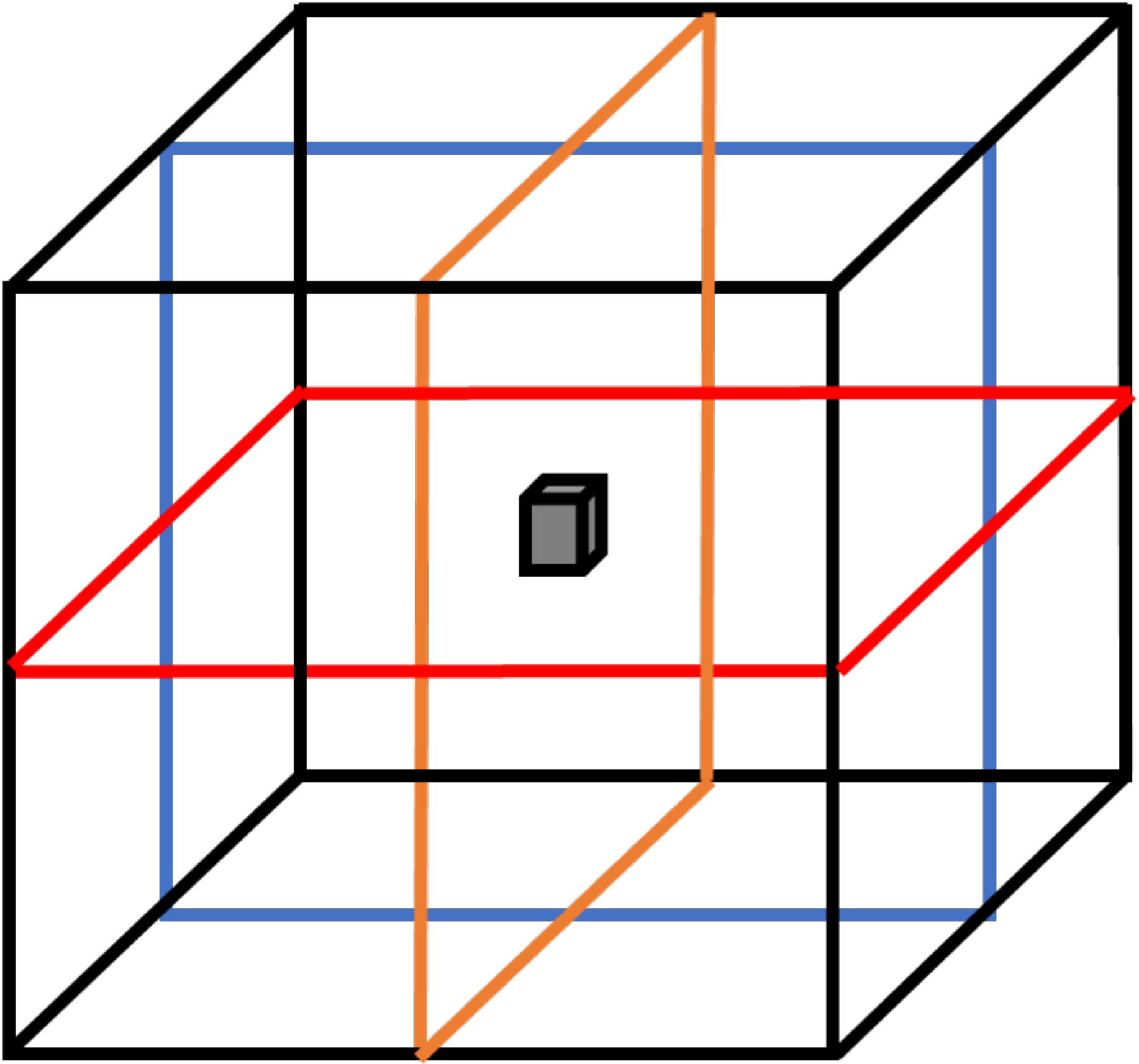}
\caption{Generalized Gauss's laws for the fracton sector of the $X$-cube model within a finite region. The set of Gauss's laws containing a given stabilizer share only one common intersection point. }
\label{relationsXC}
\end{figure}

\subsection{Intersection of generalized Gauss's laws}

For type-I models we utilize a further tool to determine the minimal mobility of their topological excitations and hence whether a given model is a fracton, lineon or planon type-I order. We achieve this by investigating the structure of conserved charges due to generalized Gauss's laws in a given stabilizer model. 

On a system with periodic boundary conditions each \emph{relation}, i.e. a set of nontrivial stabilizer generators that multiply to the identity, gives rise to a materialized symmetry of parity conservation for a subset of topological charges~\cite{qdouble,PhysRevB.97.134426,Schmitz2018,Brown2019}. For simplicity we assume there are no local relations, as these are not relevant for the fracton models we have considered.  This leaves open the possibility of global relations, generalizing the familiar global $\Z_2 \times \Z_2$ charge conservation in 2D toric code.
These global relations are sensitive to boundary conditions, dramatically so for fracton models. To avoid such complications we turn our attention to an arbitrary large cube region $\mathcal{R}$ within the bulk. Each relation that overlaps $\mathcal{R}$ leads to a generalized Gauss's law obtained by restricting the product of generators in that relation to include only those that act nontrivially on $\mathcal{R}$. This produces a generalized Wilson operator supported just outside $\mathcal{R}$ that measures the $\Z_2$ charge for that relation within $\mathcal{R}$. 

The intersection of the set of Gauss's laws that contain a stabilizer generator determine the mobility of the particle obtained by exciting it. 
Here the intersection of a set of Gauss's laws refers to the set of stabilizer generators that appear in all the aforementioned Gauss's laws. 
We define the intersection of a stabilizer generator to be the intersection of the set of Gauss's laws that contain it. Exciting a pair of generators that are contained in each other's intersections has neutral charge under all Gauss's laws, and hence can be implemented by a local pair-creation operator~\cite{Haah2013}. 
If a generator's intersection contains only itself, an  excitation of that generator must be a fracton. 
More generally, a generator $g_1$ that supports a fracton could have a nonempty intersection consisting of distinct generators $g_2$ such that $g_1$ is not in the intersection of $g_2$.  
Alternatively, for an excitation of a single generator $g_1$ to be a lineon, the intersection of $g_1$ must contain a distinct generator $g_2$, and the intersection of $g_2$ must contain $g_1$. Similarly for an excitation of a single generator $g_1$ to be a planon, the intersection of $g_1$ must contain two distinct generators $g_2,g_3$ in linearly independent directions that both contain $g_1$ in their intersections. 
A similar condition holds again for a fully mobile excitation, this time involving three distinct generators in linearly independent directions. In the above discussion we have focused on stabilizer Hamiltonians that are given by translates of a single type of $X$ and $Z$ generator for simplicity, a generalization to the case of multiple generator types is straightforward. 

This concept is illustrated for the $X$-cube model in Fig.~\ref{relationsXC} where the Gauss's laws involving a particular $X$ stabilizer lie on the three lattice planes shown and their intersection contains only the single stabilizer. Hence, a single $X$ stabilizer excitation in the $X$-cube model is a fracton~\cite{vijay2017generalization}. 
In our sorting procedure we use the intersection of Gauss's laws to determine the minimum mobility of excitations in each stabilizer model. This is particularly relevant when considering a type-I model, as it may be a fracton, lineon, or planon model, whereas TQFTs always have fully mobile particles and type-II models only contain fractons, by definition. In any case the mobility constraints derived from the Gauss's laws serve as a consistency check. We remark that since the Gauss's law test is applied to a particular choice of generators, it is not particularly useful for determining the type of topological order in a model. For example, to verify that a model is type-II in this manner would require checking every possible charge cluster, which is not feasible. 

The deformability and commutation matrix tests do not suffer from the same difficulty. 
In table~\ref{table_invariants}, we report the dimension of the least mobile particle  in a variety of models, found by using the intersection of Gauss's laws.

\section{Sorting a topological stabilizer model of unknown type}
\label{algo}
As discussed in the previous section, the deformability of pair-creation operators supplemented by string-membrane and membrane-membrane commutation matrix ranks for a given stabilizer model reveal the existence of topological particles along with information about their mobilities. 
We outline a procedure that utilizes these test to sort a given topological stabilizer model of unknown type below. The tools used are summarized in table~\ref{procedure}. 


\subsection{Sorting procedure}

\noindent
\begin{itemize}
\item First check whether all operators that create a pair of excitation patches in any of the configurations shown in Fig.~\ref{excitation_config} are deformable to flat-rods. This proceeds by first attempting to deform an arbitrary pair-creation operator to lie within the minimal box containing the excitations, shown in Fig.~\ref{excitation_config}. Next it is checked whether any such pair-creation operator can be further deformed into a flat-rod configuration. The full deformation procedure is explained in detail in appendix~\ref{section:cleaning_strategy}.   
    \begin{itemize}
    \item If all the pair-creation operators are deformable to flat-rods, check the ranks of the commutation matrices associated with the flat-rod configurations: 
        \begin{itemize}
        \item If the flat-rod commutation matrix ranks are not all equal, then the model is {type-I} with rigid string operators along some lattice directions. 
        \item If the ranks associated with all flat-rod configurations are zero, then the model is type-II. 
        \item If the ranks associated with all flat-rod configurations are equal and non-zero, then the model is either TQFT or a stack of TQFT and type-II up to a local unitary circuit. 
        \item In order to distinguish between TQFT and a combination of TQFT and type-II, check the membrane-membrane commutation matrix, if the result is always 0, the model is a TQFT.   More generally, we conjecture that the value of $n_{\text{rods}}^{3D}$ indicates the number of copies of 3D toric code contained in the model, which also applies to type-I models. 
        \end{itemize}

    \item
    If the pair-creation operators are not all deformable to flat-rods, we conjecture the model must be type-I: 
    \begin{itemize}
        \item Next check the scaling of the membrane-membrane commutation matrix, if it is linear with no correction the model is foliated type-I. Otherwise, in the case there are fluctuating correction terms, the model is fractal type-I.
        \item Finally, check the generalized Gauss's laws within a cube to find the mobility of single stabilizer excitations to determine whether the model has fracton, lineon, or planon topological order, corresponding to the mobility of the most restricted particle. 
    \end{itemize}
    \end{itemize}
\end{itemize}

By following the above procedure, one can sort 3D topological stabilizer models into the following classes: TQFT, type-II, fracton/lineon/planon foliated or fractal type-I. Combinations of the aforementioned types of topological order can also be generated by stacking models of different types. 

The data presented in table~\ref{table_invariants} indicates that cubic codes~\cite{haah2011local,Haah2013} CC1-4, CC7, CC8 and CC10 are type-II while the remaining cubic codes are fractal type-I. 
We remark that the fractal type-I model CC0 and type-II models, such as CC1, differ only in the deformability of their pair-creation operators to flat-rods as the only rigid string operators in CC0 are along non-lattice directions. CC11-15 and CC17 have rigid string operators only along lattice directions and hence the pair-creation operators are deformable to flat-rods along lattice directions. 
Another example of a fractal type-I model is given by the Sierpinski fractal spin liquid (SFSL) in which all elementary excitations are lineons that can move along the $\hat{z}$ direction.  
The X-cube model (XC)~\cite{vijay2016fracton}, checkerboard model (CB)~\cite{vijay2016fracton}, Chamon's model (Chm)~\cite{chamon2005quantum, bravyi2011topological} and the type-I model from Ref.~\cite{hsieh_halasz_partons} (HH-I) are all examples of foliated type-I models. Not all pair-creation operators for the so-called type-II model (HH-II) from Ref.~\cite{hsieh_halasz_partons} could be deformed to flat-rods using the commutation constraints up to third order. 
However, we also did not find a non-trivial string operator. Since it is possible that the pair-creation operators for this model are fully deformable to flat-rods using higher order constraints, we have left the status of this model's type as inconclusive.

\begin{figure*}[t]
\vspace{6mm}
\centering
\sidesubfloat[]{\includegraphics[scale=0.33]{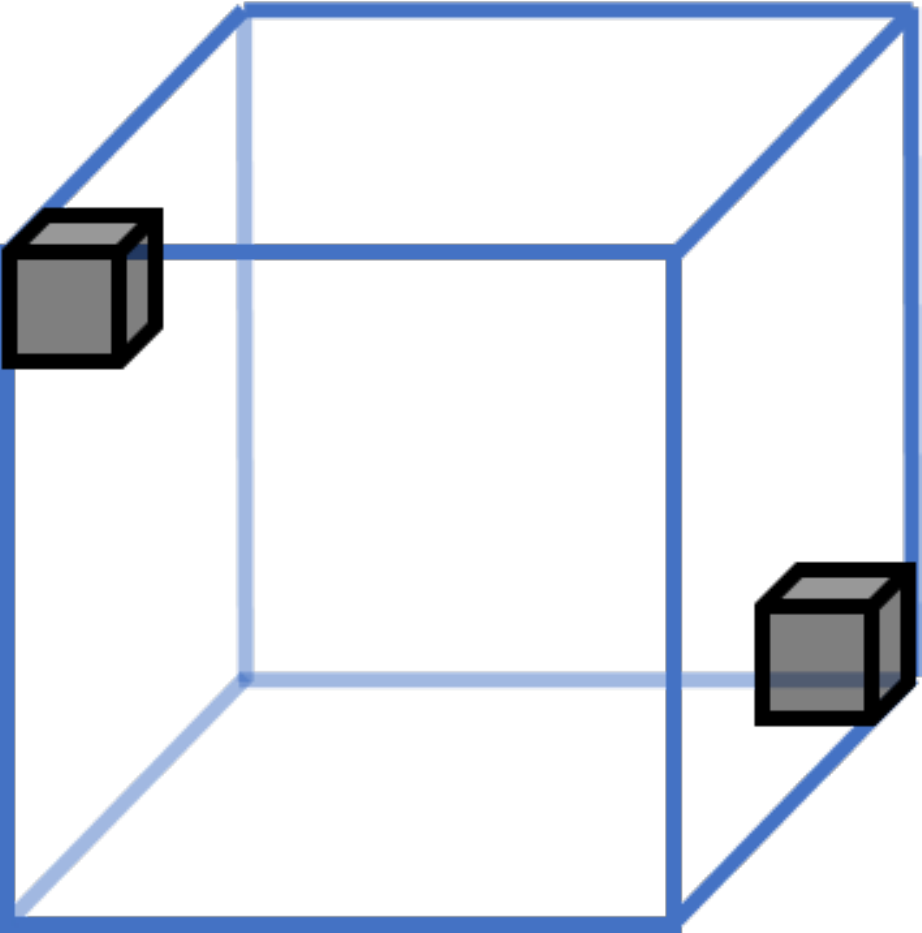}} \hspace{2mm}
\sidesubfloat[]{\includegraphics[scale=0.33]{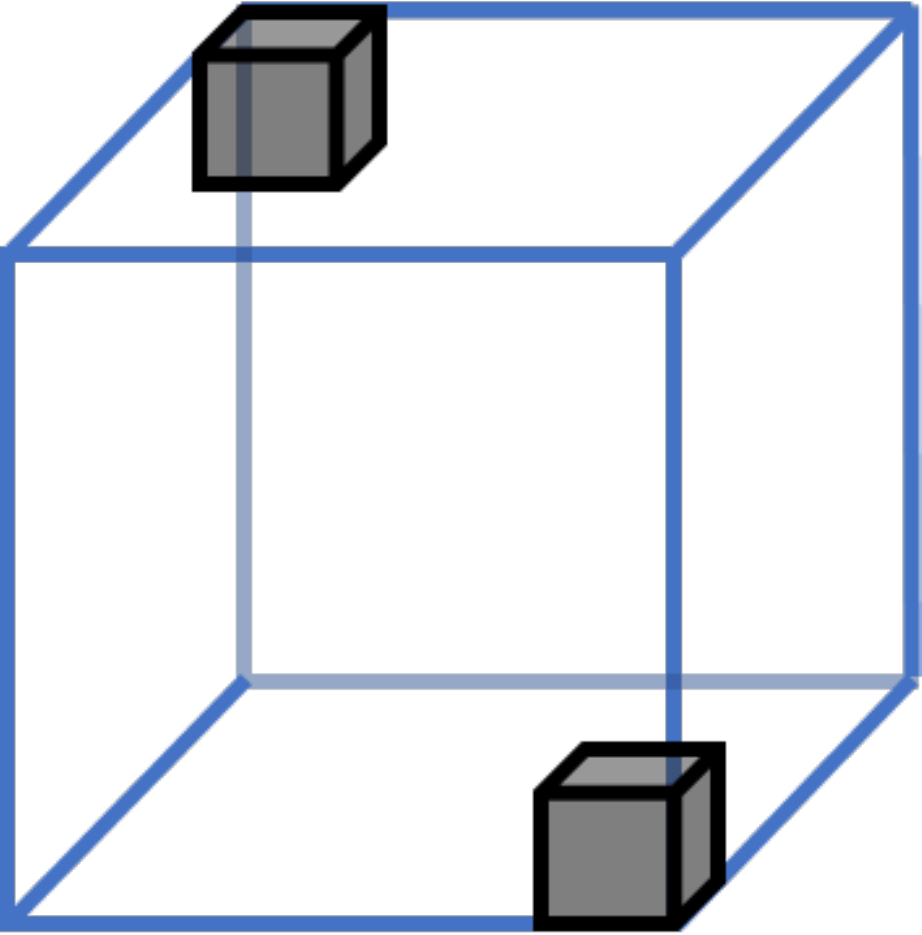}} \hspace{2mm}
\sidesubfloat[]{\includegraphics[scale=0.33]{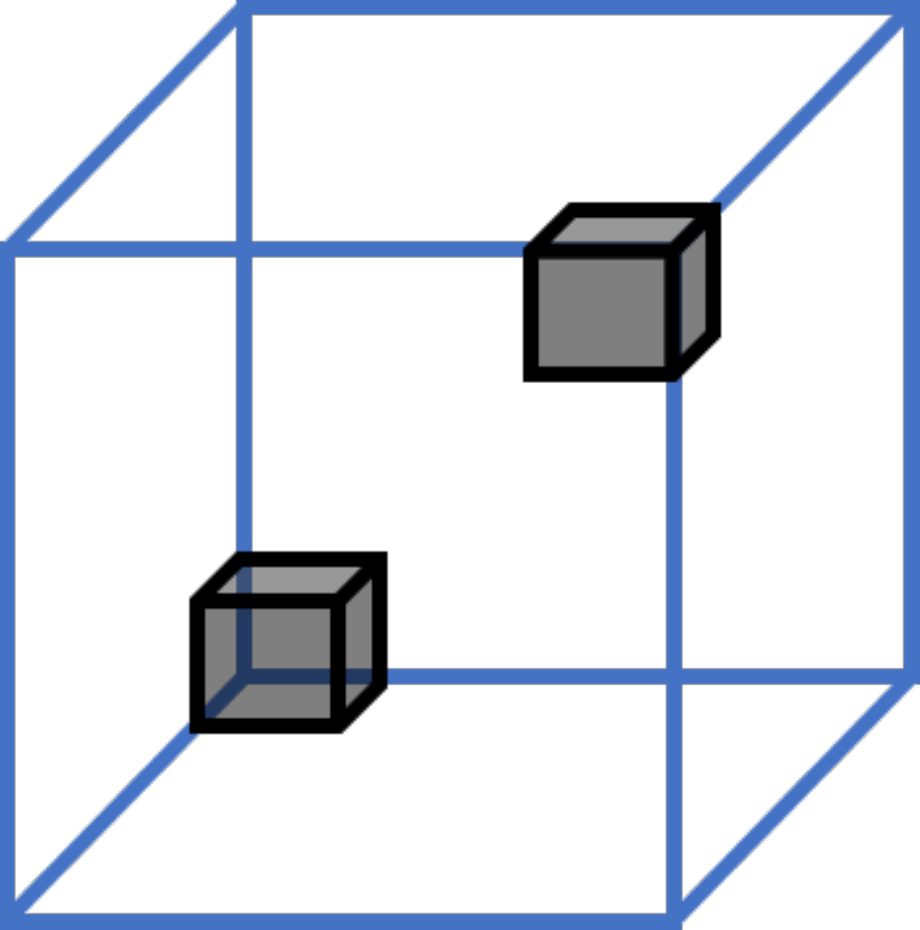}} \hspace{2mm}
\sidesubfloat[]{\includegraphics[scale=0.33]{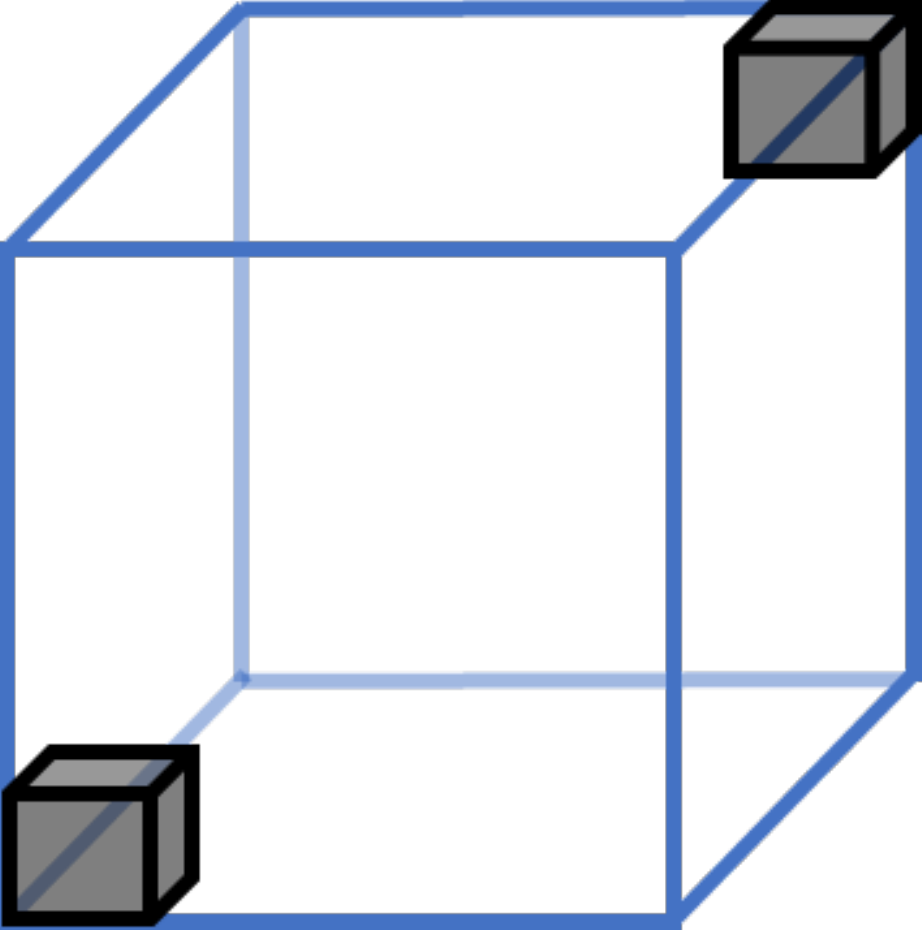}}\\ \vspace{3mm}
\sidesubfloat[]{\includegraphics[scale=0.33]{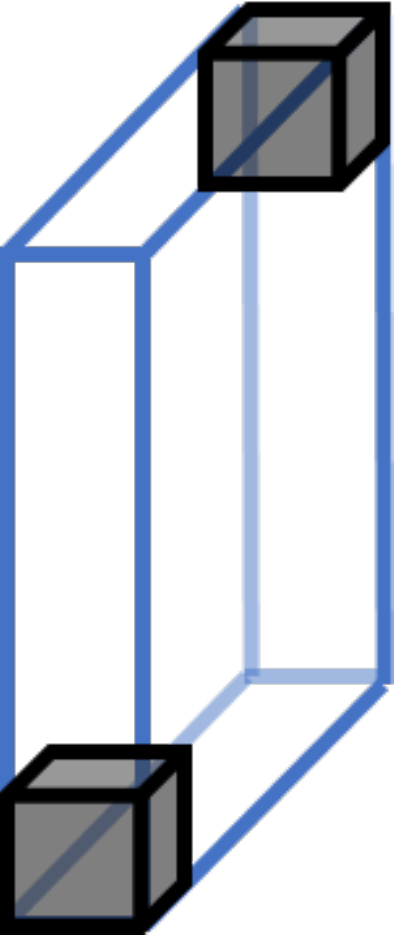}} \hspace{2mm}
\sidesubfloat[]{\includegraphics[scale=0.33]{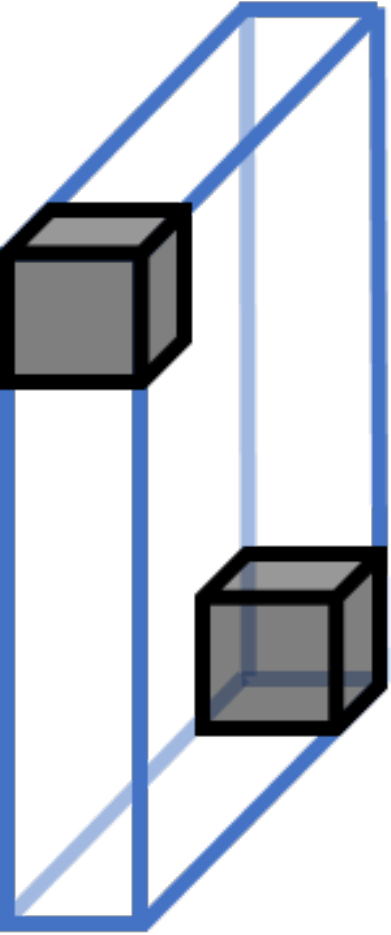}} \hspace{2mm}
\sidesubfloat[]{\includegraphics[scale=0.33]{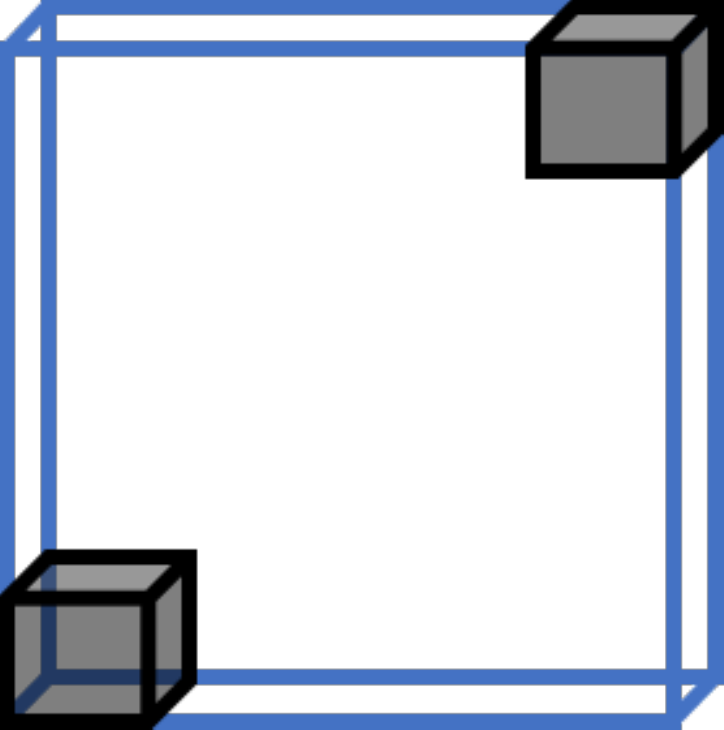}} \hspace{2mm}
\sidesubfloat[]{\includegraphics[scale=0.33]{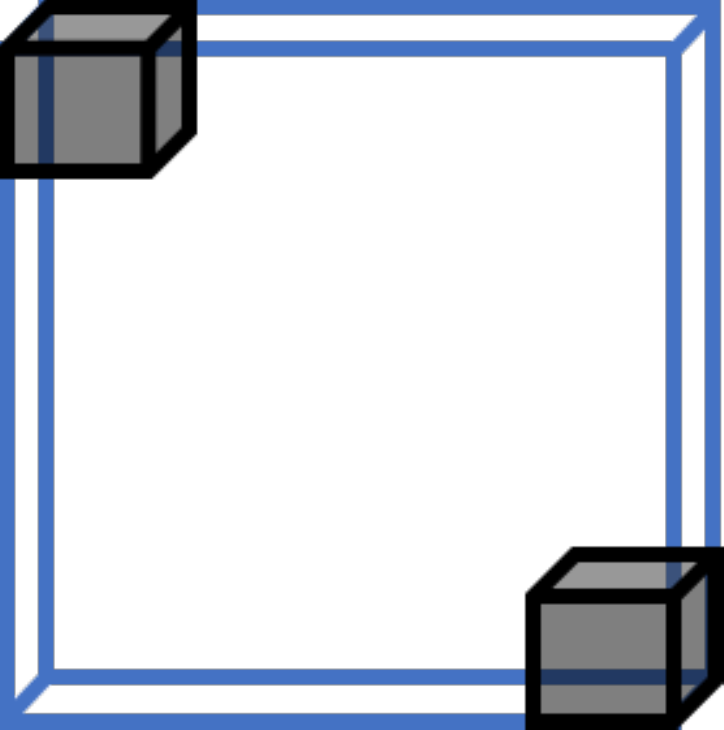}} \hspace{2mm}
\sidesubfloat[]{\includegraphics[scale=0.33]{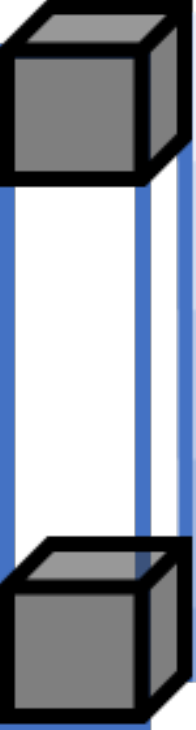}} \\  \vspace{3mm}
\sidesubfloat[]{\includegraphics[scale=0.33]{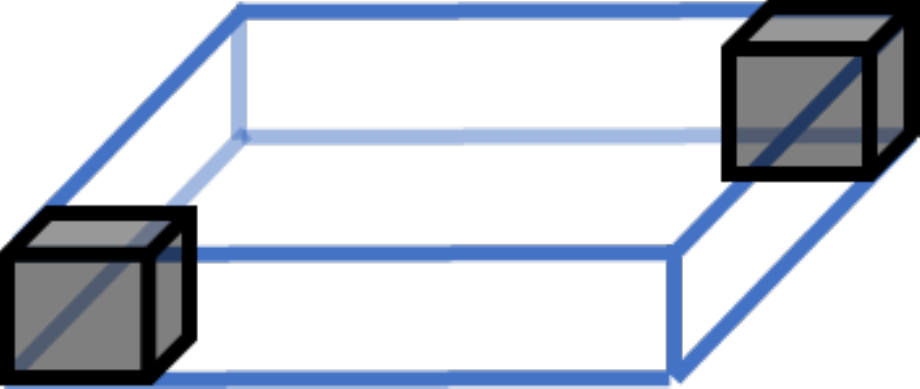}} \hspace{2mm}
\sidesubfloat[]{\includegraphics[scale=0.33]{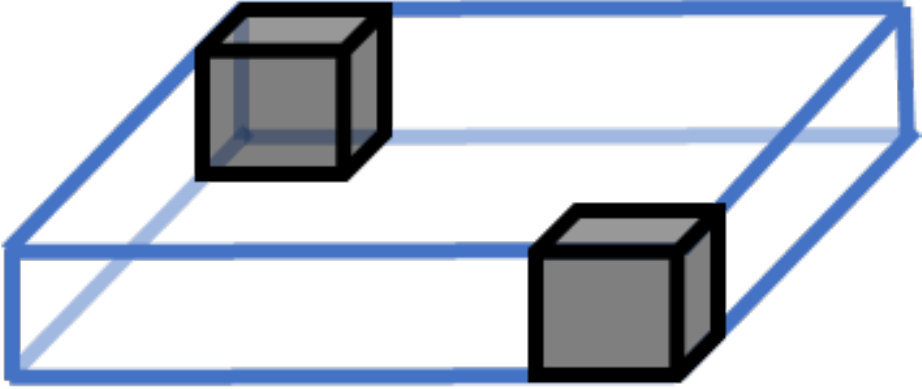}} \hspace{2mm}
\sidesubfloat[]{\includegraphics[scale=0.33]{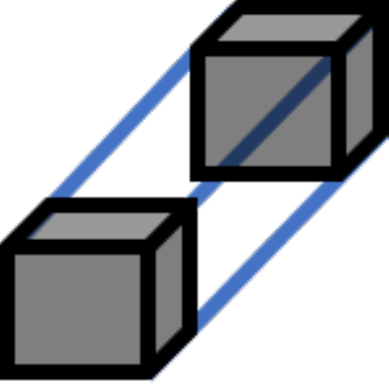}} \hspace{2mm}
\sidesubfloat[]{\includegraphics[scale=0.33]{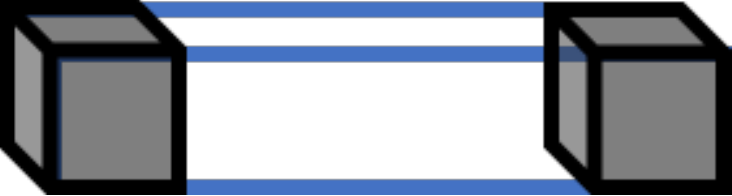}}  
\caption{Different configurations of excitation pairs (grey cubes) and the minimal boxes containing them.}
\label{excitation_config}
\end{figure*}
\setlength{\tabcolsep}{0.6em}

\subsection{Assumptions and limitations}

When applying the sorting procedure there are some practical limitations worth noting. 

\begin{itemize}

\item For the deformability of pair-creation operators, we assume that the excitation patches are sufficiently far apart, i.e. their separation is much larger than their extent. If this condition is not respected one can find non-zero commutation matrix ranks associated with certain flat-rod configurations, even for models with no corresponding string operators.  

\item When checking the deformability of a pair-creation operator, we utilize only a subset of all the possible constraints that arise from commutation of the operator with stabilizer generators away from the excitations. For models that are not fully deformable in table~\ref{table_invariants} we have checked up to third order constraints at least, where zeroth order refers to constraints on vertices due to commutation relations, first order refers to constraints on pairs of vertices and so on. 
This is supplemented by a direct search for nontrivial string operators, informed by the deformability results, and the demonstration of particles with subdimensional mobility via generalized Gauss's laws. 
When such rigid string operators are found they demonstrate conclusively that a model is not fully deformable, which is indicated by a $\times$ in table~\ref{table_invariants}. 
When a model is not fully deformable to third order but no rigid string operator is found the result is inconclusive as it is consistent with type-I or type-II, which is indicated by a $?$ in table~\ref{table_invariants}. 
While any finite order constraint can be checked in principle due to the general result in Appendix~\ref{sec:SCdeformability}, this process becomes increasingly complex as one considers higher order constraints. 
A nontrivial application of higher order constraints is discussed in Appendix~\ref{sec:deformabilityexamples}. 

\item The commutation matrix ranks for string-membrane and membrane-membrane configurations can only be calculated for rods and membranes of a tractable finite size. Hence, it is possible in principle that nontrivial operators of very large widths not reached in our numerics have not been accounted for. It would be interesting if one could upper bound the string operator width that needs to be considered for translation invariant topological stabilizer models to close this loophole, but we do not have such a bound presently. 
\item We have only considered membrane-membrane configurations where the two membranes are squares i.e. the aspect ratio is 1 and $L_x=L_y=L_z$. For this choice of aspect ratio, the membrane-membrane configuration does not detect the presence of planons with mobility in planes that intersect the membranes in lattice planes, along a diagonal, such as the plane spanned by $\hat{x}+\hat{y},\,\hat{y}-\hat{z}$. 
For models with such planons, we can perform a modular transformation to map the diagonal planons into alignment with a lattice plane. In the list of models with planons we have considered, only cubic code 16 has planons along such a diagonal plane. These planons indeed show up in the membrane-membrane configuration after a modular transformation is performed. In fact, after performing a modular transformation to map the planons into a lattice plane, we found the resulting model is equivalent to cubic code 15 up to the relabeling of axes.  
\end{itemize}

\section{Discussion and Conclusions} 
\label{sec:conclusions}

In this work we have provided a set of analytic and numerical tools that can be applied to translation invariant topological stabilizer models to sort them into one of several qualitatively distinct classes of topological quantum order: TQFT, foliated type-I, fractal type-I or {type-II}. 
Our methods provide a recipe to sort an unknown topological stabilizer model into one of these classes, as explained in section~\ref{algo}. 


Our coarse sorting of topological stabilizer models into classes with qualitatively similar physical properties constitutes a step towards a full classification. 
Such a classification would likely require a more careful formulation of generalized S-matrix quantities for fracton phases that give rise to true local unitary invariants. We speculate that such S-matrices would have to be customized to fit the fusion rules and particle mobilities on a model by model basis. Research along these lines has been initiated in Ref.~\onlinecite{Pai2019}, which is complimentary to the 'one size fits all' approach taken here. 

Over the coarse of this example centric study we have formulated several conjectures about the structure of general translation invariant topological 3D qubit stabilizer models up to local unitary circuits: 
\begin{itemize}
    \item We conjecture that each independent 3D particle implies a 3D toric code (possibly with emergent fermions) can be disentangled via local unitary. Hence a stabilizer model with only 3D particles is equivalent to copies of the 3D toric code, possibly including one copy with a fermionic point particle in the case of a non CSS model. This is because a stack of two 3D toric codes, one with a bosonic point particle and the other with a fermionic point particle is equivalent to a stack of two 3D toric codes both with fermionic point particles, up to a change of excitation basis. 
    \item We similarly conjecture that each independent 2D particle implies a 2D toric code can be disentangled. We further conjecture that any planon stabilizer model is equivalent to a stack of toric codes, which may be a slightly stronger statement. 
\end{itemize}
The properties of each different class of topological order discussed in this paper are key to understanding the entanglement renormalization flow for general 3D topological stabilizer models. If our conjectures are true, stabilizer models with only 3D particles flow to RG fixed points and for each 2D particle, a stack of 2D toric codes can be extracted in a bifurcating renormalization flow. Furthermore, the use of the stack of 2D toric codes as a resource in the definition of foliated phases is essentially unique amongst stabilizer models. Results along these lines will be presented in a forthcoming work~\cite{Dua_RG_2019}. 
Several other directions for future work have presented themselves during the coarse of this study. These include: the generalization of our methods beyond stabilizer models using tensor network techniques following Ref.~\onlinecite{ribbons}, adapting our methods to apply to subsystem symmetry-protected phases, and the search for a rigorous no strings condition for stabilizer models that is necessary and sufficient while also remaining practical to verify.


\acknowledgments
AD thanks Jeongwan Haah and Mengzhen Zhang for useful discussions. This work is supported by start-up funds at Yale University (DW and MC), NSF under award number DMR-1846109 and the Alfred P. Sloan foundation (MC).

\bibliographystyle{apsrev_1}
\bibliography{DomBib}

\appendix

\begin{widetext}

\section{Sufficient condition for Deformability}
\label{sec:SCdeformability}

In this appendix, we formulate a sufficient condition and a general recipe which can be employed to deform a pair-creation operator to a certain ``normal form" i.e. to clean a pair-creation operator to a flat-rod structure for example. The main merit of our approach is that it can be applied to any stabilizer model. In particular, we have no constraint on the number of physical qubits placed on each vertices. Without loss of generality, we assume that there are $n_v$ qubits for each vertex. Broadly speaking, there are two types of elementary moves we are considering.

The first move is what we refer to as ``3D corner cleaning." We formulate a condition under which a corner of a string segment can be ``cleaned." That is, given a Pauli operator $P$ representing a pair-creation operator, we would like to show that there is another Pauli operator $P^\prime$ whose support is reduced by a corner of the support of $P$. Of course, $P$ and $P^\prime$ will be a different operator in general, but their action on the ground state will be identical. The second move is what we refer to as ``2D corner cleaning". This is analogous to the 3D corner cleaning, except for the fact that we are considering a pair-creation operator supported on a two-dimensional sheet of thickness one. 

Both of these moves can be understood in terms of a procedure called ``local cleaning". Local cleaning refers to a process by which a pair-creation operator is converted to an equivalent pair-creation operator supported on a smaller subsystem by removing one of the vertices. For a given vertex $a$, we formulate a condition under which the support of the pair-creation operator can be cleaned to have a trivial action on $a$.

In the ensuing discussion, it is convenient to use a well-known symplectic representation of a Pauli operator. Consider a Pauli operator acting on $n$ qubits. Any such operator, up to a phase, can be represented as a $2n$-dimensional vector with a $\mathbb{Z}_2$ base field. For example, $(x_1,x_2,z_1,z_2)$ represents an operator $X_1^{x_1}X_2^{x_2} Z_1^{z_1}Z_2^{z_2}$, where the subscript refers to the qubit index. More generally, the first $n$ entries represent the Pauli $X$ operators and the remaining $n$ entries represent the Pauli $Z$ operators. A commutation relation between Pauli operators can be formulated concisely as follows. Two Pauli operators represented by $v_1$ and $v_2$ commute with each other if and only if
\begin{equation}
    v_1 \Omega v_2^T = 0,
\end{equation}
where 
\begin{equation*}
    \Omega := 
    \begin{pmatrix}
    0_n & I_n \\
    I_n & 0_n
    \end{pmatrix}
\end{equation*}
is the symplectic matrix. Here $I_n$ is a $n\times n$ identity matrix and $0_n$ is a $n\times n$ matrix with zero entries.

\subsection{Local cleaning}
We begin by formulating some intuitive notions precisely.
\begin{definition}
For $a \in A$,
\begin{enumerate}
    \item $\text{In}(a,A)$ is the set of stabilizers with nontrivial support on $a$ which are supported on $A$.
    \item $\text{Out}(a,A)$ is the set of stabilizers with nontrivial support on $a$ which are supported on $\{a\} \cup A^c$.
\end{enumerate}
\end{definition}
In other words, $\text{In}(a,A)$ is a set of stabilizers that are ``inside of" $A$ that acts nontrivially on $a$. Similarly, $\text{Out}(a,A)$ is a set of stabilizers that are ``outside of" $A$ that acts nontrivially on $a$. 

Now, we can formulate our strategy in words. By using the fact that the pair-creation operator must commute with the stabilizers, we will derive a nontrivial constraint on the Pauli operator acting on $a$. The constraint comes from the fact that this Pauli operator must commute with any stabilizer in $\text{Out}(a,A)$. Then, we would like to deform the pair-creation operator by multiplying a judicious set of stabilizers in $\text{In}(a,A)$. Note that multiplying such a stabilizer does not expand the support of the pair-creation operator. 

A nontrivial question is whether one can always find a set of stabilizers in $\text{In}(a,A)$ that achieves this goal. To gain some intuition, it is instructive to consider an example, say, Cubic code 10, see appendix~\ref{zoo:cc10}. 
In this case, we have $2$ qubits per vertex. The space of Pauli operators on these two qubits can be represented by a $2\times 2=4$-dimensional $\mathbb{Z}_2$-valued vector space. For each corner, there are two constraints coming from the commutation relation with the two stabilizers, one X-type stabilizer and the other Z-type, that share just that corner. These are linear constraints, so a set of vectors that obeys these constraint forms a linear subspace. This effectively constrains a set of possible Paulis that the pair-creation operator can have at $a$. Now the question is whether any such Pauli can be removed by multiplying a stabilizer in $\text{In}(a,A)$. As we will show below, this again can be formulated in the language of linear algebra.

In this somewhat restricted example, we made two important observations that apply more generally. First, both the commutation relation with $\text{Out}(a,A)$ as well as the reduction of the support of the pair-creation operator by multiplying a stabilizer from $\text{In}(a,A)$ can be formulated in the language of linear algebra over $\mathbb{Z}_2$. Second, in this discussion, we did not need the information about the entire stabilizer itself. Rather, all that mattered was the restriction of the stabilizers in $\text{In}(a,A)$ and $\text{Out}(a,A)$ on $a$.

\begin{definition}
We define constraint matrices $\mathcal{C}_{\text{in}}(a,A)$ and $\mathcal{C}_{\text{out}}(a,A)$ as 
\begin{equation}
    \begin{aligned}
        \mathcal{C}_{\text{in}}(a,A)&=
        \begin{pmatrix}
        s_1^{(i)}|_a \\
        \vdots \\
        s_{n_i}^{(i)}|_a
        \end{pmatrix} \\
        \mathcal{C}_{\text{out}}(a,A) &=
        \begin{pmatrix}
        s_1^{(o)}|_a \\
        \vdots \\
        s_{n_o}^{(i)}|_a
        \end{pmatrix},
    \end{aligned}
\end{equation}
where $s_i^{(i)} \in \text{In}(a,A)$ and $s_i^{(o)} \in \text{Out}(a,A)$. For each stabilizer $s,$ $s|_a$ is a symplectic representation of a restriction of $s$ to $a$. 
\end{definition}
To be clear, a restriction of a stabilizer to $a$ is defined as follows. Let $s=p_a\otimes O$ for some operator $O$ and a Pauli operator $p_a$ acting on $a$. A restriction of $s$ to $a$ is $p_a$.

Here is our key lemma.
\begin{lemma}(Local cleaning lemma)
Consider a pair-creation operator $P$ acting on $A$. Let $a\in A$. If
\begin{equation}
\ker (\mathcal{C}_{\text{out}}(a,A) \Omega) = \Ima(\mathcal{C}_{\text{in}}(a,A))\label{eq:local_cleaning}
\end{equation}
then there exists an equivalent pair-creation operator $P'$ acting on. $A\setminus \{ a\}.$
\end{lemma}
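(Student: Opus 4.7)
The plan is to translate the global commutation requirements that $P$ satisfies into a linear-algebraic statement about the $2 n_v$-dimensional symplectic vector $p_a$ representing the restriction of $P$ to the qubits at $a$, and then use the hypothesis to cancel $p_a$ by multiplying $P$ with a judicious product of stabilizers from $\mathrm{In}(a,A)$. Since every stabilizer in $\mathrm{In}(a,A)$ is supported inside $A$, such a multiplication keeps the support of $P$ within $A$ and preserves its action on the ground space, i.e.\ the excitation pattern $P$ creates. Reducing $p_a$ to zero is then exactly what it means to eliminate $a$ from the support of $P$.

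First I would derive the constraint on $p_a$. Because $a$ is assumed to lie away from the excitation patches of the pair-creation operator, $P$ must commute with every $s \in \mathrm{Out}(a,A)$. The support of such an $s$ lies in $\{a\}\cup A^c$, while $P$ is supported in $A$, so their supports intersect only at $a$. The global commutator $[P,s]$ therefore reduces to the local symplectic condition $s|_a\,\Omega\, p_a^{T} = 0$. Running over all $s \in \mathrm{Out}(a,A)$ produces the single linear statement $p_a^{T} \in \ker(\mathcal{C}_{\mathrm{out}}(a,A)\,\Omega)$.

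Next I would invoke the hypothesis $\ker(\mathcal{C}_{\mathrm{out}}(a,A)\,\Omega) = \Ima(\mathcal{C}_{\mathrm{in}}(a,A))$ to conclude that $p_a$ lies in the row span of $\mathcal{C}_{\mathrm{in}}(a,A)$, so there exist coefficients $c_i \in \FFtwo$ with $\sum_i c_i\, s_i^{(i)}|_a = p_a$. Setting $P' := P \prod_i (s_i^{(i)})^{c_i}$, the operator $P'$ is equivalent to $P$ as a pair-creation operator (it differs from $P$ by a product of stabilizers, hence creates the same excitations and acts identically on the ground space), it remains supported on $A$, and its symplectic restriction to $a$ is $p_a + \sum_i c_i\, s_i^{(i)}|_a = 0$. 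Thus $P'$ acts trivially at $a$ and is supported on $A\setminus\{a\}$, as claimed.

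As for the main obstacle: once the symplectic bookkeeping is fixed the argument is a direct application of linear algebra, so the principal subtlety is conceptual rather than computational. Two points deserve care. First, only the inclusion $\ker(\mathcal{C}_{\mathrm{out}}\,\Omega) \subseteq \Ima(\mathcal{C}_{\mathrm{in}})$ is actually used above; the reverse inclusion is automatic, because any two stabilizers whose supports overlap only at $a$ must have mutually commuting restrictions to $a$, so the hypothesis is really a sufficient cleaning certificate packaged as an equality. Second, and more importantly for applications, the derivation of the $\mathrm{Out}$-constraint on $p_a$ tacitly requires that $a$ lie outside every excitation patch so that $P$ indeed commutes with all of $\mathrm{Out}(a,A)$; verifying this geometric standing assumption when iterating the lemma across vertices is where the real work lies, rather than in the algebraic step itself.
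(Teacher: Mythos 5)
Your proposal is correct and follows essentially the same route as the paper's proof: translate the commutation of $P$ with $\mathrm{Out}(a,A)$ into the condition $\mathcal{C}_{\text{out}}(a,A)\,\Omega\,p_a^T=0$, use the hypothesis to find a product of stabilizers from $\mathrm{In}(a,A)$ whose restriction to $a$ equals $p_a$, and multiply it into $P$ to clear the support at $a$ without leaving $A$. Your added remarks — that only the inclusion $\ker(\mathcal{C}_{\text{out}}\Omega)\subseteq\Ima(\mathcal{C}_{\text{in}})$ is actually used, and that one must check $a$ lies outside the excitation patches when iterating — are correct refinements but do not change the argument.
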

\begin{proof}
Let $v$ be a restriction of the pair-creation operator to $a$. Because $v$ must commute with all the stabilizers in $\text{Out}(a,A)$, we have
\begin{equation}
    \mathcal{C}_{\text{out}}(a,A) \Omega v^T = 0 .
\end{equation}
If the condition is satisfied, the orthogonal complement of the vector space spanned by rows of $\mathcal{C}_{\text{out}}(a,A) \Omega$ must be spanned by the rows of $\mathcal{C}_{\text{in}}(a,A)$.

Therefore, no matter what $v$ is, there is a stabilizer in $\text{In}(a,A)$ whose restriction on $a$, in the symplectic representation, is $v$. Therefore, there is a stabilizer supported on $A$, which, upon multiplying with $P$, reduces the support to $A\setminus \{a\}$. 
\end{proof}

There is a generalization of this lemma which will prove useful for certain models, e.g., CC8. The main idea is to consider a set of sites as opposed to a single site. That is, $a$ is no longer a single site but rather a subset of $A$. A restriction of a stabilizer to $a$ will thus generally become a longer vector. 
\begin{lemma}(Generalized local cleaning lemma)
Consider a pair-creation operator $P$ acting on $A$. Let $B\subset A$. If
\begin{equation}
\ker (\mathcal{C}_{\text{out}}(B,A) \Omega)|_a = \Ima(\mathcal{C}_{\text{in}}(a,A))\label{eq:generalized_local_cleaning}
\end{equation}
then there exists an equivalent pair-creation operator $P'$ acting on. $A\setminus \{a\}.$
\end{lemma}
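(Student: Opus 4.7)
The plan is to mirror the proof of the (non-generalized) local cleaning lemma, with the single refinement that the linear constraints on the restriction of $P$ to site $a$ are now harvested from the larger collection of out-stabilizers supported on $B\cup A^c$ rather than just those supported on $\{a\}\cup A^c$. Since $\{a\}\subset B$ one has $\text{Out}(a,A)\subset\text{Out}(B,A)$, so (after projection to $a$) the new kernel is contained in $\ker(\mathcal{C}_{\text{out}}(a,A)\Omega)$, and hence the generalized hypothesis is strictly weaker than that of the original lemma. This is exactly what makes the generalization applicable to models like CC8 where the single-site version fails.

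The first step is to let $v_B$ denote the symplectic representation of $P|_B$, and observe that because $P$ commutes with every stabilizer in the system, it in particular commutes with every $s\in\text{Out}(B,A)$. Each such $s$ has support in $B\cup A^c$ while $P$ acts as identity on $A^c$, so the commutation $sP=Ps$ reduces to the symplectic orthogonality $(s|_B)\,\Omega\,v_B^T=0$. Stacking these row by row gives $v_B\in\ker(\mathcal{C}_{\text{out}}(B,A)\Omega)$, and extracting the $a$-coordinates places $v_a:=v_B|_a$ in the projected kernel $\ker(\mathcal{C}_{\text{out}}(B,A)\Omega)|_a$. The hypothesis then yields $v_a\in\Ima(\mathcal{C}_{\text{in}}(a,A))$, so by definition there is a $\mathbb{Z}_2$-linear combination $S$ of stabilizers from $\text{In}(a,A)$ whose symplectic restriction to $a$ equals $v_a$. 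Multiplying $P$ by $S$ then gives an equivalent pair-creation operator $P'$ whose $a$-component is zero; since $S$ is supported on $A$, so is $P'$, and therefore $P'$ is supported on $A\setminus\{a\}$.

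The main conceptual subtlety, and essentially the only obstacle, is that multiplying $P$ by $S$ can alter $v_B$ on sites in $B\setminus\{a\}$ as well. This is harmless for the conclusion, which concerns only the support of $P'$ and not the preservation of any particular extension of $v_a$ to $B$: the additional commutation constraints coming from sites in $B\setminus\{a\}$ were only needed to certify that an acceptable $v_a$ exists at $a$, not to control $P$ elsewhere during the cleaning move. Beyond this observation the argument is a direct transcription of the original local cleaning lemma.
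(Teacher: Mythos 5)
Your proof is correct and is essentially the argument the paper intends: the paper itself only remarks that ``the proof is analogous to the local cleaning lemma,'' and your transcription (commutation with $\text{Out}(B,A)$ places $P|_B$ in the kernel, projection to $a$ lands in $\Ima(\mathcal{C}_{\text{in}}(a,A))$ by hypothesis, multiply off an in-stabilizer combination) is exactly that analogue. Your two side remarks --- that the generalized hypothesis is weaker because $\text{Out}(a,A)\subset\text{Out}(B,A)$, and that altering $P$ on $B\setminus\{a\}$ is harmless since only the support matters --- are both sound and are the right things to flag.
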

Here $|_a$ is projection onto the Pauli operator acting on $a\in B$. The proof is analogous to the local cleaning lemma.

\subsection{Cleaning strategy~\label{section:cleaning_strategy}}
The local cleaning lemma is extremely general. For any model, we can apply the following greedy strategy to reduce the support of the pair-creation operator. Namely, 
\begin{enumerate}
    \item For each $a$ in the support of the pair-creation operator, check eq.~\ref{eq:local_cleaning}. 
    \item If the condition is satisfied, clean $a$.
    \item Repeat.
\end{enumerate}

There is a freedom in choosing the order of the cleaning. However, often it is convenient to follow these steps. Without loss of generality, suppose we have a pair-creation operator contained in some box. We first clean this box so that the operator is supported on a box that tightly encloses the excitations. We will explain this process once we set up our notation. Provided that this is done already, we repeat the following.
\begin{enumerate}
    \item Pick a corner of the box for which eq.~\ref{eq:local_cleaning} is satisfied. Repeatedly apply the same cleaning move for every corner in the same direction. 
    \item This results in a box with a 2D structure. 
    \item Clean the 2D structure in a similar manner.
    \item Go to the other corners and repeat the same procedure.
\end{enumerate}

There is a notable exception for which this strategy does not work. This is CC8. The main challenge here lies in using Lemma 2. The stabilizer generators of Cubic code 8 are given by
\begin{align}
\begin{array}{c}
\drawgenerator{XI}{II}{IX}{XI}{XI}{XX}{XX}{XX}
\quad
\drawgenerator{IZ}{ZZ}{ZZ}{ZZ}{IZ}{II}{ZI}{IZ}
\end{array}\, .
\end{align}
Table \ref{tab:CC8_corner} summarizes the $\mathcal{C}_{\text{out}},\mathcal{C}_{\text{in}}$ for each corner of a cube. The origin $(0,0,0)$ is at the $XX$ of the $X$-stabilizer. Note that there are $8$ different types of corners. We will label these types following the convention of Fig.~\ref{fig:corner_conventions}.

\begin{figure}[!htb]
    \centering
    \begin{align}
    \drawgenerator{A}{B}{C}{D}{A'}{B'}{C'}{D'}
    \end{align}
    \caption{There are $8$ different types of corners. The shape of the pair-creation operator doesn't have to cube or cuboid but can always be confined in a cuboidal box as shown. Two corners, if they are of the same type, have identical constraint matrices.}
    \label{fig:corner_conventions}
\end{figure}

Now that we have set up our convention, let us explain the procedure to clean a general operator to an operator in a box that tightly encloses the excitations. As a starting point, without loss of generality, we can assume that the operator is supported on some finite cuboid. Suppose that the type-$B$ corner can be cleaned. By repeatedly cleaning the type-$B$ corner, we will end up having a cuboid with two sheets attached to it. The sheet normal to the $x$-direction will have a corner of type $A$ which we can attempt to clean. The sheet normal to the $y$-direction will have a corner of type $C'$ which we can also attempt to clean. Suppose at least one of them can be cleaned. By repeating the cleaning process, one ends up having a single sheet. The remaining sheet can be cleaned if at least one of its corners (e.g., $D'$ or $C'$ on a sheet normal to the $y$-direction and $A$ or $D'$ on a sheet normal to the $x$-direction) can be cleaned. If either of these two conditions holds, we can completely remove the parts of the cuboid whose $z$ coordinate is larger than the $z$ coordinates of the excitations. More precisely, we will be able to remove parts of the cuboid whose $z$ coordinate satisfies $z> z_{\text{max}}$, where $z_{\text{max}}$ is highest $z$ coordinate in the support of the excitation. By the symmetries of the models we have considered, one can remove the parts of the cuboid whose $z$ coordinate is smaller than the lowest $z$ coordinate of the excitations. Note that we did not need to begin this procedure by cleaning corner type $B$. In order to clean the top portion of the cuboid, one could have started by cleaning vertices or corners of type $A$, $C'$, or $D'$ followed by cleaning the appropriate set of sheets for those choices. One can similarly squash the cuboid down to the minimal box in the other directions as well. We verified that this was possible in all the models we considered, except for the Hsieh-Halasz-II model.

\begin{table}[h]
    \centering
    \begin{tabular}{cccc}
    \hline
    Corner type & $\mathcal{C}_{in}$ & $\mathcal{C}_{out}$ & Cleanable?  \\
    \hline
    $B$ & $\begin{pmatrix} 0&0&1&1 \end{pmatrix}$& $\begin{pmatrix} 1&1&0&0 \\ 0&0&1&1 \\ 0&0&0&1 \end{pmatrix}$ & Yes \\
    \hline
    $C'$ & $\begin{pmatrix} 1&1&0&0 \\ 0&0&1&0 \end{pmatrix}$ & $\begin{pmatrix} 0&1&0&0 \\ 0&0&1&1 \end{pmatrix}$ & Yes \\
    \hline
    $A$ & $\begin{pmatrix} 1&0&0&0 \\ 0&0&0&1 \end{pmatrix}$ & $\begin{pmatrix} 1&0&0&0 \\ 0&0&0&1 \end{pmatrix}$ & No \\
    \hline
    $D$ & $\begin{pmatrix} 1&0&0&0 \\ 0&0&1&1 \end{pmatrix}$ & $\begin{pmatrix} 1&1&0&0 \\ 0&0&0&1 \end{pmatrix}$ & Yes \\
    \hline
    $D'$ & $\begin{pmatrix} 1&1&0&0 \\ 0&0&0&1 \end{pmatrix}$ & $\begin{pmatrix} 1&0&0&0 \\ 0&0&1&1 \end{pmatrix}$ & Yes \\
    \hline
    $C$ & $\begin{pmatrix} 0&1&0&0 \\ 0&0&1&1 \end{pmatrix}$ & $\begin{pmatrix} 1&1&0&0 \\ 0&0&1&0 \end{pmatrix}$ & Yes\\
    \hline
    $A'$ & $\begin{pmatrix} 1&0&0&0 \\ 0&0&0&1 \end{pmatrix}$ & $\begin{pmatrix} 1&0&0&0 \\ 0&0&0&1 \end{pmatrix}$ & No\\
    \hline
    $B'$ & $\begin{pmatrix} 1&1&0&0 \end{pmatrix}$& $\begin{pmatrix} 0&0&1&1  \end{pmatrix}$ & No\\
         \hline
    \end{tabular}
    \caption{Cleanability criteria of CC8 from eq.~\ref{eq:local_cleaning}. For the convention on the corner type, see Fig.~\ref{fig:corner_conventions}.}
    \label{tab:CC8_corner}
\end{table}
We will also consider a cleaning strategy for corners of two-dimensional sheets that are normal to the $\hat{x}$, $\hat{y}$, and $\hat{z}$ directions. Let us first summarize the result and sketch the proof.

\begin{table}[!htb]
    \centering
    \begin{tabular}{cccc}
        \hline
         Sheet& Corner type & Cleanable?  & Lemma \\
         \hline
        \multirow{2}{*}{$\hat{x}$} & $B,C,B',C'$ & Yes & 1 \\
        & $A,D,A',D'$ & Yes & 2 \\
        \hline
        \multirow{2}{*}{$\hat{y}$} & $A,B,A',B'$ & Yes & 1 \\
        & $C,D,C',D'$ & No &  \\
        \hline
        \multirow{2}{*}{$\hat{z}$} & $B,D,B',D'$ & Yes & 1 \\
        & $A,C,A',C'$ & Yes & 2 \\
        \hline
    \end{tabular}
    \caption{Cleanability criteria of CC8 for corners of 2D sheet. By the $\mathbb{Z}_2$ symmetry that exchanges the $X$ and $Z$ stabilizer, the cleanability of two corners in the opposite direction are identical. The vector $\hat{x},\hat{y},\hat{z}$ represents the normal vector of the sheets. For the corners, we follow the convention of Fig.~\ref{fig:corner_conventions}.
    \label{tab:CC8_2dcorner}}
\end{table}

Here, let us sketch how these facts can be used to deform any string segment into union of flat rods. First of all, without loss of generality, suppose two excitations were created by an operator in some finite box. We would like to reduce the support of this operator to a box that tightly encloses the two excitations; see Fig.~\ref{excitation_config}. Then, we would like to clean the type-B corners. This leaves us with two sheets normal to the $x$ and $y$-direction. Because the sheet normal to the $x$ direction can be cleaned. What remains is the sheet normal to the $y$ direction. Because at least one of its corners can be cleaned, this sheet can be cleaned as well. This process cleaned the portion of the operator that is on top of the figures shown in Fig.~\ref{excitation_config} in the $z$-direction. We can apply the analagous process on the other sides, leading us to Fig.~\ref{excitation_config}.

\subsubsection{Corner cleaning on sheets}
Let us discuss how the corner cleaning condition on a 2D surface can be derived. For this discussion, it is convenient to use the following convention. We specify the corners of the sheets by (i) specifying the normal vector of the sheet and (ii) referring to the corners in Fig.~\ref{fig:corner_conventions}. Of course, the actual corner in consideration may differ from the one described in Fig.~\ref{fig:corner_conventions}. However, whether a corner can be cleaned or not is determined by the constraint matrix, and this constraint matrix will be determined uniquely by the type of corner. Specifically, for each sheet, there are $4$ different types of corners, as one can see in Fig.~\ref{fig:corner_conventions}. Once we specify this information, the constraint matrix is uniquely defined. 

We will now consider cleaning of sheets (of thickness one) normal to each lattice direction. Consider the sheet normal to the $x$-direction. For the corner types $B$, $C$, $B^\prime$ or $C^\prime$ in Fig.~\ref{fig:corner_conventions}, we have
\begin{equation}
    \mathcal{C}_{\text{out}} = 
    \begin{pmatrix}
    1&1&0&0\\
    0&1&0&0\\
    0&0&1&1\\
    0&0&0&1
    \end{pmatrix},
\end{equation}
which is full rank. Therefore, this corner can be cleaned. For the corner types $A$, $D$, $A^\prime$ or $D^\prime$, without loss of generality, we can consider three vertices $a,b,c$ described in Fig.~\ref{fig:x1-1}. This leads to the constraint matrix in Eq.~\eqref{eq:x1-1}.

\begin{equation}
\mathcal{C}_{\text{out}}=
    \begin{blockarray}{cccccccccccc}
 \BAmulticolumn{2}{c}{a_{X}} & \BAmulticolumn{2}{c}{b_{X}} & \BAmulticolumn{2}{c}{c_{X}} &     \BAmulticolumn{2}{c}{a_{Z}} & \BAmulticolumn{2}{c}{b_{Z}} & \BAmulticolumn{2}{c}{c_{Z}} &\\
 \begin{block}{(cc|cc|cc|cc|cc|cc)}
1&1&0&0&0&0&0&0&0&0&0&0\\
1&0&0&0&0&0&0&0&0&0&0&0\\
0&0&0&0&0&0&0&1&0&0&0&0\\
0&0&0&0&1&1&0&0&0&0&0&0\\
0&0&0&0&1&0&0&0&0&0&0&0\\
0&0&0&0&0&0&0&0&0&0&0&1\\
0&0&0&0&0&0&0&0&0&1&1&0\\
0&0&0&0&0&0&1&1&0&1&1&1\\
0&1&1&0&0&0&0&0&0&0&0&0\\
1&1&1&1&1&1&0&0&0&0&0&0\\
\end{block}
    \end{blockarray}\label{eq:x1-1}
\end{equation}

From row $1,2,3,6,7,8$ of eq.~\ref{eq:x1-1}, we can see that the kernel of eq.~\ref{eq:x1-1} restricted to $a_X$ must be $(0,0)$. Let $r_i$ be the $i$th row vector. From row $1$ and $2$, one can see that the kernel of $\mathcal{C}_{out}$ must obey the constraint $a_X=(0,0)$. From row $3$ and $r_6 + r_7 + r_8$, one can see that the kernel restricted to $a_Z$ must be $(0,0)$. Therefore, the corner at $(y,z)=(1,0)$ can be cleaned as well.

\begin{figure}
    \centering
    \includegraphics[width=0.125\columnwidth]{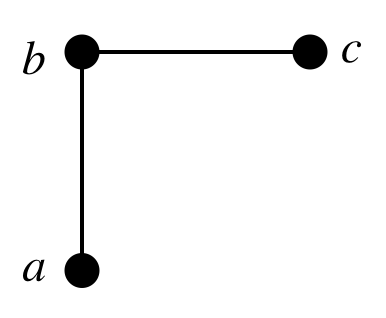}
    \caption{A set of points used in the second order constraint in cleaning of a corner of type $A$, $D$, $A^\prime$ or $D^\prime$. The corner is at position $a$ as shown. Points $a$, $b$ and $c$ lie in the $yz$ plane.}
    \label{fig:x1-1}
\end{figure}

Now, let us consider the sheet normal to the $y$-direction. For the corner types $A$, $B$, $A^\prime$ or $B^\prime$, in Fig.~\ref{fig:corner_conventions}, we have
\begin{equation}
 \mathcal{C}_{\text{out}}=
 \begin{pmatrix}
 1&1&0&0\\
 1&0&0&0\\
 0&0&0&1\\
 0&0&1&1
 \end{pmatrix},
\end{equation}
which is full rank. Therefore, these types of corners can be cleaned. We were unable to verify if the type-$C,D,C',D'$ corners can be cleaned.

Lastly, let us consider the sheet normal to the $z$-direction. For the corner types $A$, $C$, $A^\prime$ or $C^\prime$, we have
\begin{equation}
     \mathcal{C}_{\text{out}}=
     \begin{pmatrix}
     1&0&0&0\\
     0&1&0&0\\
     0&0&0&1\\
     0&0&1&1
     \end{pmatrix},
\end{equation}
which is full rank. Therefore, these types of corners can be cleaned. For the corner types $B$, $D$, $B^\prime$ or $D^\prime$, we have the constraint matrix in eq.~\ref{eq:z11}, which is based on the configuration of $5$ vertices described in Fig.~\ref{fig:z11}.

\begin{equation}
\mathcal{C}_{\text{out}}=
    \begin{blockarray}{cccccccccccccccccccc}
    \BAmulticolumn{2}{c}{a_{X}} & \BAmulticolumn{2}{c}{b_{X}} & \BAmulticolumn{2}{c}{c_{X}} & \BAmulticolumn{2}{c}{d_{X}} & \BAmulticolumn{2}{c}{e_{X}} & \BAmulticolumn{2}{c}{a_{Z}} & \BAmulticolumn{2}{c}{b_{Z}} & \BAmulticolumn{2}{c}{c_{Z}} & \BAmulticolumn{2}{c}{d_{Z}} & \BAmulticolumn{2}{c}{e_{Z}} \\
    \begin{block}{(cc|cc|cc|cc|cc|cc|cc|cc|cc|cc)}
1&1&0&0&0&0&0&0&0&0& 0&0&0&0&0&0&0&0&0&0\\
0&0&0&0&1&1&0&0&0&0& 0&0&0&0&0&0&0&0&0&0\\
0&0&0&0&0&0&0&0&1&1& 0&0&0&0&0&0&0&0&0&0\\
0&0&0&0&0&0&0&0&0&0& 0&1&0&0&0&0&0&0&0&0\\
0&0&0&0&0&0&0&0&0&0& 0&0&0&0&0&1&0&0&0&0\\
0&0&0&0&0&0&0&0&0&0& 0&0&0&0&0&0&0&0&0&1\\
0&1&1&1&1&0&0&0&0&0& 0&0&0&0&0&0&0&0&0&0\\
1&0&1&1&1&1&0&0&0&0& 0&0&0&0&0&0&0&0&0&0\\
0&0&0&0&0&1&1&1&1&0& 0&0&0&0&0&0&0&0&0&0\\
0&0&0&0&1&0&1&1&1&1& 0&0&0&0&0&0&0&0&0&0\\
0&0&0&0&0&0&0&0&0&0& 0&1&0&1&1&0&0&0&0&0\\
0&0&0&0&0&0&0&0&0&0& 1&1&0&0&0&1&0&0&0&0\\
0&0&0&0&0&0&0&0&0&0& 0&0&0&0&0&1&0&1&1&0\\
0&0&0&0&0&0&0&0&0&0& 0&0&0&0&1&1&0&0&0&1\\
    \end{block}
    \end{blockarray}\label{eq:z11}
\end{equation}

In eq.~\ref{eq:z11}, from the row $1,2,5,6,7,8,14$ we can conclude that the restriction on $c$ is $(0,0)$. To see why, let $r_i$ be the $i$th row vector. Note that $r_1+r_7+r_8$ becomes $(0,1)$  on $c_X$. Together with row $2,$ this implies that $c_X$ must be $(0,0)$. Also, $r_6+r_{14}$ becomes $(1,1)$ on $c_Z$ and $r_5$ becomes $(0,1)$ on $c_Z$. Because these two are linearly independent, $c_Z=(0,0)$. Therefore, the corner at $(x,y)=(1,1)$ can be cleaned. More precisely, this corner can be cleaned if in its vicinity we have an arrangement described in Fig.~\ref{fig:z11}.

\begin{figure}
    \centering
    \includegraphics[width=0.1875\columnwidth]{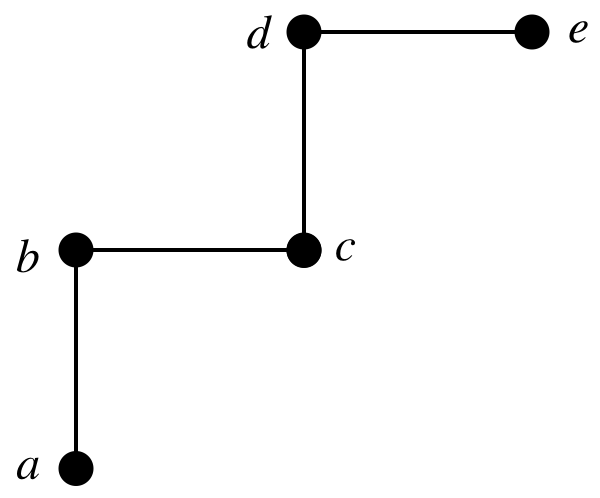}
    \caption{A set of points used in the third order constraint involving the vertex with type $B$, $D$, $B^\prime$ or $D^\prime$ at position $c$ as shown. Points $a$, $b$, $c$, $d$ and $e$ lie in the $xy$ plane.}
    \label{fig:z11}
\end{figure}

Therefore, we conclude that, for any ``stairwell" configuration, one can clean the vertex of the central protruding corner. It is important to note that this is possible independent of the details of the other vertices in the vicinity. Specifically, in Fig.~\ref{fig:z11}, the fact that $c$ can be cleaned is independent of what $a,b,d,$ and $e$ are. By repeatedly applying this fact, one can clean the entire 2D sheet into a one-dimensional subsystem.

\section{Example of deforming pair-creation operators: Cubic code 8}
\label{sec:deformabilityexamples}
In this appendix, we discuss deformability of all pair-creation operators as shown in Fig.~\ref{excitation_config} for cubic code 8. We do not employ the general symplectic formulation in terms of the cleaning lemma used in the Sec.~\ref{sec:SCdeformability}. Instead we equivalently describe deformability in terms of commutation constraints on local Pauli operators. The stabilizer generators of cubic code 8 are given by
\begin{align}
\begin{array}{c}
\drawgenerator{XI}{II}{IX}{XI}{XI}{XX}{XX}{XX}
\quad
\drawgenerator{IZ}{ZZ}{ZZ}{ZZ}{IZ}{II}{ZI}{IZ}
\end{array}
\end{align}

\begin{figure}[H]
\vspace{6mm}
\centering
\sidesubfloat[]{\includegraphics[scale=0.28]{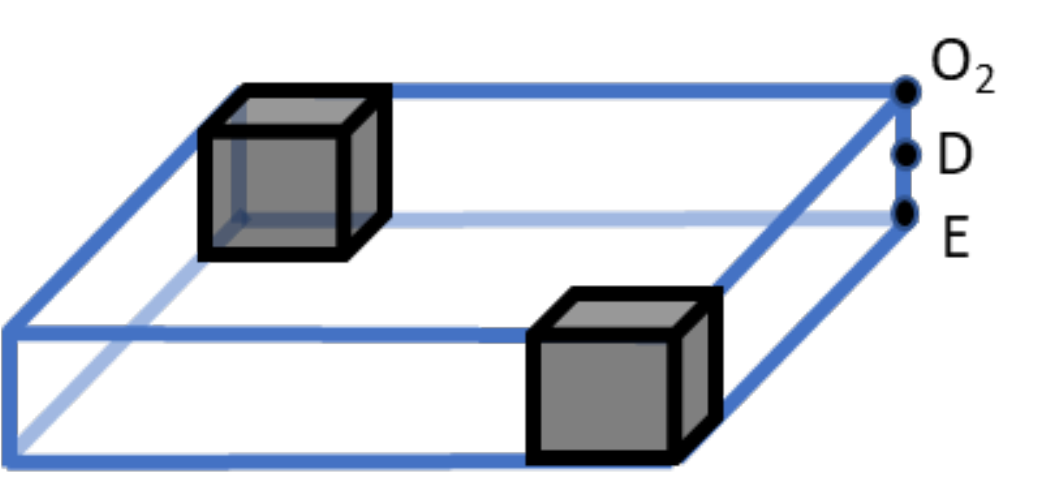}}
\sidesubfloat[]{\includegraphics[scale=0.28]{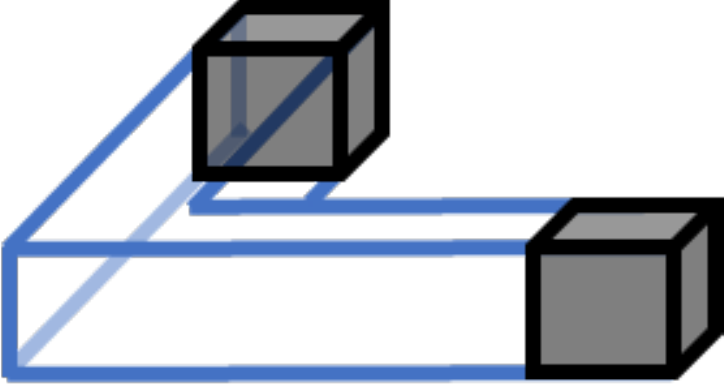}}
\sidesubfloat[]{\includegraphics[scale=0.28]{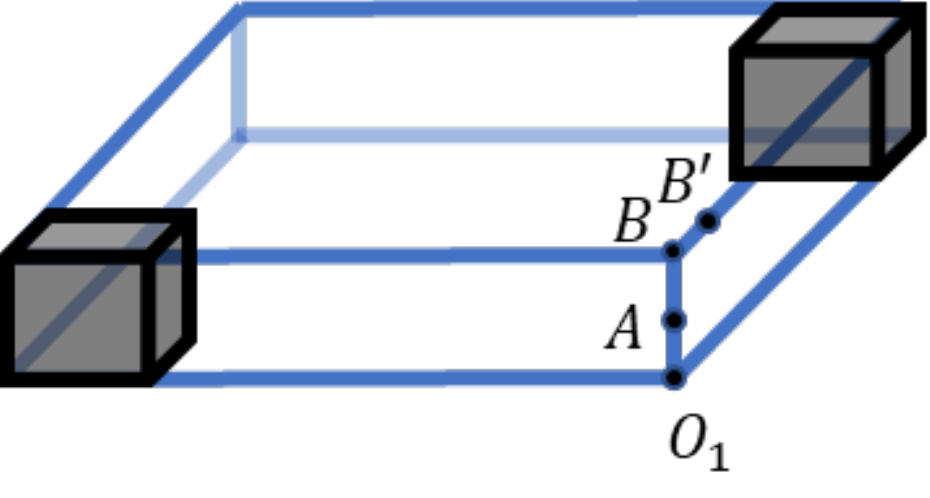}}
\sidesubfloat[]{\includegraphics[scale=0.28]{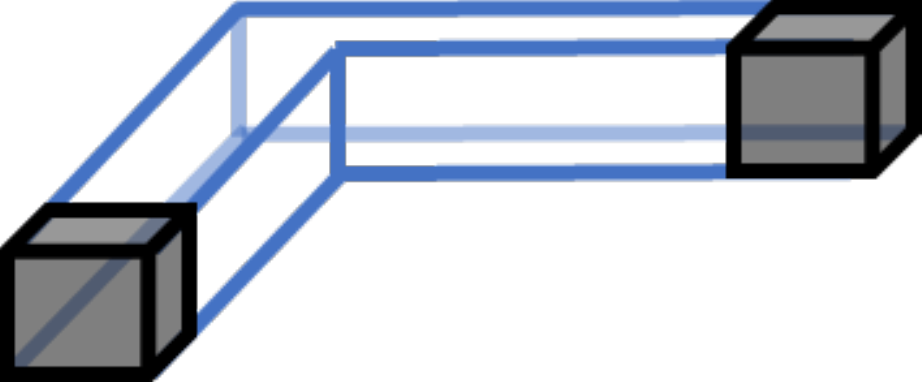}}
\caption{Deformation of $xy$ planar boxes. If the excitations fit within a box along a lattice plane, we refer to it as a planar box.}
\label{CC8_p1}
\end{figure}

For the configuration shown in Fig.~\ref{CC8_p1}~(a), we start by cleaning the corner edge where 3 of the vertices are denoted $O_2$, $D$ and $E$. We show that this edge is completely and independently cleanable and then the same process can be applied subsequently to the protruding edges. $[O_2,ZZ]=0$ implies $O_2=II,XX$. One can keep cleaning the column of operators on these corner edges of the box by multiplying $X$ stabilizers until one is left with a pair of operators acting on $D$ and $E$. $[D\otimes E, IZ\otimes ZZ]=0$, $[D,ZZ]=0$ and $[E,IZ]=0$ implies $D\otimes E$ is either $II\otimes II$ or $XX\otimes XI$ which can be cleaned by multiplying with X stabilizers. In this manner, the protruding edges of the box can be cleaned to yield the flat-rod configuration as shown in Fig.~\ref{CC8_p1}~(b). 

\begin{figure}[H]
\vspace{6mm}
\centering
\sidesubfloat[]{\includegraphics[scale=0.28]{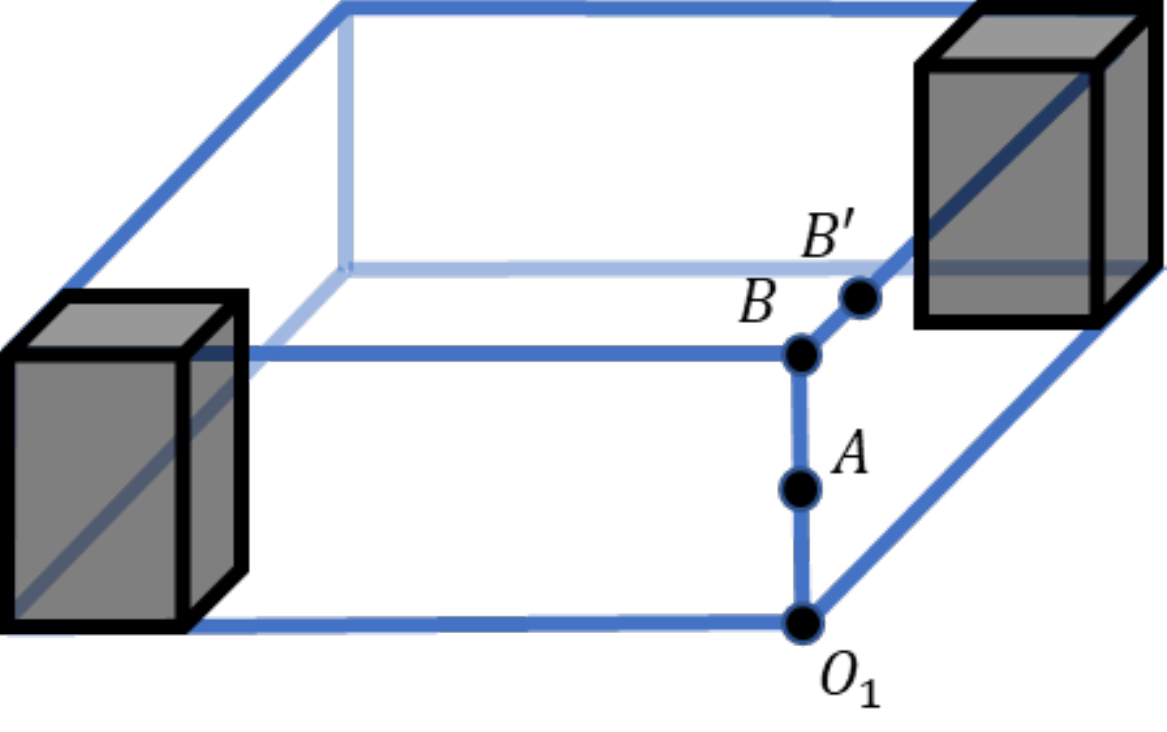}}
\sidesubfloat[]{\includegraphics[scale=0.28]{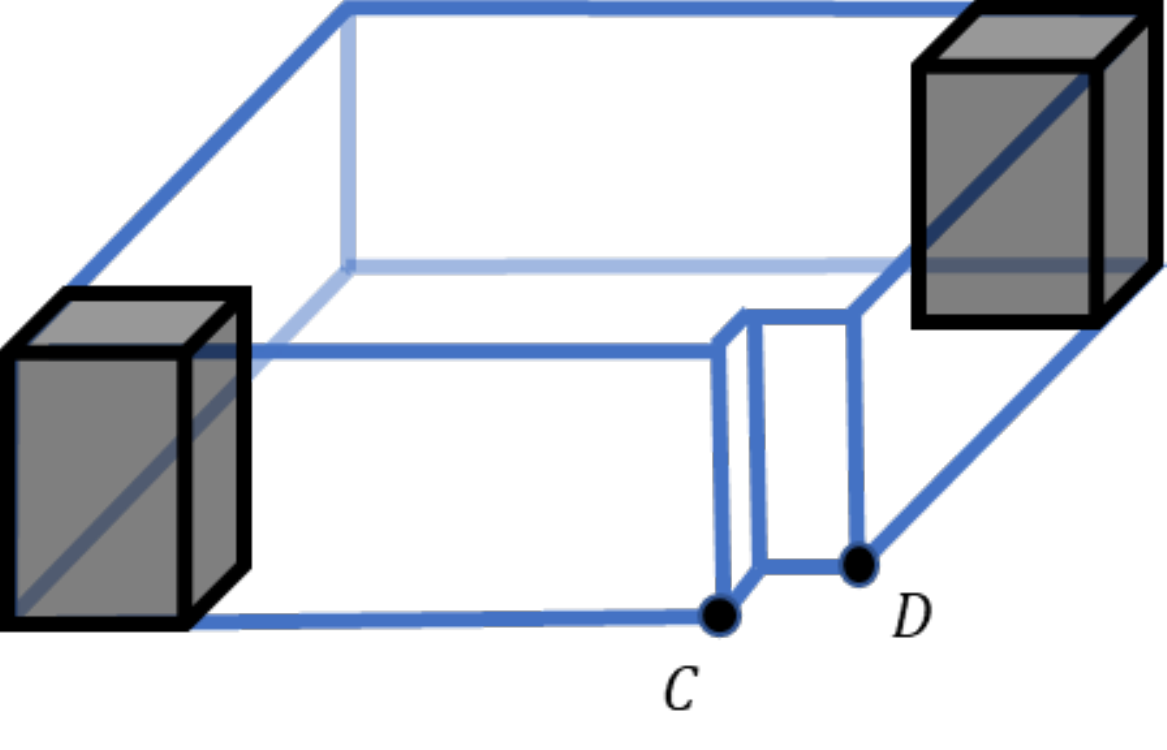}}
\sidesubfloat[]{\includegraphics[scale=0.28]{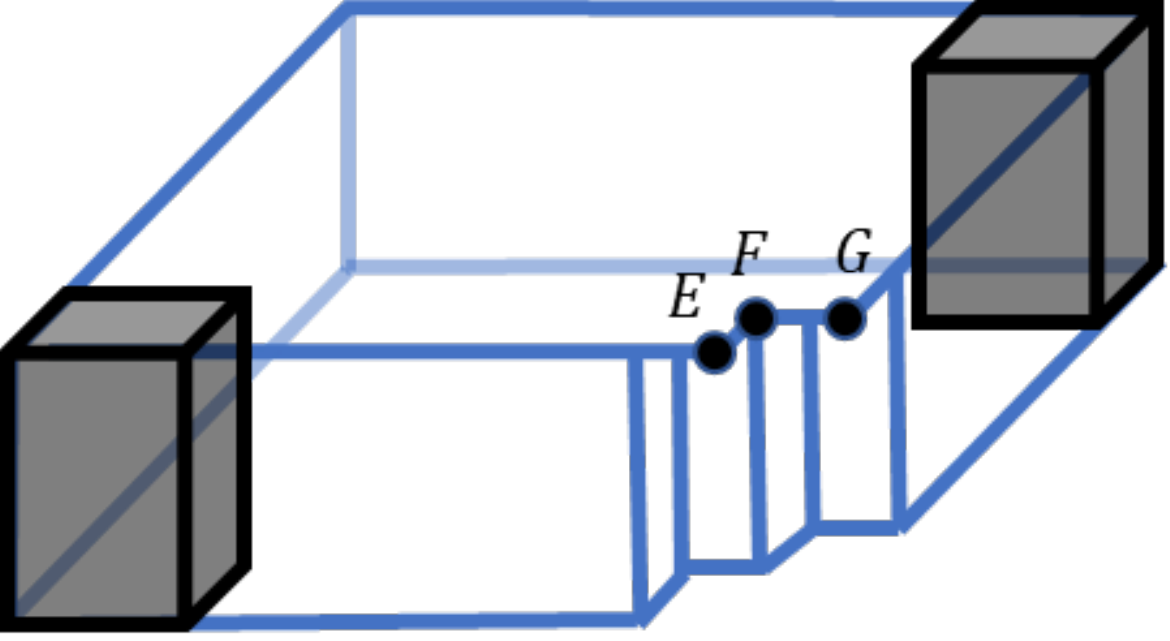}}
\sidesubfloat[]{\includegraphics[scale=0.28]{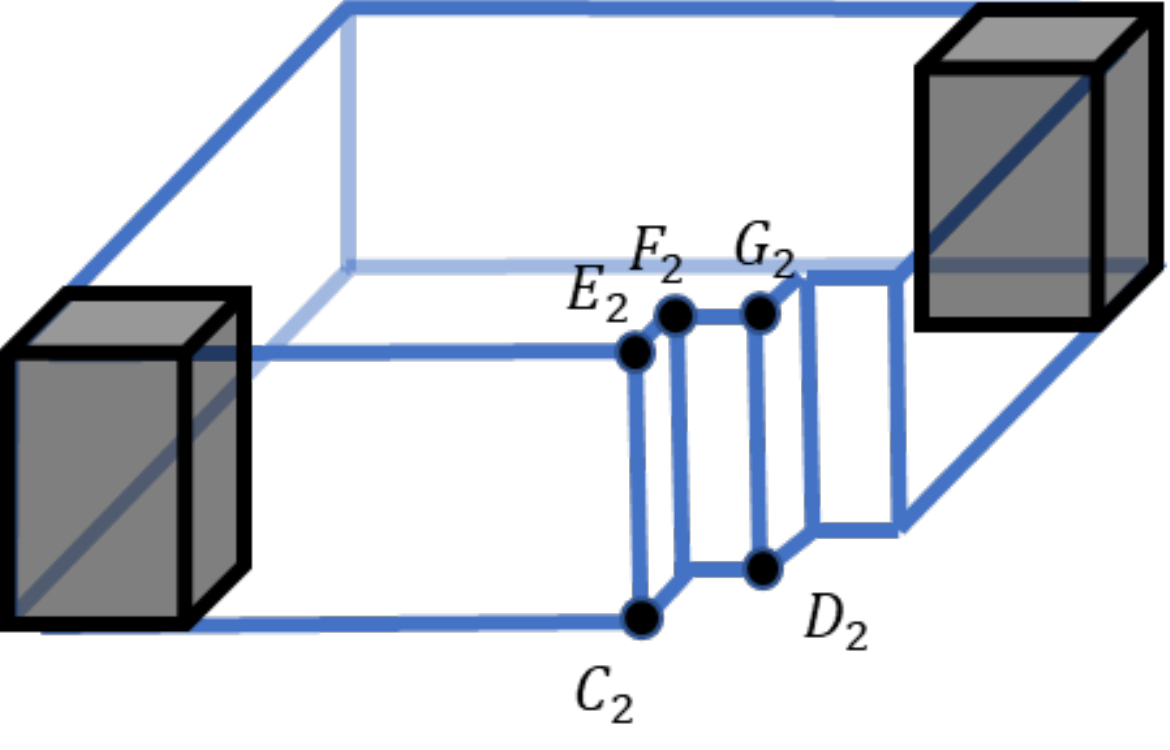}}\\
\sidesubfloat[]{\includegraphics[scale=0.28]{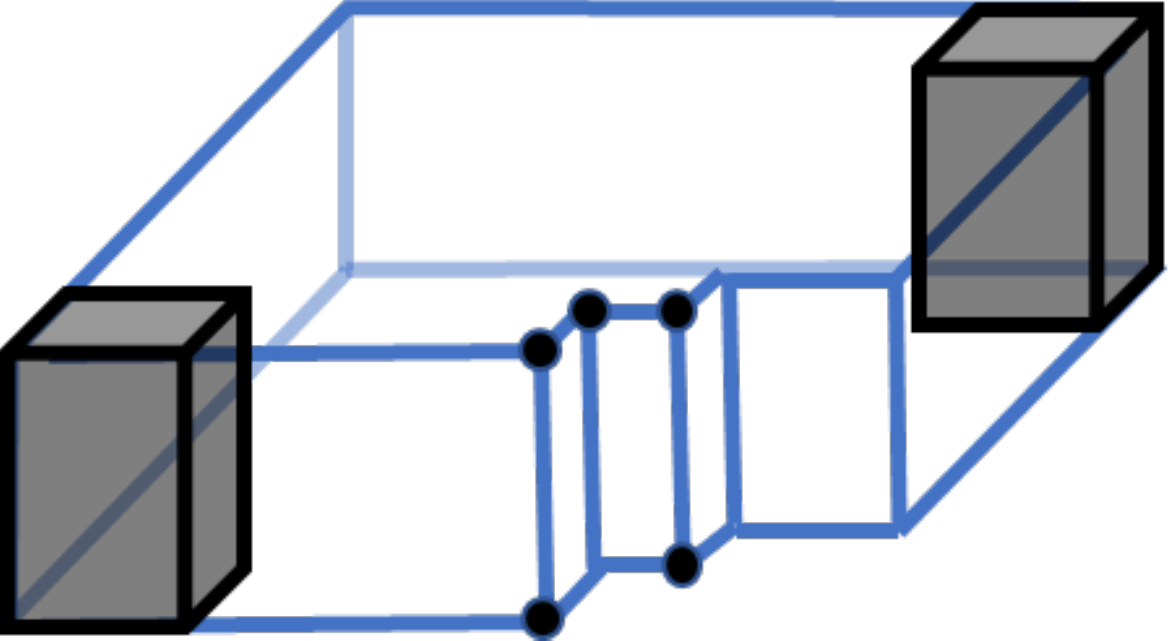}}
\sidesubfloat[]{\includegraphics[scale=0.28]{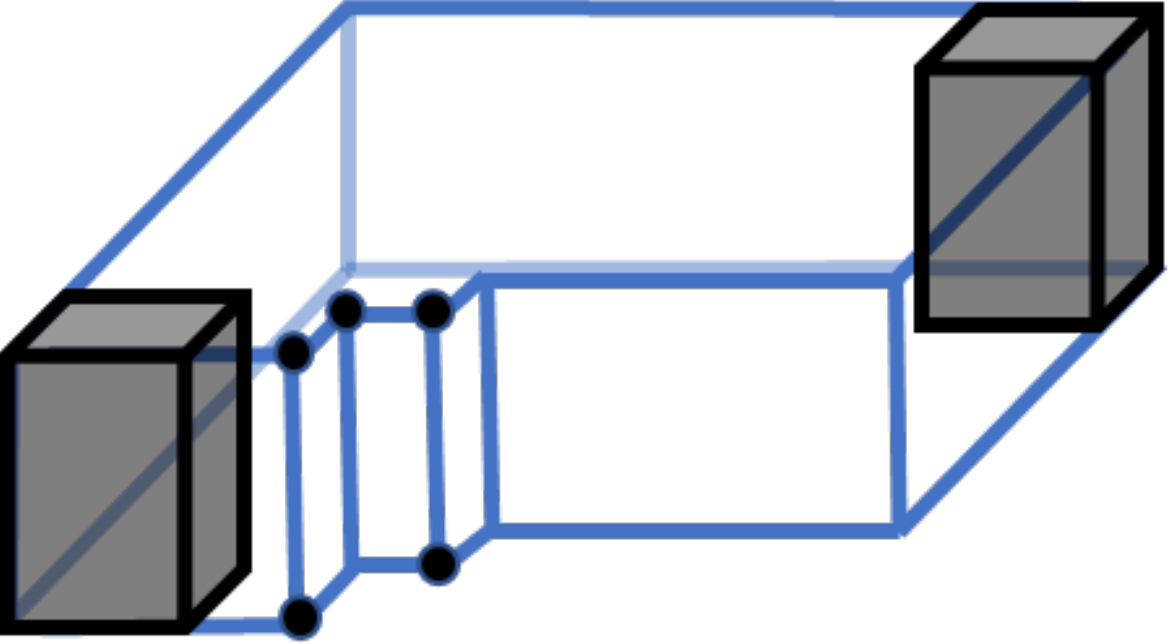}}
\sidesubfloat[]{\includegraphics[scale=0.28]{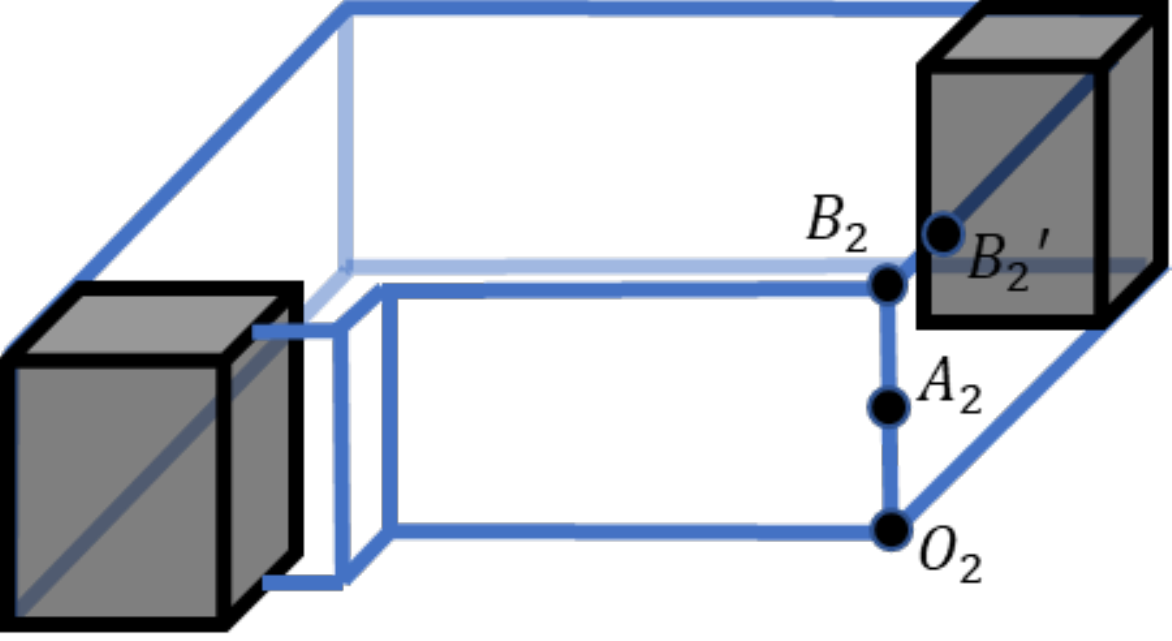}}
\sidesubfloat[]{\includegraphics[scale=0.28]{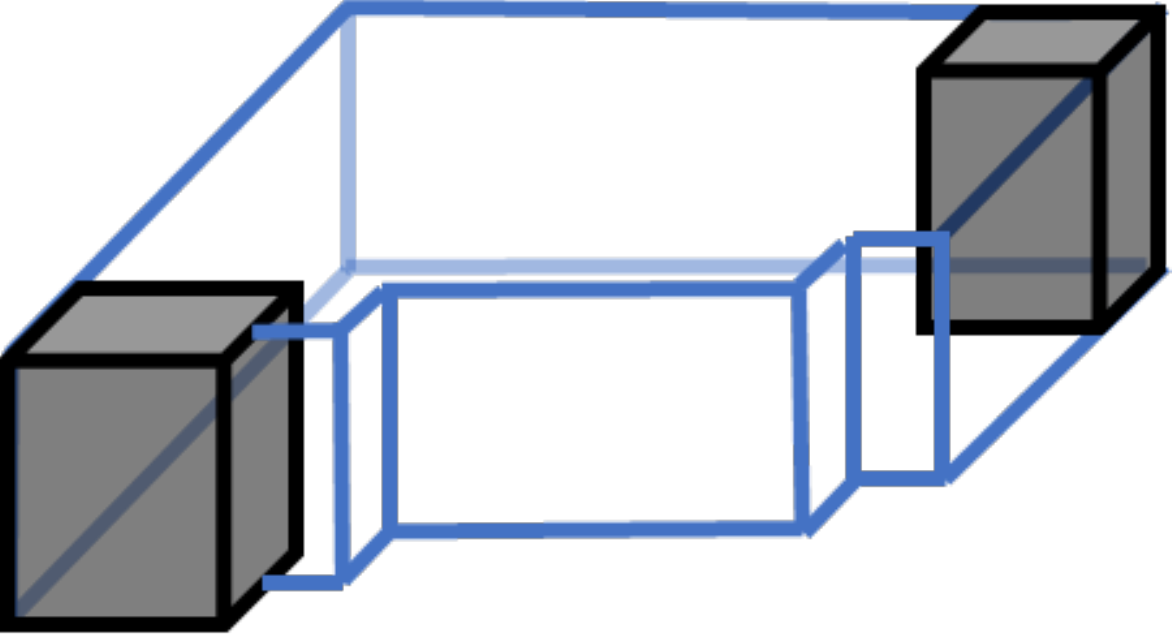}}\\
\sidesubfloat[]{\includegraphics[scale=0.28]{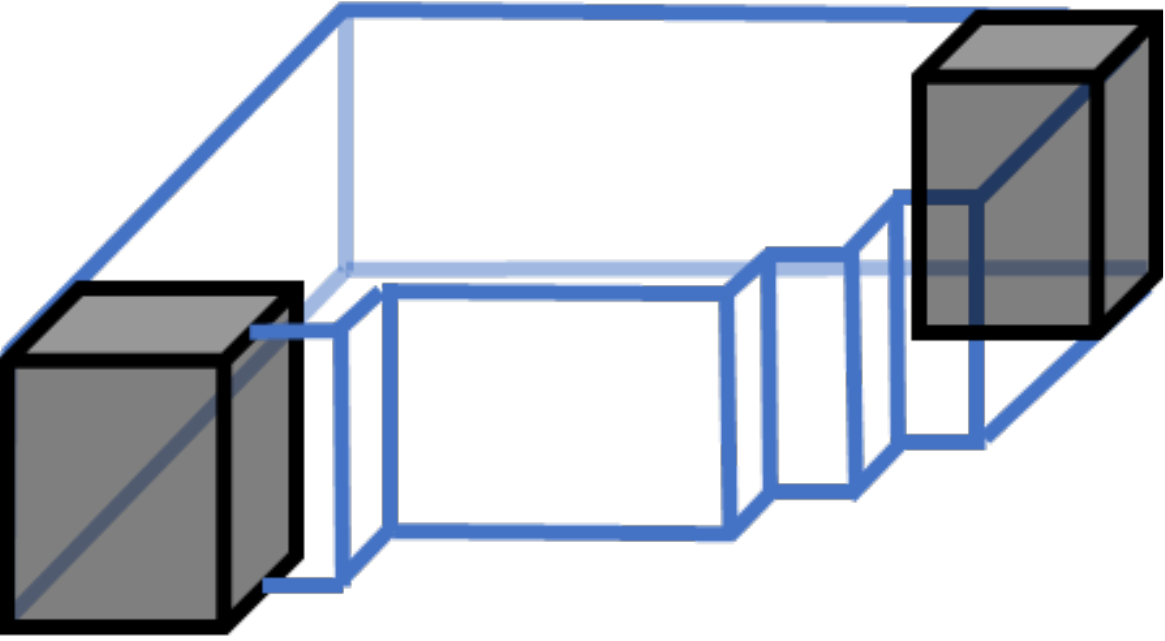}}
\sidesubfloat[]{\includegraphics[scale=0.28]{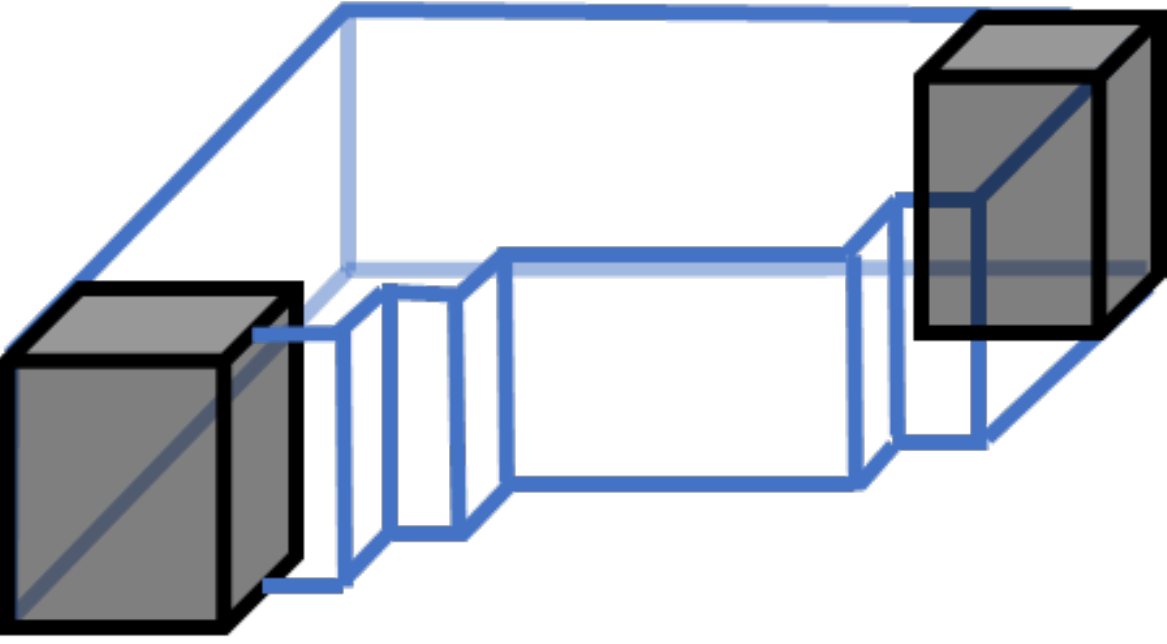}}
\sidesubfloat[]{\includegraphics[scale=0.28]{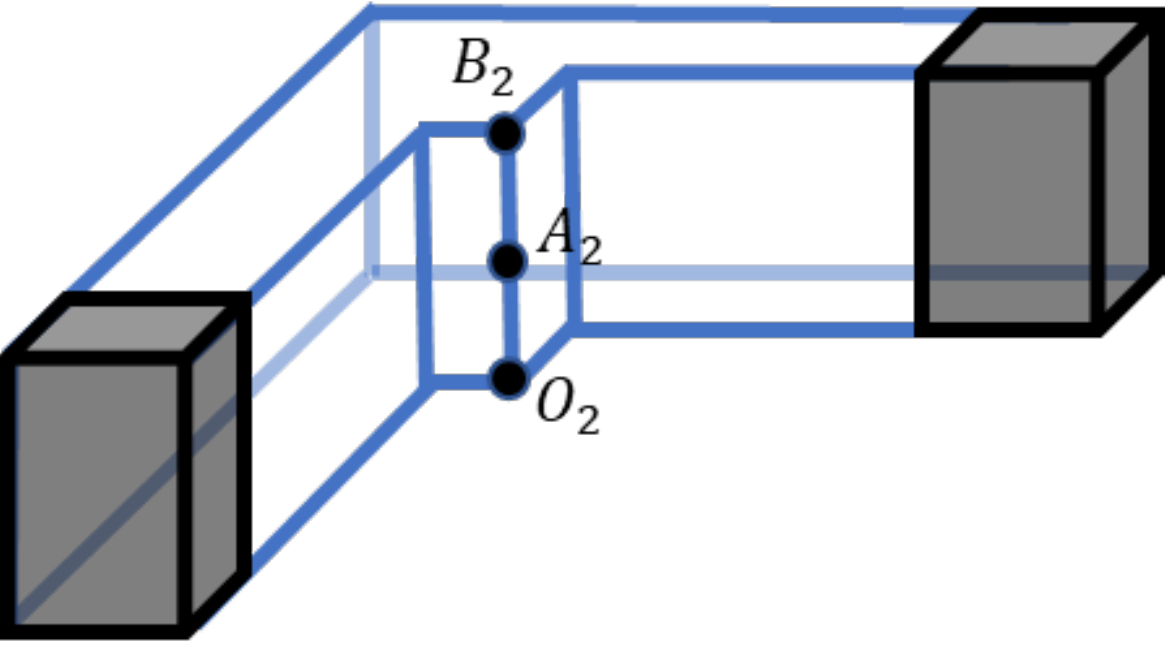}}
\sidesubfloat[]{\includegraphics[scale=0.28]{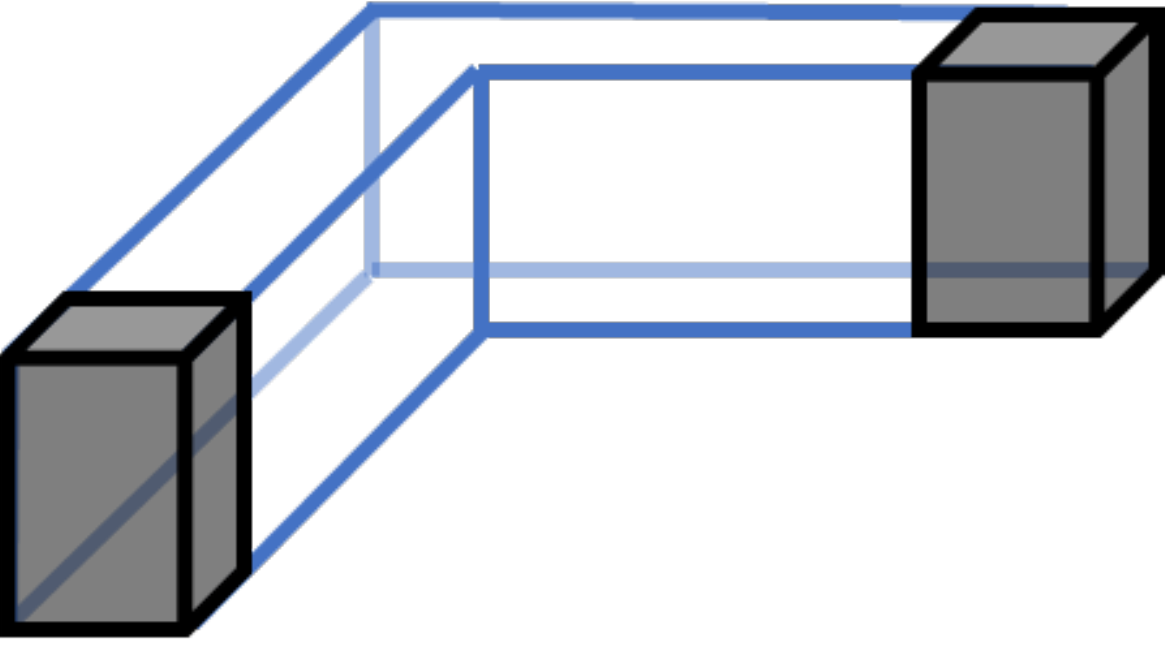}}
\caption{Steps in deformation of the first $xy$ planar box. There is no good edge for cleaning, so we employ higher order commutation constraints.}
\label{CC8_p1e}
\end{figure}
For the configuration of excitations shown in Fig.~\ref{CC8_p1}~(c), the deformation procedure is more complicated. We explain it in detail in Fig.~\ref{CC8_p1e}. In Fig.~\ref{CC8_p1e}~(a), $[O_1,IZ]=0$ implies $O_1=XI$ or $O_1=II$. If $O_1=XI$, it can be cleaned by multiplying with an $X$ stabilizer. Subsequent corners like $A$ have the same constraint and can be cleaned in the same manner. When only the top corner $B$ is left, it obeys constraints $[B,IZ]=0$ and $[B,ZZ]=0$ due to the commutation with the edges $IZ-II$ and $ZZ-II$ of the $Z$-stabilizer. For example, $[B\otimes B^\prime, ZZ\otimes II]=0$. In this manner, the edge joining $O_1$, $A$ and $B$ is cleanable and hence, we get Fig.~\ref{CC8_p1e}~(b). Now, $C$ and $D$ as shown have the same constraint as $O_1$ and can be cleaned. Subsequent equivalent cleaning leads to configuration in Fig.~\ref{CC8_p1e}~(c) where $[E,IZ]$, $[G,IZ]=0$, $[G,ZZ]=0$ and $[E\otimes F\otimes G, ZZ\otimes II\otimes IZ]=0$. Since $[G,IZ]=0$, this implies that $[E,ZZ]=0$. The constraints imply $E=G=II$. Hence, the columns starting from $C$ and $D$ are cleaned. Similarly, we can clean the columns containing $C_2,E_2$ and $D_2,G_2$ respectively. An equivalent procedure leads to the configuration in Fig.~\ref{CC8_p1e}~(g). Again, the edge joining $O_2$, $A_2$ and $B_2$ is cleanable. Subsequent cleaning using the cleaning of columns as described, eventually leads to the configuration shown in Fig.~\ref{CC8_p1e}~(k) which has a cleanable edge joining $B_2$, $A_2$ and $O_2$. Hence, we get the flat-rod configuration of Fig.~\ref{CC8_p1e}~(l). 

\begin{figure}[H]
\vspace{6mm}
\centering
\sidesubfloat[]{\includegraphics[scale=0.28]{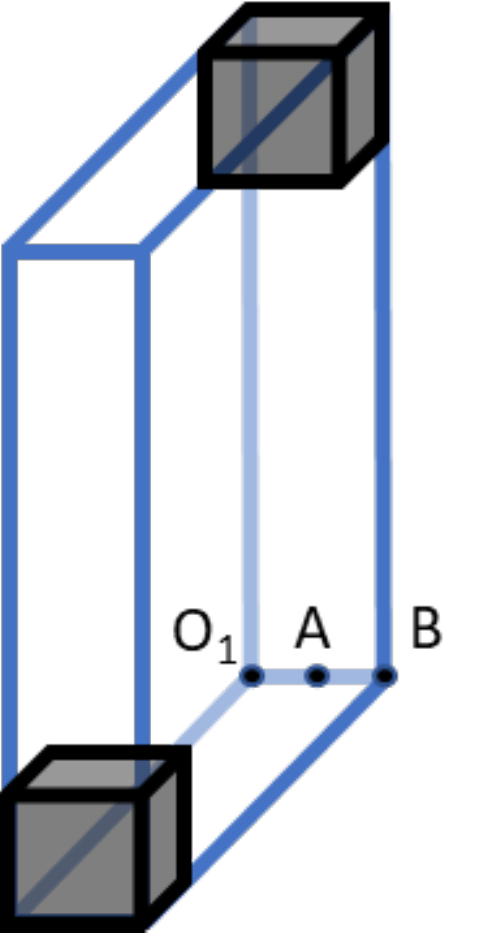}}
\sidesubfloat[]{\includegraphics[scale=0.28]{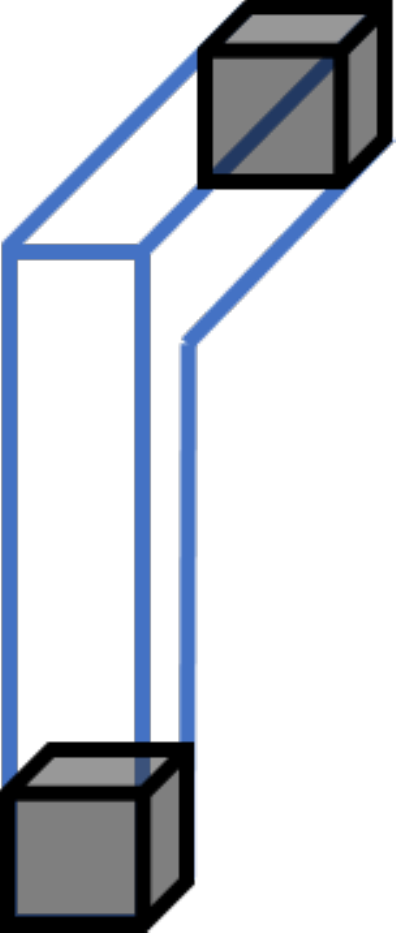}}
\sidesubfloat[]{\includegraphics[scale=0.28]{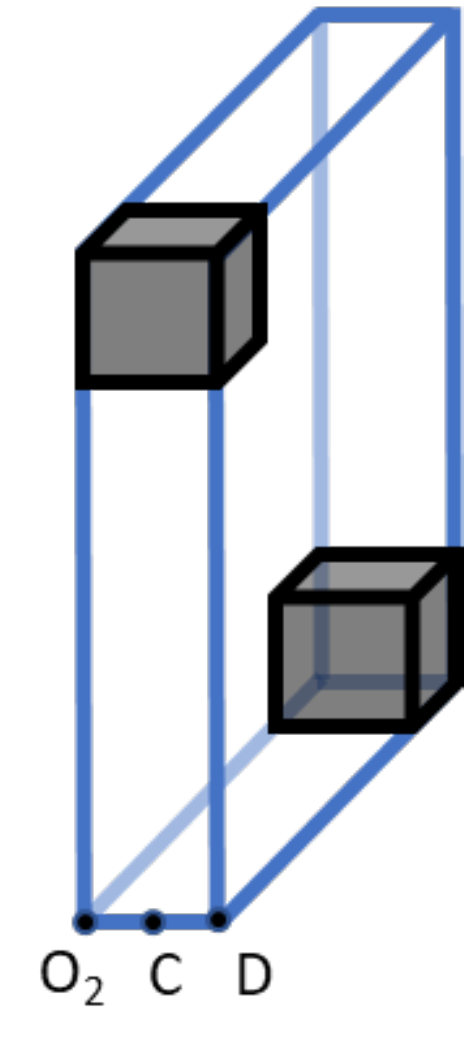}}
\sidesubfloat[]{\includegraphics[scale=0.28]{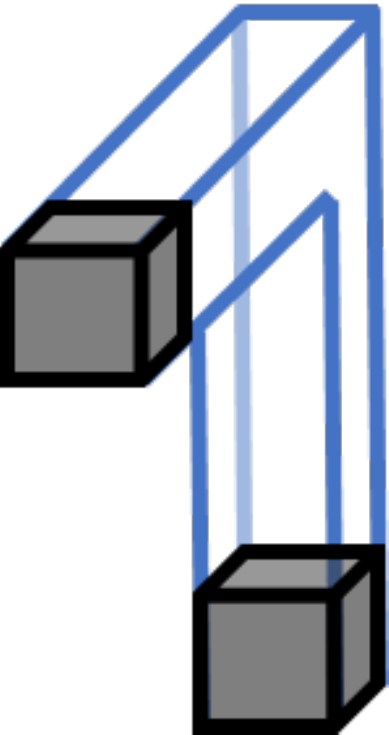}}
\caption{Deformation of $xz$ planar boxes}
\label{CC8_p2}
\end{figure}

In the pair-creation operator in Fig.~\ref{CC8_p2}~(a), $[O_1,ZZ]=0$ implies $O_1=XX,II$. Hence, $O_1$ can be cleaned by multiplying the $X$ stabilizer. Subsequent cleaning of an equivalent corner in this manner will finally leave a pair of vertices $A$ and $B$ which should satisfy the constraints $[A\otimes B, IZ\otimes ZZ]=0$, $[A,ZZ]=0$ and $[B,IZ]=0$. Together these imply that $A\otimes B$ is either $II\otimes II$ or $XX \otimes XI$ which can be again cleaned by multiplying with $X$ stabilizers. Hence, the whole edge has been cleaned. Subsequent cleaning of protruding edges in this manner will lead to the flat-rod configuration as shown. Similarly for the configuration shown in Fig.~\ref{CC8_p2}~(c), one can clean $O_2$ as it obeys $[O_2,ZI]=0$ which implies it is either $IX$ or $II$ and hence can be cleaned by multiplying with $X$ stabilizers. $[C\otimes D, IZ\otimes ZI]=0$, $[C,ZI]=0$ and $[D,IZ]=0$ imply $C\otimes D= IX\otimes XI$ which can be again cleaned by multiplying with $X$ stabilizers. Subsequent cleaning leads to the configuration shown in Fig.~\ref{CC8_p2}~(d).  

\begin{figure}[H]
\vspace{6mm}
\centering
\sidesubfloat[]{\includegraphics[scale=0.28]{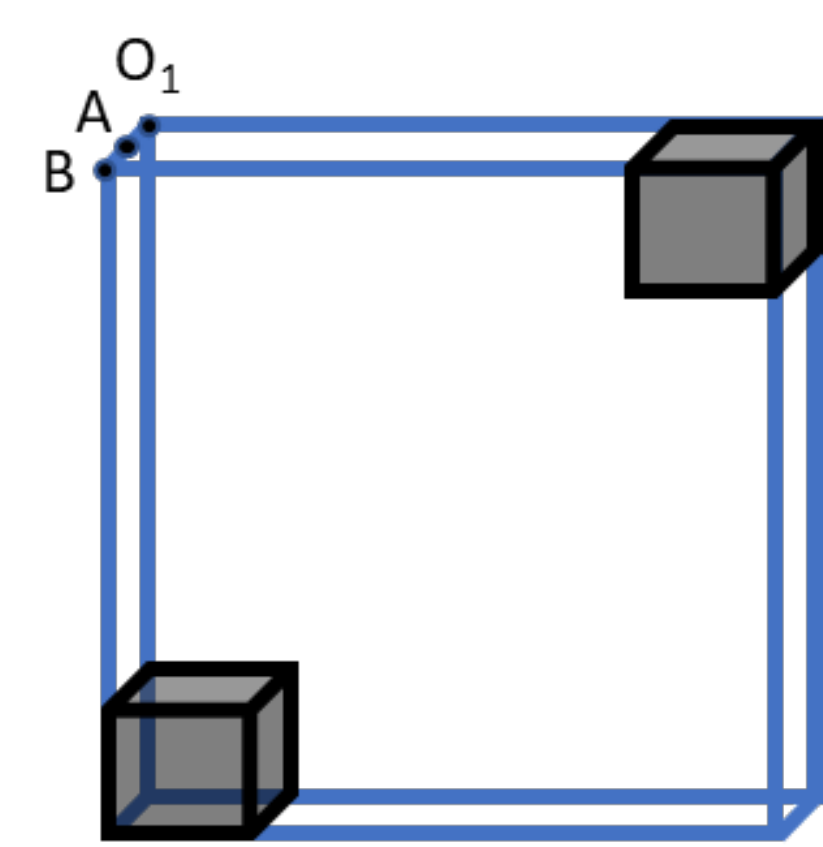}}
\sidesubfloat[]{\includegraphics[scale=0.28]{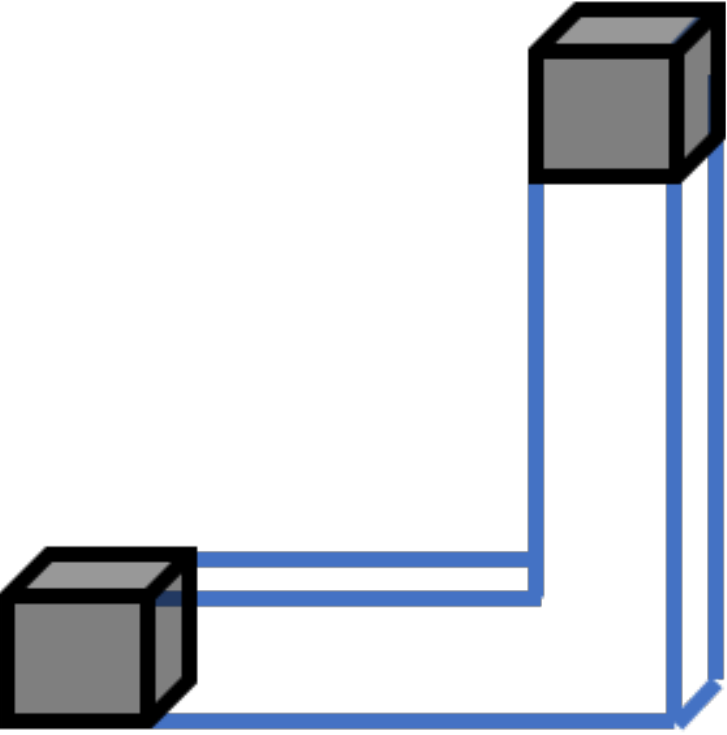}}
\sidesubfloat[]{\includegraphics[scale=0.28]{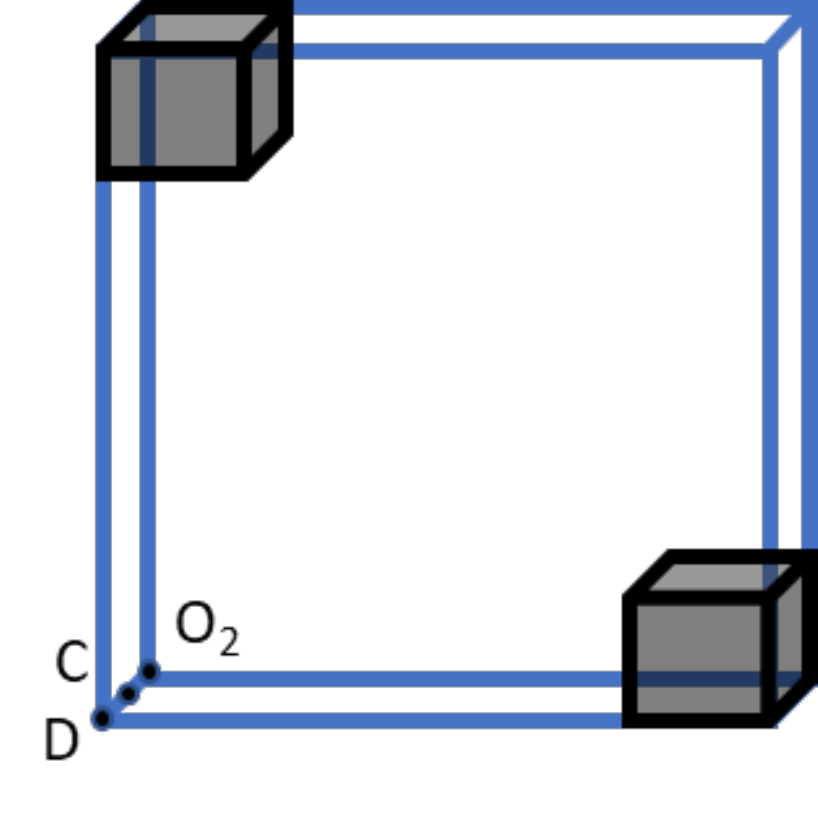}}
\sidesubfloat[]{\includegraphics[scale=0.28]{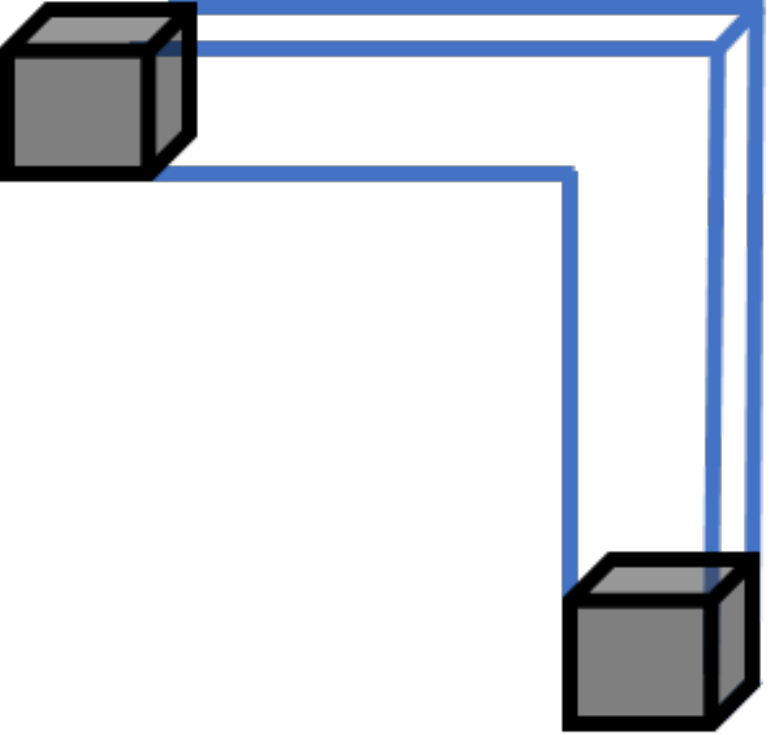}}\caption{Deformation of $yz$ planar boxes}
\label{CC8_p3}
\end{figure}
In Fig.~\ref{CC8_p3}~(a), since $[O_1,ZZ]=0$, $O_1$ and resulting corners can be cleaned as before by multiplying $X$ stabilizer. Final pair of vertices $A$ and $B$ can be cleaned because $[A\otimes B, IZ\otimes ZZ]=0$, $[A,ZZ]=0$ and $[B,IZ]=0$ i.e. since $IZ$ and $ZZ$ are independent. Hence, subsequent cleaning of edges leads to the flat-rod configuration in Fig.~\ref{CC8_p3}~(b). Similarly, in Fig.~\ref{CC8_p3}~(c), the pair of vertices of $Z$-stabilizer that impose commutation constraints on the edge of the box joining $O_2$, $C$ and $D$ are $ZZ$ and $ZI$ which are independent. Hence the edge of the box as shown is cleanable. Subsequent cleaning leads to the flat-rod configuration shown in Fig.~\ref{CC8_p3}~(d).

The cleaning of 3D configurations can be understood in terms of the cleaning of planar boxes we have done in the previous figures. We first clean the box containing the pair of excitation in 3D to 3 planar boxes joined together. This can be done by first cleaning a corner, say $O_1$ and all subsequent corners that creates. The planar boxes thus produced can be cleaned using the strategy empoyed for the planar boxes shown in previous figures leading to the final configuration of flat-rods. 

\begin{figure}[H]
\vspace{6mm}
\centering
\sidesubfloat[]{\includegraphics[scale=0.28]{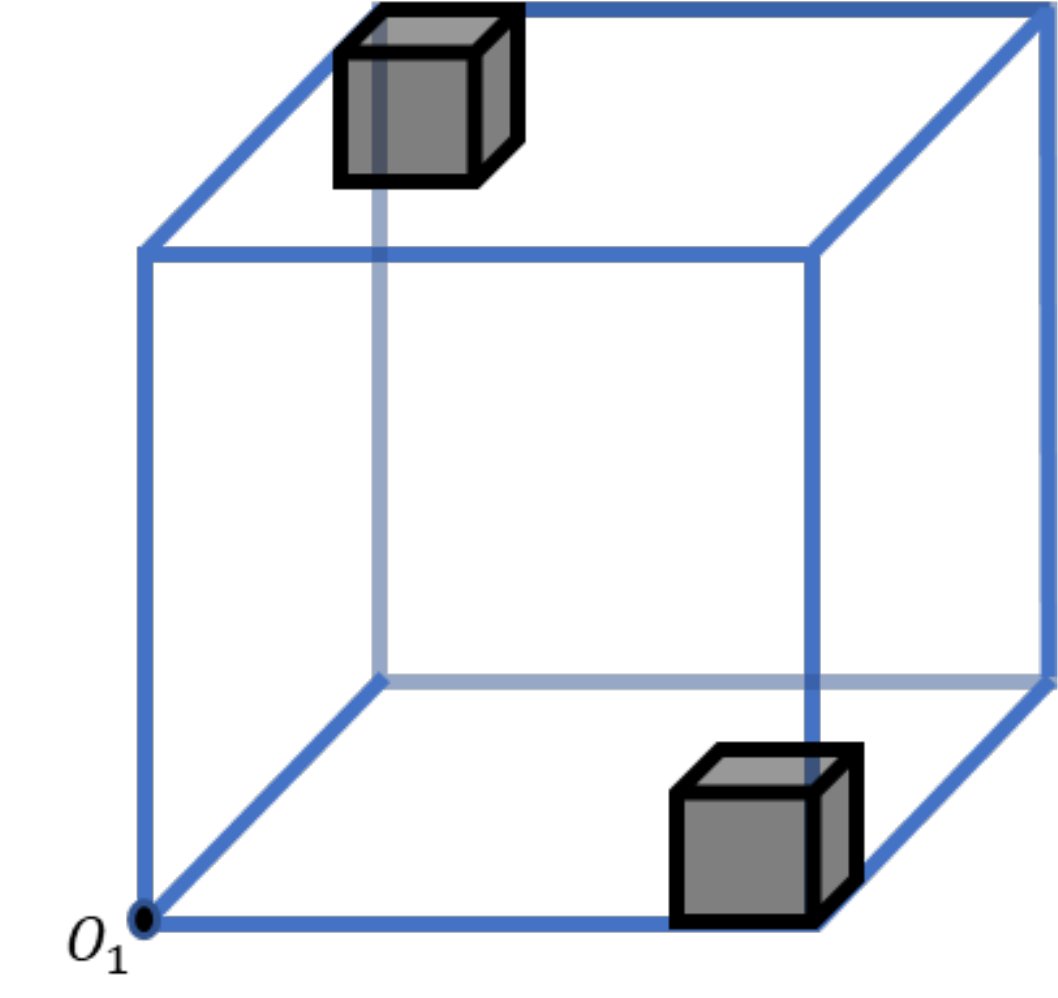}}
\sidesubfloat[]{\includegraphics[scale=0.28]{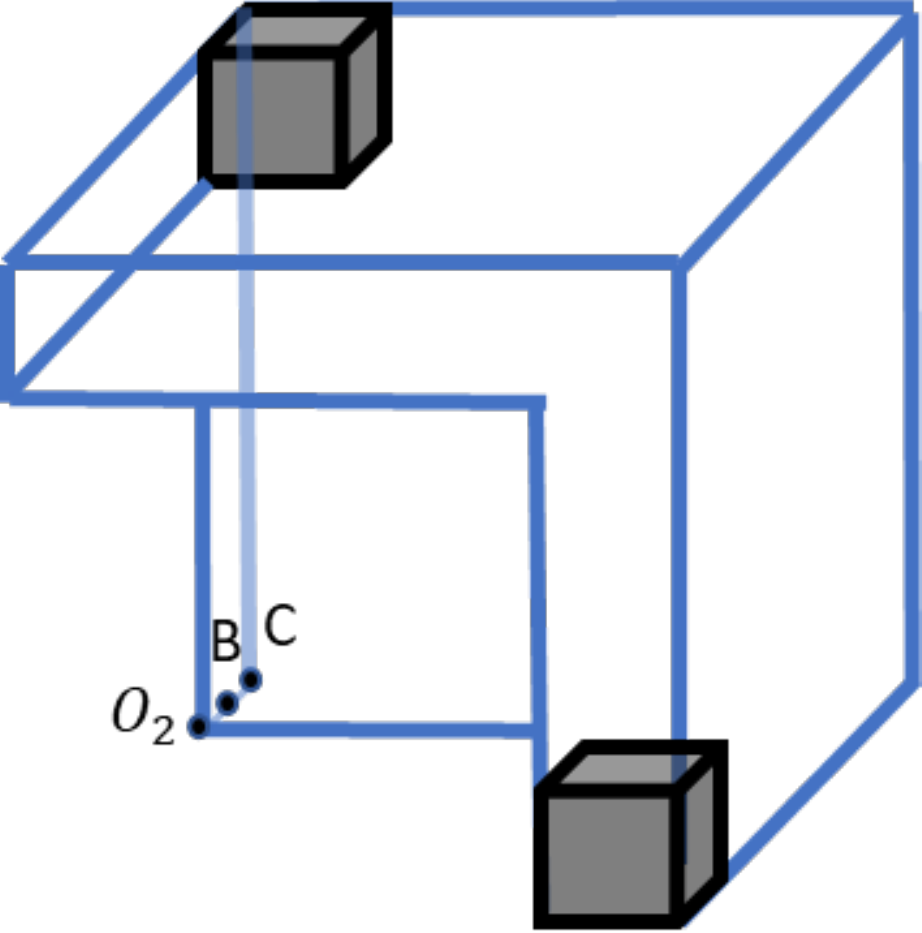}}
\sidesubfloat[]{\includegraphics[scale=0.28]{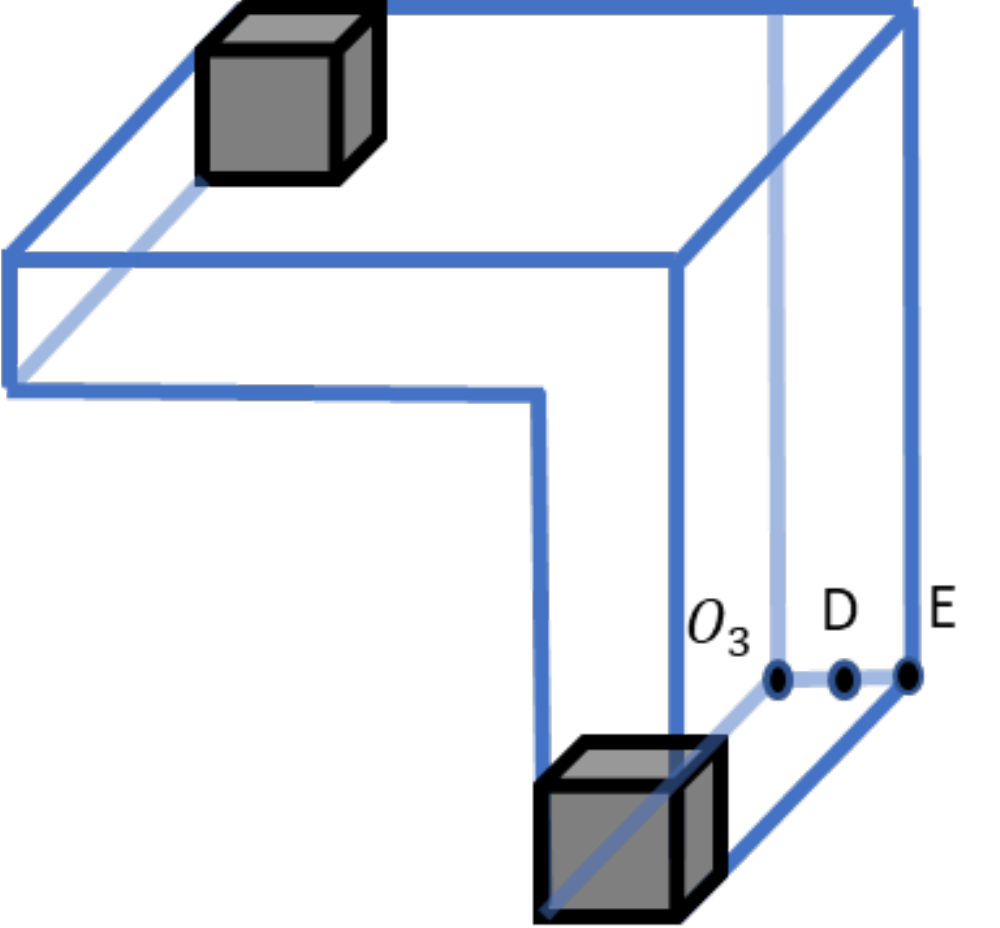}}
\sidesubfloat[]{\includegraphics[scale=0.28]{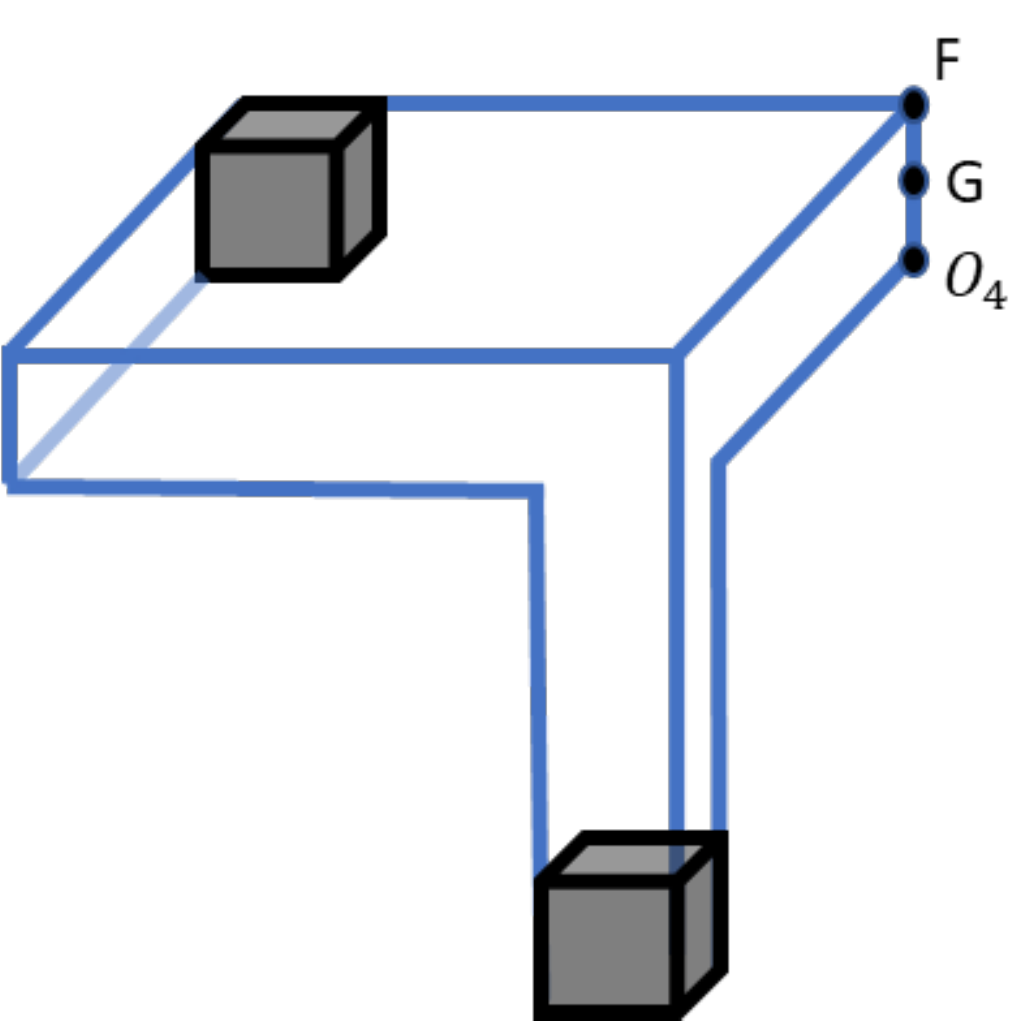}}
\sidesubfloat[]{\includegraphics[scale=0.28]{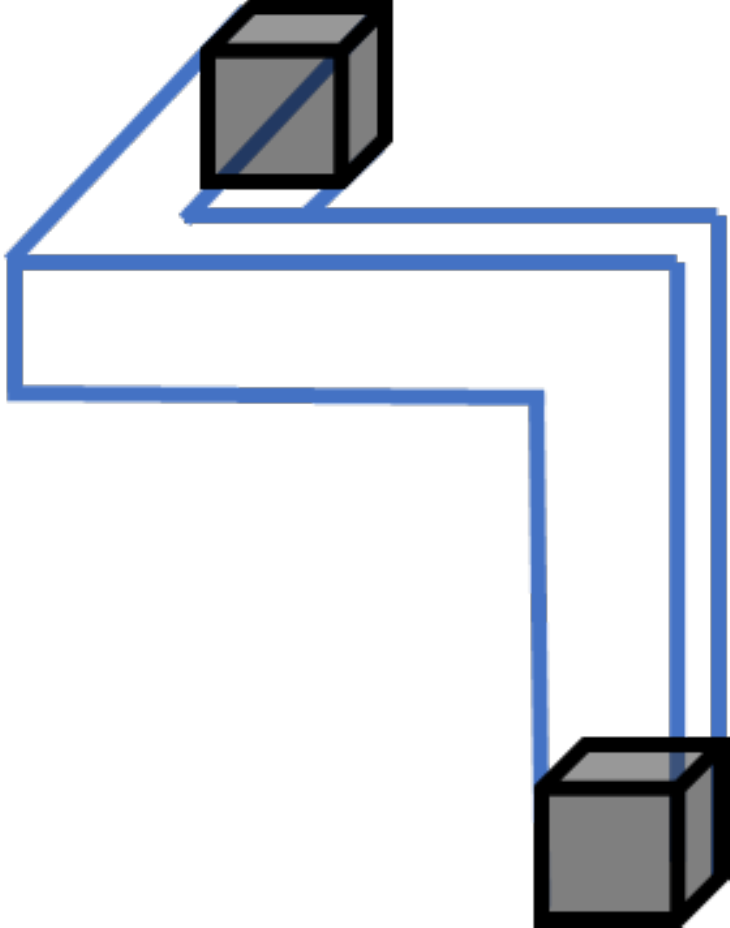}}
\caption{Deformation of 3D configuration of excitations of Fig.~\ref{excitation_config}~(a)}
\label{CC8_3D1}
\end{figure}
For example, in the first 3D configuration shown in Fig.~\ref{CC8_3D1}~(a), we clean the corner $O_1$ by multiplying with an $X$ stabilizer since $[O_1, ZI]=0$ implies $O_1=IX$ or $II$. From the cleaning in Fig.~\ref{CC8_p3}~(c), we know that the edge joining $O_2$, $B$ and $C$ in Fig.~\ref{CC8_3D1}~(b) is cleanable and hence the plane can be cleaned to reach the configuration in Fig.~\ref{CC8_3D1}~(c). Again, from the cleaning in Fig.~\ref{CC8_p2}~(a), we know that the edge joining $O_3$, $D$ and $E$ is cleanable and hence, subsequent cleaning of such edges gives Fig.~\ref{CC8_3D1}~(a). Using the cleaning done for Fig.~\ref{CC8_p1}~(c), we arrive at the flat-rod configuration shown. 

\begin{figure}[H]
\vspace{6mm}
\centering
\sidesubfloat[]{\includegraphics[scale=0.28]{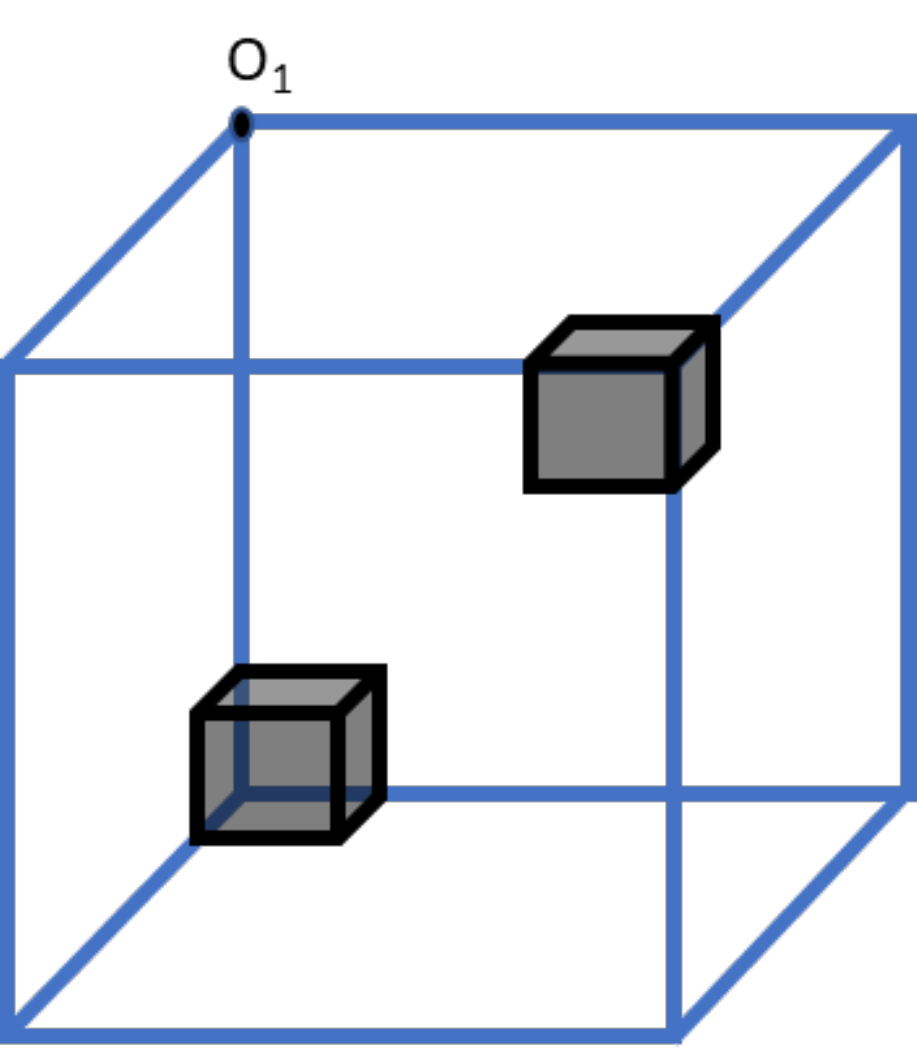}}
\sidesubfloat[]{\includegraphics[scale=0.28]{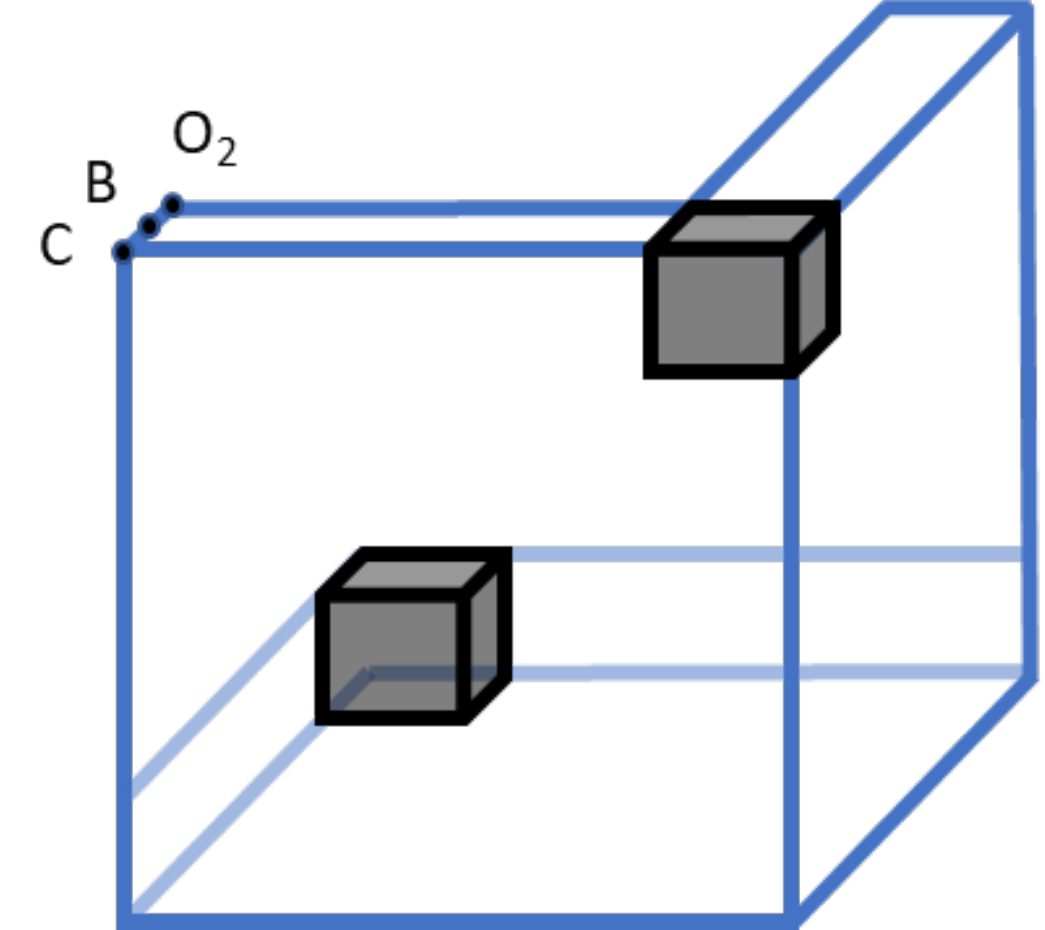}}
\sidesubfloat[]{\includegraphics[scale=0.28]{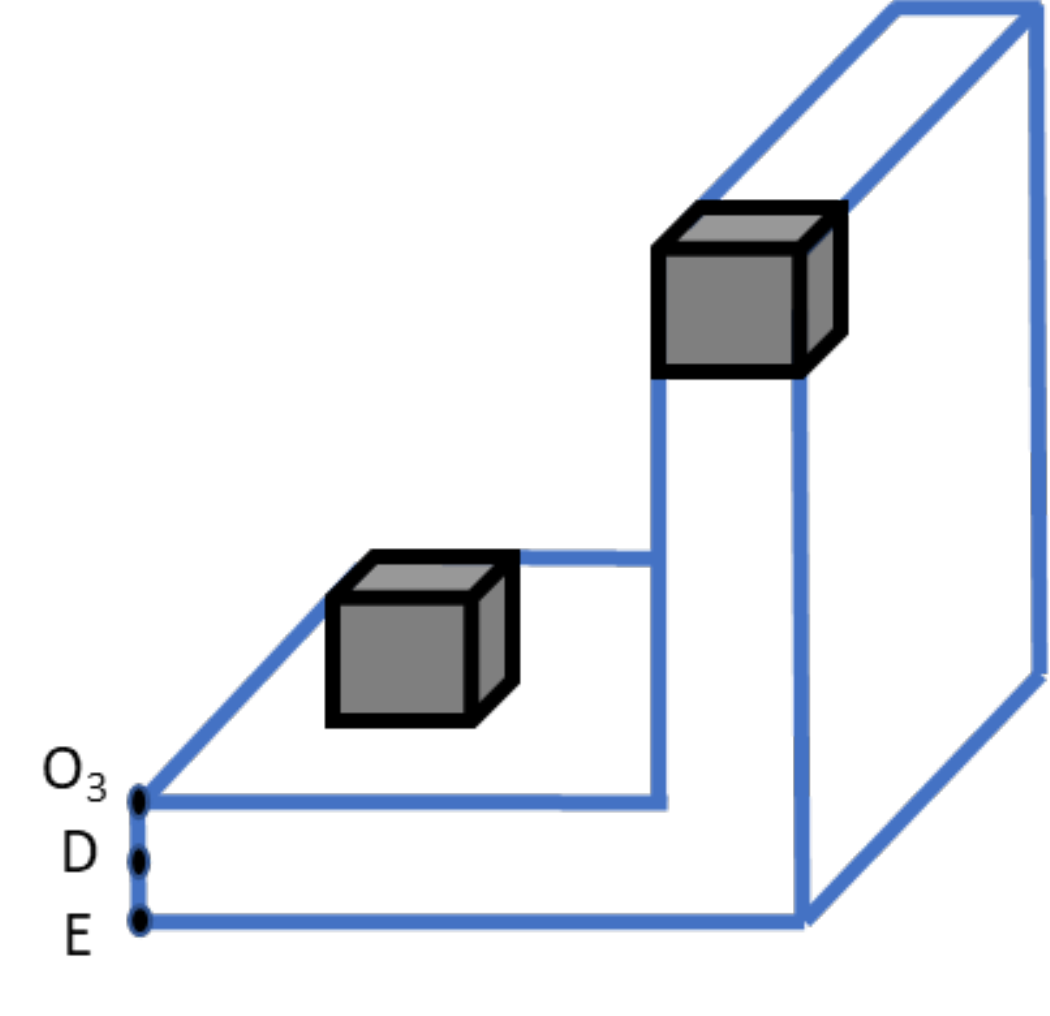}}
\sidesubfloat[]{\includegraphics[scale=0.28]{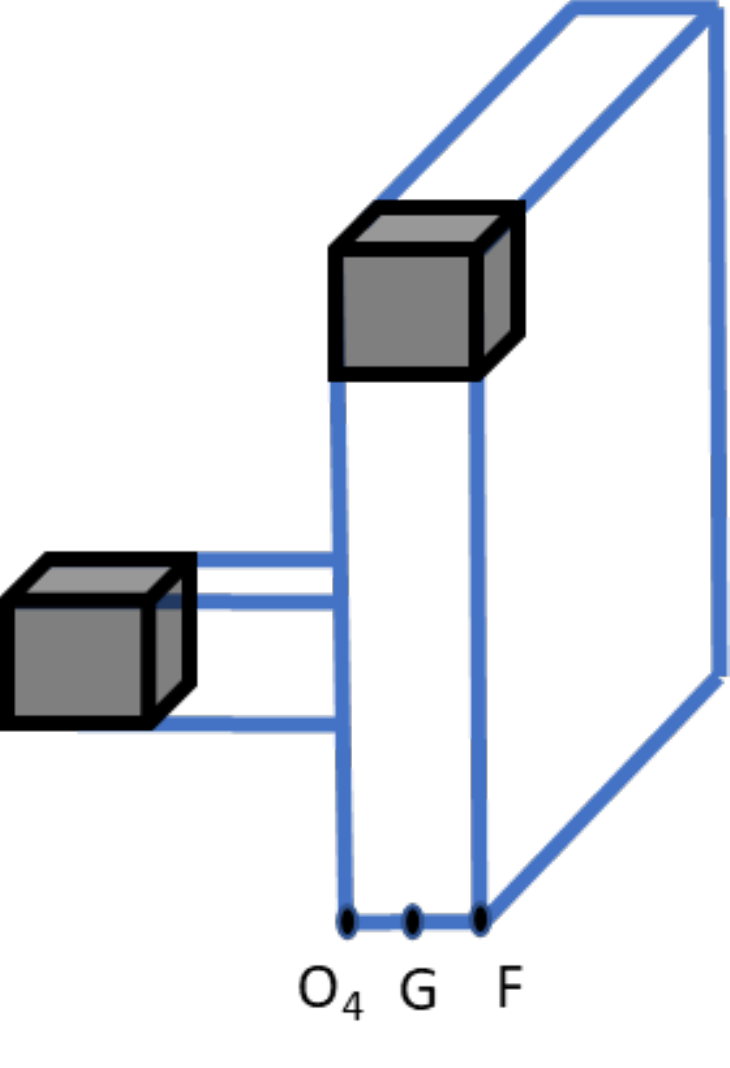}}
\sidesubfloat[]{\includegraphics[scale=0.28]{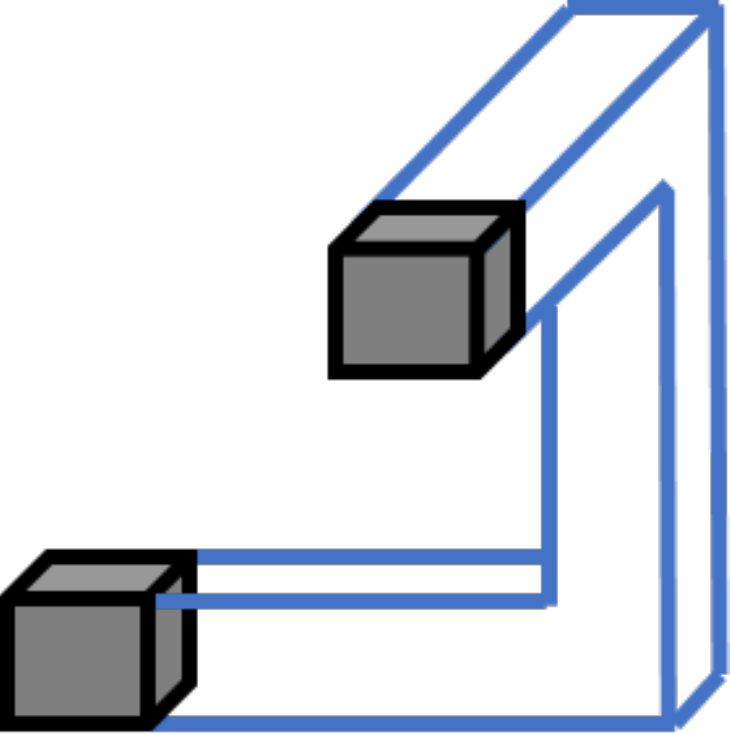}}
\caption{Deformation of the 3D configuration of excitations from Fig.~\ref{excitation_config}~(b)}
\label{CC8_3D2}
\end{figure}

Similar to the previous configuration of excitations, the configuration in Fig.~\ref{CC8_3D2}~(a) can be cleaned via the corners, $O_1$ and so on, to arrive at a configuration with three planar boxes. All the edges as marked by vertices on the three planar boxes are cleanable as shown before and hence, the configuration is easily cleaned to a flat-rod configuration. 

\begin{figure}[H]
\vspace{6mm}
\centering
\sidesubfloat[]{\includegraphics[scale=0.28]{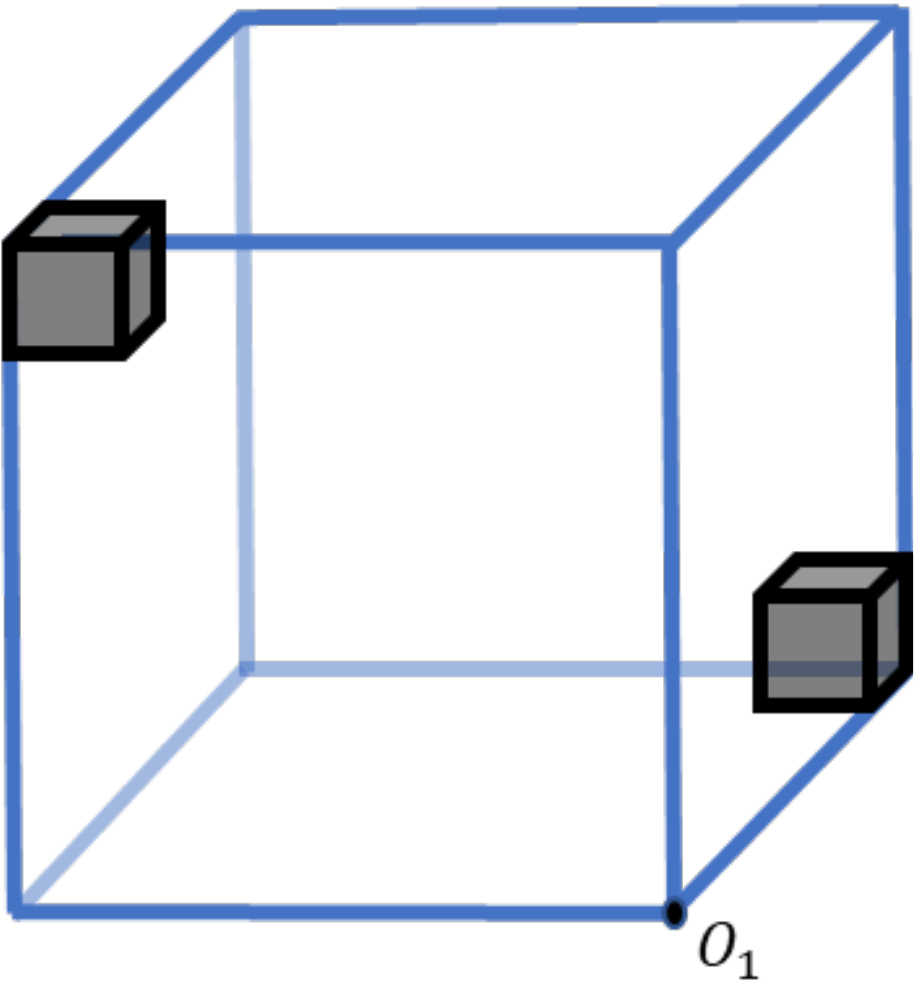}}
\sidesubfloat[]{\includegraphics[scale=0.28]{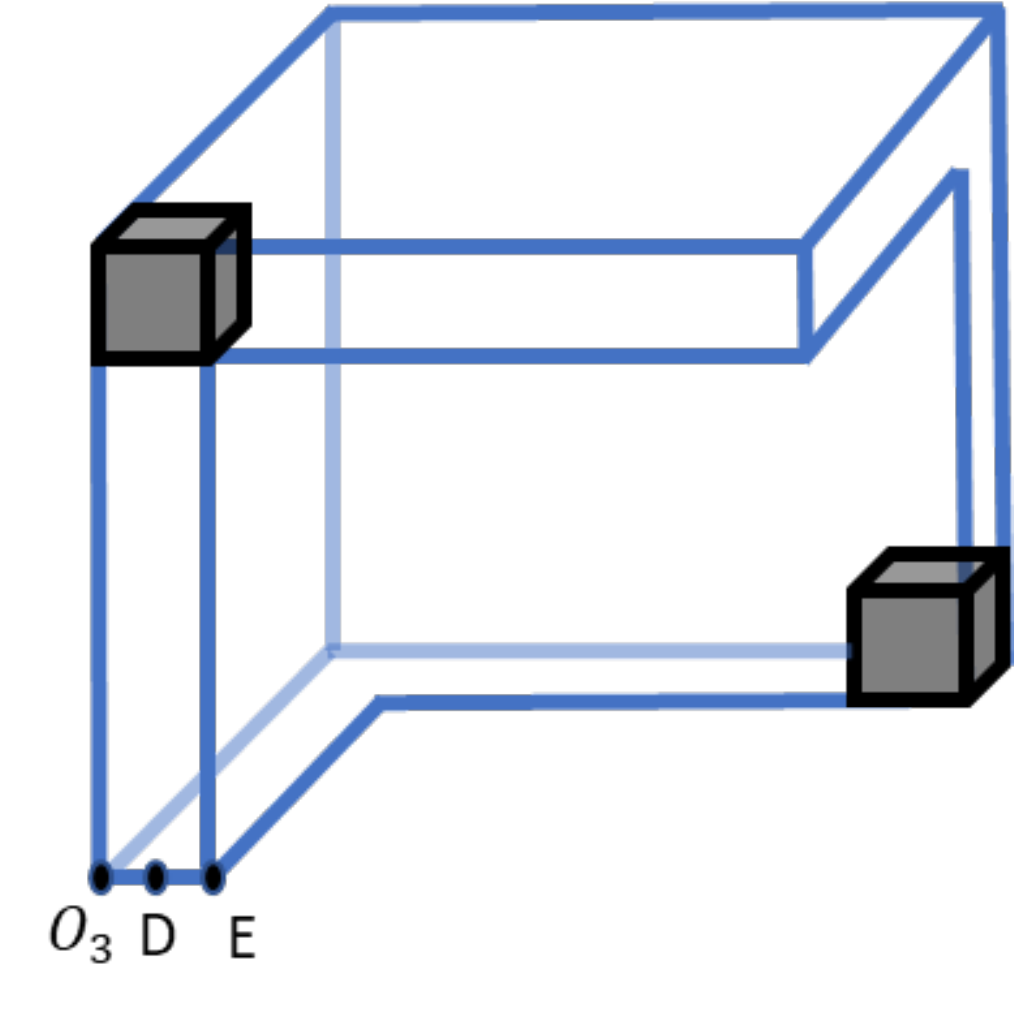}}
\sidesubfloat[]{\includegraphics[scale=0.28]{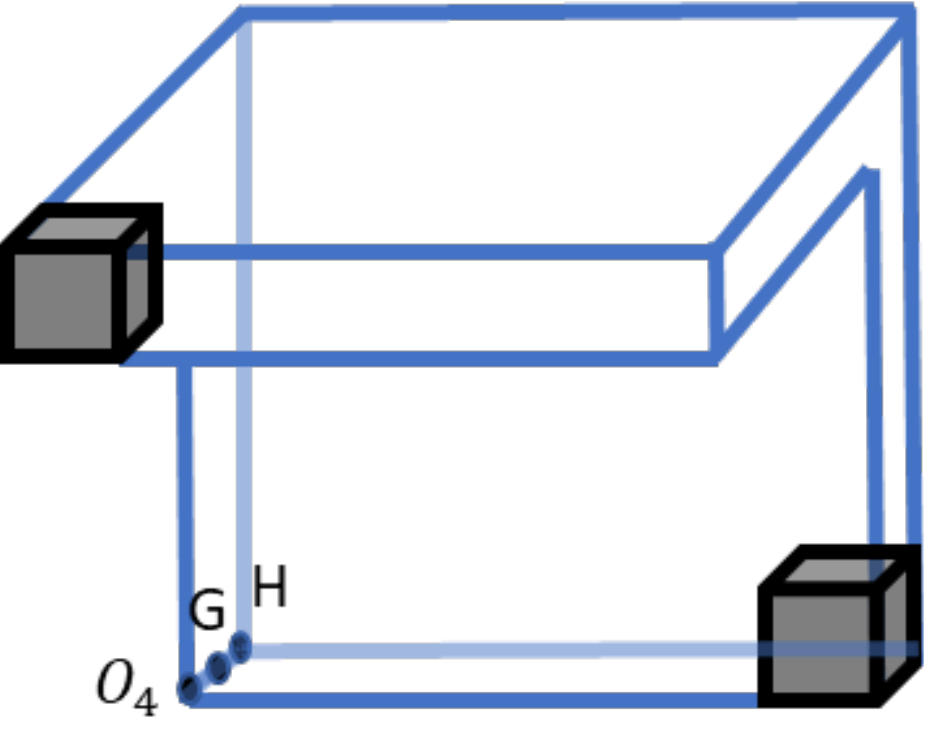}}
\sidesubfloat[]{\includegraphics[scale=0.28]{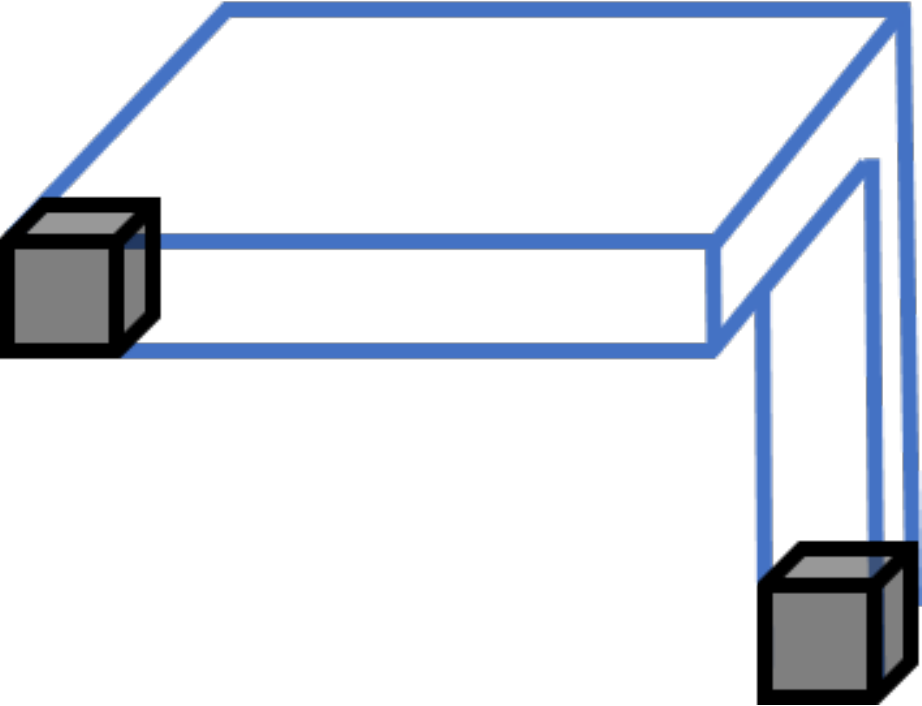}}
\sidesubfloat[]{\includegraphics[scale=0.28]{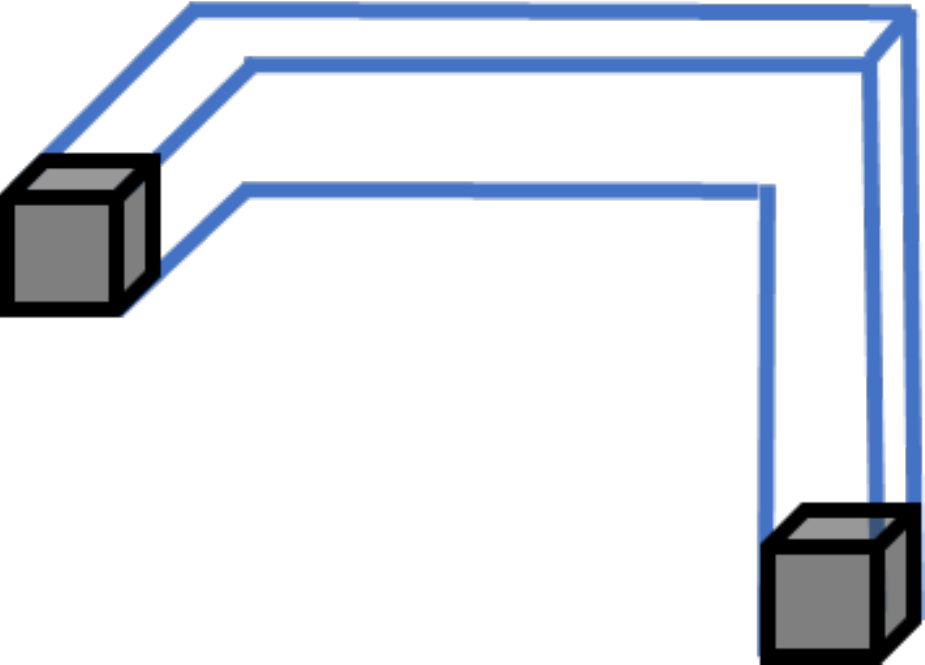}}
\caption{Deformation of the 3D configuration of excitations in Fig.~\ref{excitation_config}~(c)} 
\label{CC8_3D3}
\end{figure}
A similar process can be followed for the configuration in Fig.~\ref{CC8_3D3}~(a). Due to the cleanability of $O_1$, $O_3-D-E$ and $O_4-G-H$, we arrive at the configuration in Fig.~\ref{CC8_3D3}~(d). Now, we have an $xy$ planar box equivalent to the one for which we used higher order commutation constraints. Using the same cleaning procedure, we can go from Fig.~\ref{CC8_3D3}~(d) to (e). 

\begin{figure}[H]
\vspace{6mm}
\centering
\sidesubfloat[]{\includegraphics[scale=0.28]{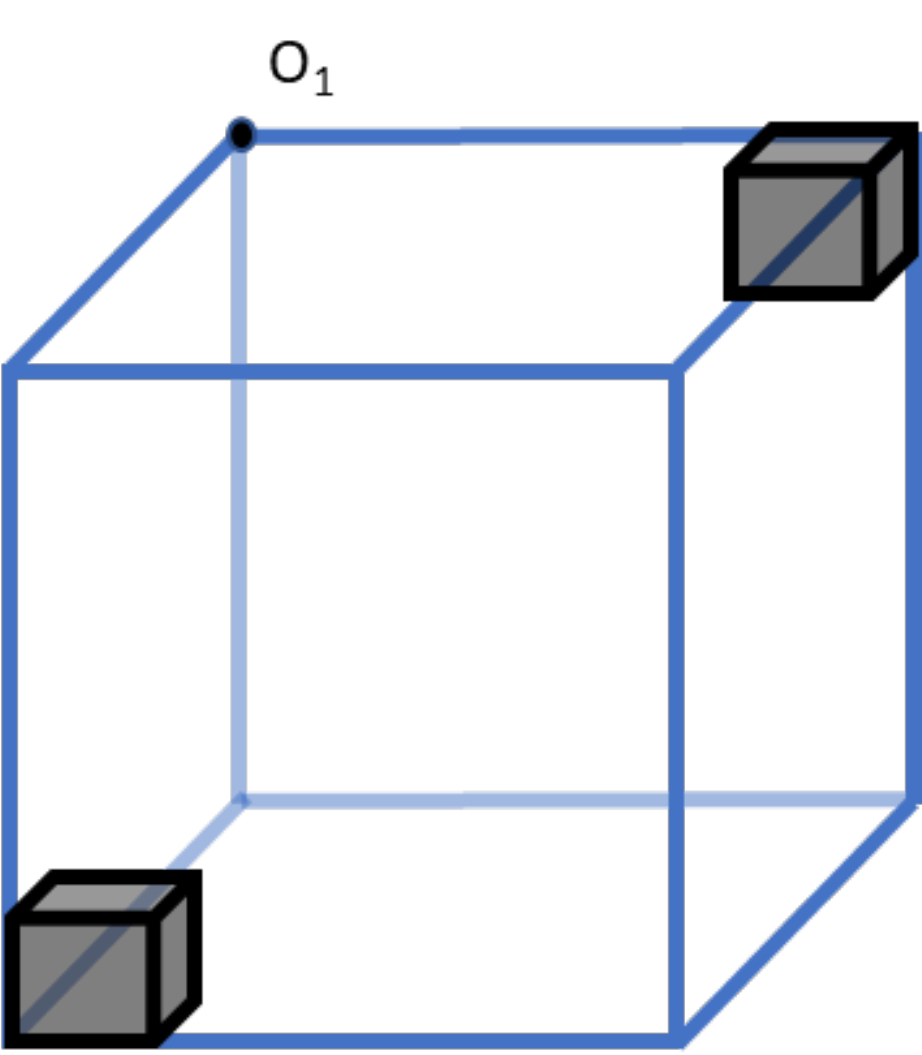}}
\sidesubfloat[]{\includegraphics[scale=0.28]{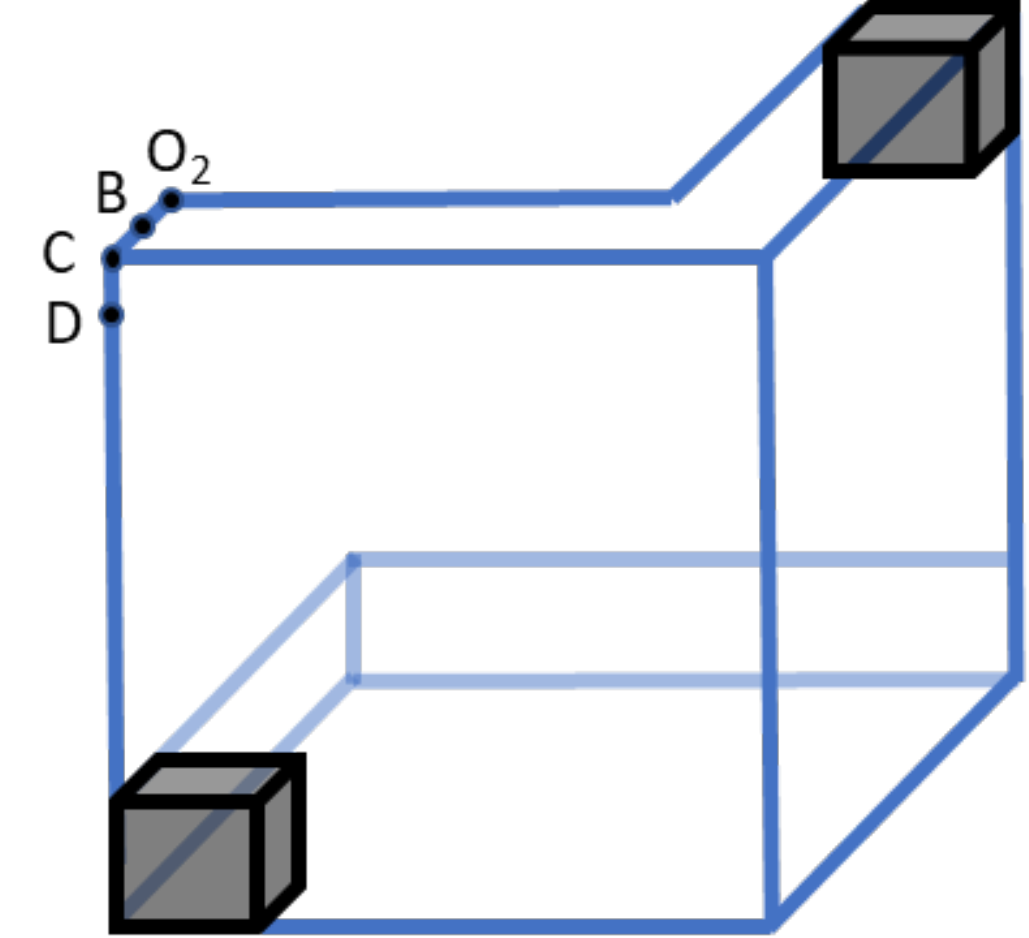}}
\sidesubfloat[]{\includegraphics[scale=0.28]{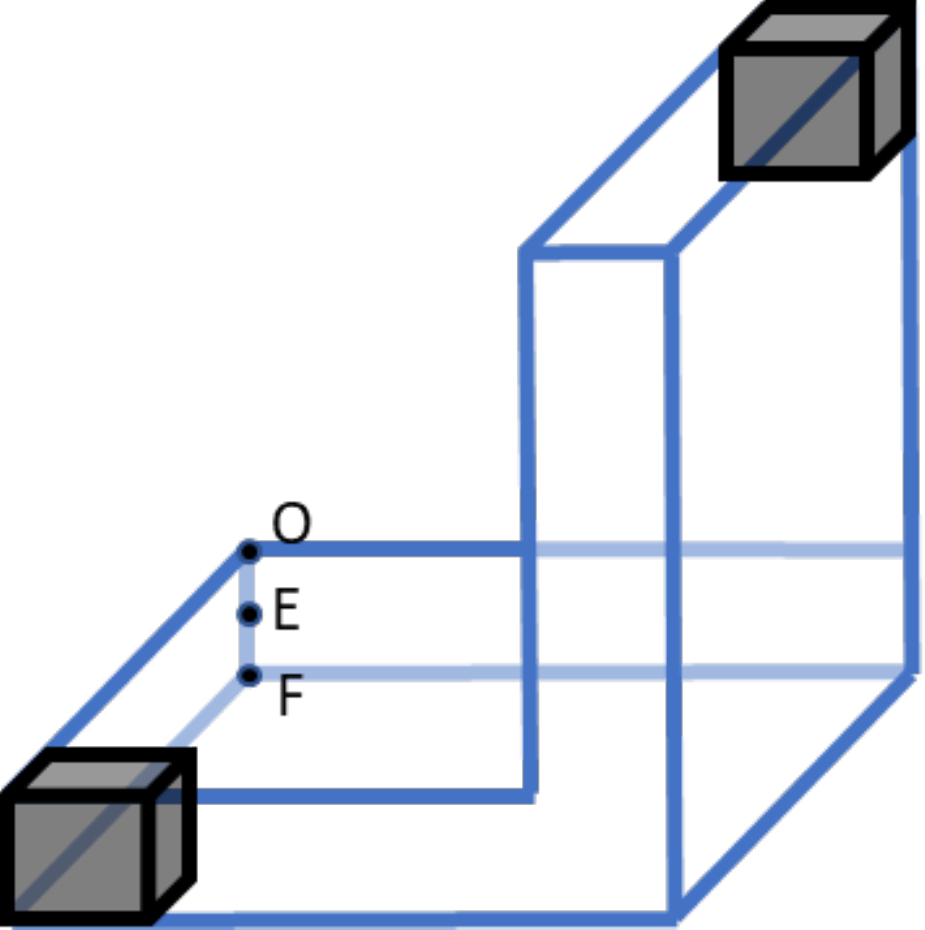}}
\sidesubfloat[]{\includegraphics[scale=0.28]{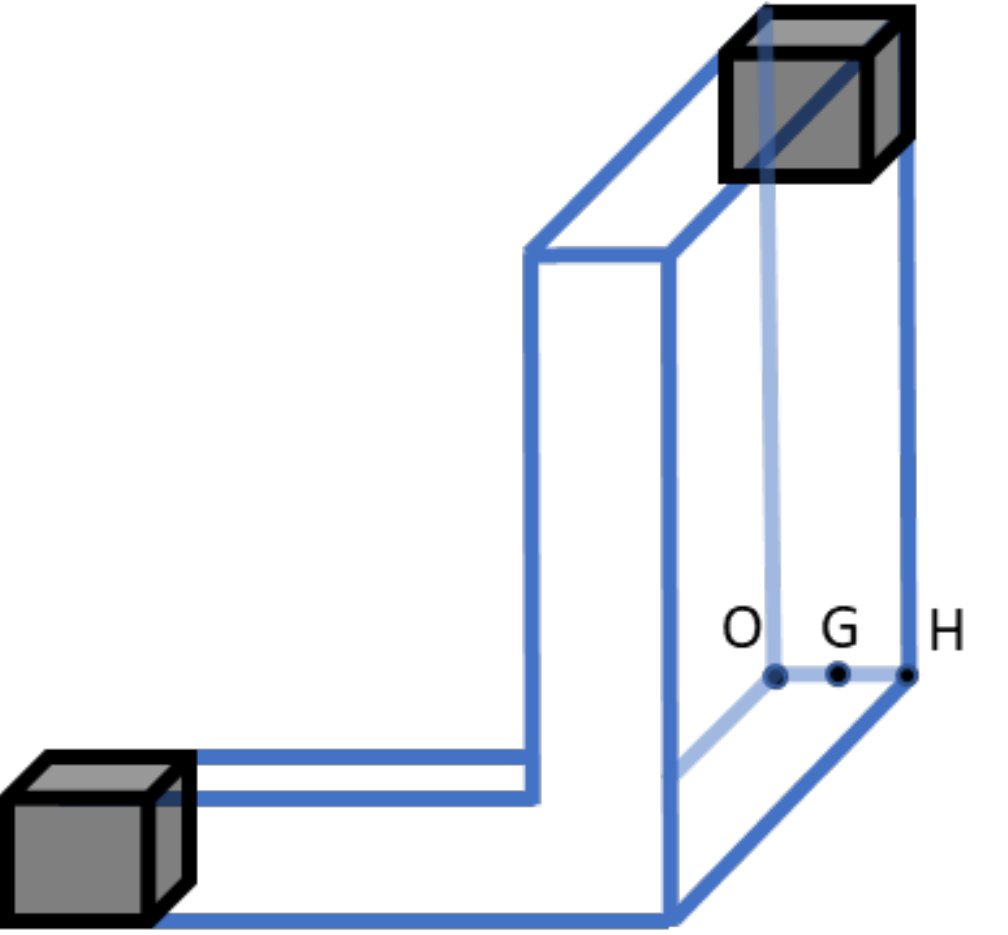}}
\sidesubfloat[]{\includegraphics[scale=0.28]{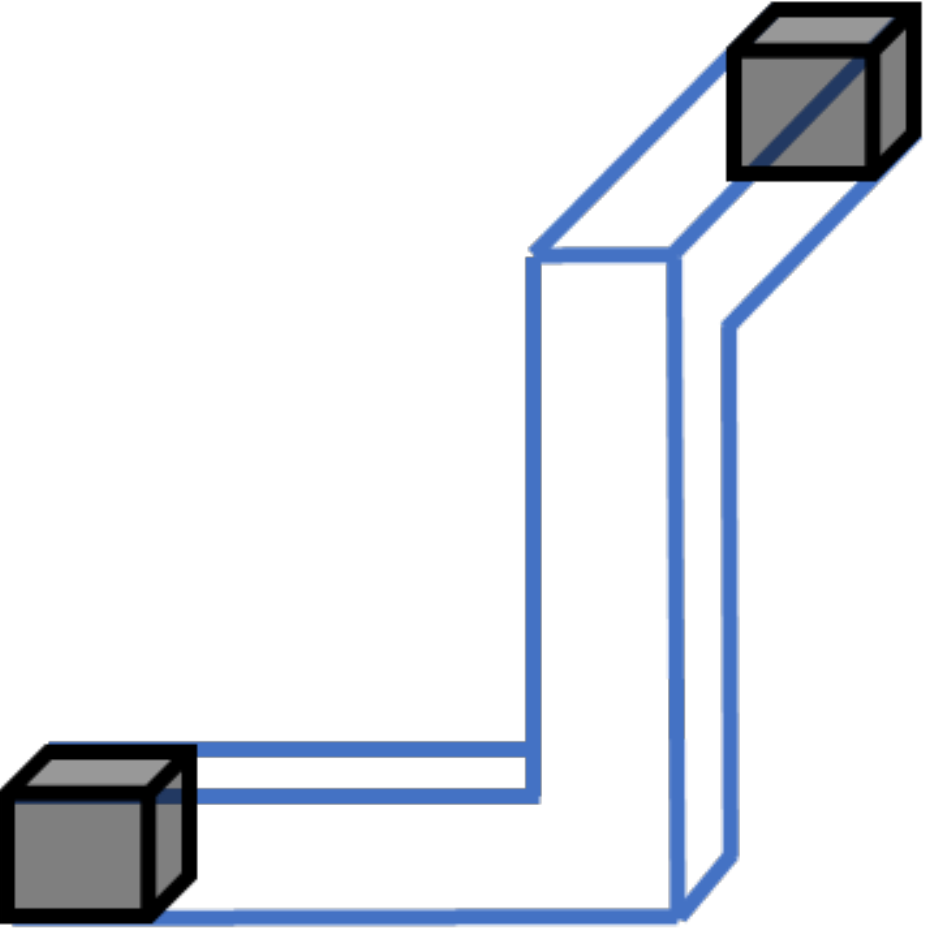}}
\caption{Deformation of 3D configuration of excitations of Fig.~\ref{excitation_config}d}
\label{CC8_3D4}
\end{figure}
We follow the same process for the configuration in Fig.~\ref{CC8_3D4}~(a) and find that the protruding edges of all planar boxes marked with vertices are cleanable. Hence, by the cleanability of planar boxes, we arrive at the final flat-rod configuration.

\section{Membrane-membrane operator commutation scaling}
\label{sec:memmemscaling}
In this appendix, we present tables that exhibit the scaling behavior of the commutation matrix rank for membrane-membrane configurations. 
Our labeling of the models is as follows: XC for the X-cube model, CB for the checkerboard model, Chm for Chamon's model, HH-I for Hsieh-Halasz's type-I model, HH-II for Hsieh-Halasz's second model which may be type-I or II, 2DTC$_{ij}$ for a stack of 2D toric codes parallel to the $i\,j$ plane, 2FXC for the twice foliated X-cube model, 3DTC for the 3D toric code, SFSL for the Sierpinski fractal spin liquid model and CCi for cubic code i. The final row in each table contains the result of the scaling behaviour, $\ell$ stands for linear, $c$ stands for constant, both up to fluctuating corrections of order constant. When the rank is precisely $0$ that is indicated instead. 
\begin{table}[H]
    \centering
\begin{tabular}{c|cccccccc|ccccccc}
 & \multicolumn{8}{c|}{Foliated type-I} & \multicolumn{7}{c}{Fractal type-I ($d=0,1,2$)}\tabularnewline
\hline 
$L$ & XC & CB & Chm & HH-I & 2DTC$_{xy}$ & 2DTC$_{yz}$ & 2DTC$_{xz}$ & 2FXC & CC11 & CC12 & CC13 & CC14 & CC15 & CC16 & CC17\tabularnewline
\hline 
14 & 30 & 30 & 18 & 60 & 0 & 0 & 30 & 30 & 4 & 26 & 28 & 26 & 10 & 15 & 4\tabularnewline
15 & 32 & 32 & 20 & 64 & 0 & 0 & 32 & 32 & 4 & 30 & 28 & 28 & 10 & 16 & 4\tabularnewline
16 & 34 & 34 & 20 & 68 & 0 & 0 & 34 & 34 & 4 & 32 & 32 & 30 & 10 & 13 & 4\tabularnewline
17 & 36 & 36 & 22 & 72 & 0 & 0 & 36 & 36 & 4 & 32 & 32 & 32 & 10 & 12 & 4\tabularnewline
18 & 38 & 38 & 22 & 76 & 0 & 0 & 38 & 38 & 4 & 36 & 36 & 34 & 10 & 10 & 4\tabularnewline
19 & 40 & 40 & 24 & 80 & 0 & 0 & 40 & 40 & 4 & 38 & 36 & 36 & 10 & 8 & 4\tabularnewline
\hline 
 & $\ell$ & $\ell$ & $\ell$ & $\ell$ & 0 & 0 & $\ell$ & $\ell$ & $c$ & $\ell$ & $\ell$ & $\ell$ & $c$ & $c$ & $c$\tabularnewline
\end{tabular}\\
    \caption{Scaling of the commutation matrix rank with $L=L_{x}=L_{y}=L_{z}$ and $w=3$ for the membrane-membrane configuration with membranes along the $xy$ and $yz$ planes that have open boundary conditions along their $\hat{y}$ oriented edges. This table contains foliated Type-I models and fractal type-I models with planons.}
    \label{tab:my_label}
\end{table}

\begin{table}[H]
    \centering
\begin{tabular}{c|c|ccccc|c|ccccccc}
 & TQFT & \multicolumn{5}{c|}{Type-II ($d=0$)} &  &  & \multicolumn{6}{c}{Fractal type-I ($d=0,1$)}\tabularnewline
\hline 
$L$ & 3DTC & CC1 & CC2 & CC3 & CC4 & CC7 & HH-II & SFSL & CC0 & CC5 & CC6 & CC8 & CC9 & CC10\tabularnewline
\hline 
14 & 0 & 14 & 14 & 14 & 8 & 15 & 16 & 0 & 16 & 10 & 10 & 8 & 12 & 15\tabularnewline
15 & 0 & 16 & 12 & 16 & 8 & 16 & 12 & 0 & 12 & 10 & 10 & 8 & 16 & 16\tabularnewline
16 & 0 & 12 & 12 & 15 & 8 & 12 & 14 & 0 & 15 & 10 & 10 & 8 & 12 & 15\tabularnewline
17 & 0 & 16 & 14 & 14 & 8 & 14 & 16 & 0 & 15 & 10 & 10 & 8 & 16 & 14\tabularnewline
18 & 0 & 12 & 14 & 15 & 8 & 16 & 14 & 0 & 15 & 10 & 10 & 8 & 12 & 14\tabularnewline
19 & 0 & 12 & 16 & 14 & 8 & 16 &   & 0 &  & 10 & 10 & 8 & 12 & 12\tabularnewline
\hline 
 & 0 & $c$ & $c$ & $c$ & $c$ & $c$ & $0$ & $c$ & $c$ & $c$ & $c$ & $c$ & $c$ & $c$\tabularnewline
\end{tabular}\\
    \caption{Scaling of the commutation matrix rank with $L=L_{x}=L_{y}=L_{z}$ and $w=3$ for the membrane-membrane configuration with membranes along the $xy$ and $yz$ planes that have open boundary conditions along their $\hat{y}$ oriented edges. This table contains TQFT, Type-II and fractal Type-I models with no planons.}
    \label{tab:my_label}
\end{table}

\begin{table}[H]
    \centering
   \begin{tabular}{c|cccccccc|ccccccc}
 & \multicolumn{8}{c|}{Foliated type-I} & \multicolumn{7}{c}{Fractal type-I ($d=0,1,2$)}\tabularnewline
\hline 
$L$ & XC & CB & Chm & HH-I & 2DTC$_{xy}$ & 2DTC$_{yz}$ & 2DTC$_{xz}$ & 2FXC & CC11 & CC12 & CC13 & CC14 & CC15 & CC16 & CC17\tabularnewline
\hline 
14 & 30 & 30 & 18 & 60 & 30 & 0 & 0 & 0 & 4 & 10 & 10 & 8 & 10 & 15 & 52\tabularnewline
15 & 32 & 32 & 20 & 64 & 32 & 0 & 0 & 0 & 4 & 10 & 10 & 8 & 10 & 16 & 60\tabularnewline
16 & 34 & 34 & 20 & 68 & 34 & 0 & 0 & 0 & 4 & 10 & 10 & 8 & 10 & 13 & 64\tabularnewline
17 & 36 & 36 & 22 & 72 & 36 & 0 & 0 & 0 & 4 & 10 & 10 & 8 & 10 & 12 & 64\tabularnewline
18 & 38 & 38 & 22 & 76 & 38 & 0 & 0 & 0 & 4 & 10 & 10 & 8 & 10 & 10 & 72\tabularnewline
19 & 40 & 40 & 24 & 80 & 40 & 0 & 0 & 0 & 4 & 10 & 10 & 8 & 10 & 8 & 76\tabularnewline
\hline 
 & $\ell$ & $\ell$ & $\ell$ & $\ell$ & $\ell$ & 0 & 0 & 0 & $c$ & $c$ & $c$ & $c$ & $c$ & $c$ & $\ell$\tabularnewline
\end{tabular}\\
    \caption{Scaling of the commutation matrix rank with $L=L_{x}=L_{y}=L_{z}$ and $w=3$ for the membrane-membrane configuration with membranes along the $xz$ and $yz$ planes that have open boundary conditions along their $\hat{z}$ oriented edges. This table contains foliated Type-I models and fractal type-I models with planons.}
\label{}
\end{table}

\begin{table}[H]
    \centering
\begin{tabular}{c|c|ccccc|c|ccccccc}
 & TQFT & \multicolumn{5}{c|}{Type-II ($d=0$)} &  &  & \multicolumn{6}{c}{Fractal type-I ($d=0,1$)}\tabularnewline
\hline 
$L$ & 3DTC & CC1 & CC2 & CC3 & CC4 & CC7 & HH-II & SFSL & CC0 & CC5 & CC6 & CC8 & CC9 & CC10\tabularnewline
\hline 
14 & 0 & 14 & 14 & 16 & 8 & 7 & 16 & 0 & 16 & 0 & 0 & 15 & 12 & 10\tabularnewline
15 & 0 & 16 & 12 & 12 & 8 & 4 & 12 & 0 & 12 & 0 & 0 & 16 & 16 & 14\tabularnewline
16 & 0 & 12 & 12 & 16 & 8 & 4 & 14 & 0 & 15 & 0 & 0 & 14 & 12 & 15\tabularnewline
17 & 0 & 16 & 14 & 16 & 8 & 4 & 16 & 0 & 15 & 0 & 0 & 16 & 16 & 16\tabularnewline
18 & 0 & 12 & 14 & 16 & 8 & 6 & 14 & 0 & 15 & 0 & 0 & 14 & 12 & 14\tabularnewline
19 & 0 & 12 & 16 & 12 & 8 & 8 &   & 0 &  & 0 & 0 & 16 & 12 & 14\tabularnewline
\hline 
 & 0 & $c$ & $c$ & $c$ & $c$ & $c$ & 0 & $c$ & 0 & 0 & $c$ & $c$ & $c$ & $c$\tabularnewline
\end{tabular}\\
    \caption{Scaling of the commutation matrix rank with $L=L_{x}=L_{y}=L_{z}$ and $w=3$ for the membrane-membrane configuration with membranes along the $xz$ and $yz$ planes that have open boundary conditions along their $\hat{z}$ oriented edges. This table contains TQFT, Type-II and fractal Type-I models with no planons.}
\label{}
\end{table}

\begin{table}[H]
    \centering
\begin{tabular}{c|cccccccc|ccccccc}
 & \multicolumn{8}{c|}{Foliated type-I} & \multicolumn{7}{c}{Fractal type-I ($d=0,1,2$)}\tabularnewline
\hline 
$L$ & XC & CB & Chm & HH-I & 2DTC$_{xy}$ & 2DTC$_{yz}$ & 2DTC$_{xz}$ & 2FXC & CC11 & CC12 & CC13 & CC14 & CC15 & CC16 & CC17\tabularnewline
\hline 
14 & 30 & 30 & 19 & 60 & 0 & 30 & 0 & 30 & 26 & 8 & 0 & 10 & 26 & 15 & 4\tabularnewline
15 & 32 & 32 & 20 & 64 & 0 & 32 & 0 & 32 & 30 & 8 & 0 & 10 & 30 & 16 & 4\tabularnewline
16 & 34 & 34 & 21 & 68 & 0 & 34 & 0 & 34 & 32 & 8 & 0 & 10 & 32 & 13 & 4\tabularnewline
17 & 36 & 36 & 22 & 72 & 0 & 36 & 0 & 36 & 32 & 8 & 0 & 10 & 32 & 12 & 4\tabularnewline
18 & 38 & 38 & 23 & 76 & 0 & 38 & 0 & 38 & 36 & 8 & 0 & 10 & 36 & 10 & 4\tabularnewline
19 & 40 & 40 & 24 & 80 & 0 & 40 & 0 & 40 & 38 & 8 & 0 & 10 & 38 & 8 & 4\tabularnewline
\hline 
 & $\ell$ & $\ell$ & $\ell$ & $\ell$ & 0 & 0 & $\ell$ & $\ell$ & $\ell$ & $c$ & 0 & $c$ & $l$ & $c$ & $c$\tabularnewline
\end{tabular}\\
   \caption{Scaling of the commutation matrix rank with $L=L_{x}=L_{y}=L_{z}$ and $w=3$ for the membrane-membrane configuration with membranes along the $xy$ and $xz$ planes that have open boundary conditions along their $\hat{x}$ oriented edges. This table contains foliated Type-I models and fractal type-I models with planons.}
    \label{}
\end{table}

\begin{table}[H]
    \centering
\begin{tabular}{c|c|ccccc|c|ccccccc}
 & TQFT & \multicolumn{5}{c|}{Type-II ($d=0$)} &  &  & \multicolumn{6}{c}{Fractal type-I ($d=0,1$)}\tabularnewline
\hline 
$L$ & 3DTC & CC1 & CC2 & CC3 & CC4 & CC7 & HH-II & SFSL & CC0 & CC5 & CC6 & CC8 & CC9 & CC10\tabularnewline
\hline 
14 & 0 & 14 & 14 & 15 & 12 & 8 & 16 & 2 & 15 & 10 & 10 & 15 & 14 & 15\tabularnewline
15 & 0 & 16 & 16 & 16 & 12 & 8 & 12 & 2 & 12 & 10 & 10 & 16 & 16 & 16\tabularnewline
16 & 0 & 16 & 16 & 14 & 12 & 8 & 14 & 2 & 15 & 10 & 10 & 16 & 16 & 15\tabularnewline
17 & 0 & 16 & 16 & 14 & 14 & 8 & 16 & 2 & 15 & 10 & 10 & 16 & 16 & 14\tabularnewline
18 & 0 & 14 & 14 & 15 & 16 & 8 & 14 & 2 & 15 & 10 & 10 & 16 & 14 & 14\tabularnewline
19 & 0 & 12 & 12 & 14 & 14 & 8 & 12 & 2 & 12 & 10 & 10 & 14 & 12 & 12\tabularnewline
\hline 
 & 0 & $c$ & $c$ & $c$ & $c$ & $c$ & $c$ & $c$ & $c$ & $c$ & $c$ & $c$ & $c$ & $c$\tabularnewline
\end{tabular}
    \caption{Scaling of the commutation matrix rank with $L=L_{x}=L_{y}=L_{z}$ and $w=3$ for the membrane-membrane configuration with membranes along the $xy$ and $xz$ planes that have open boundary conditions along their $\hat{x}$ oriented edges. This table contains TQFT, Type-II and fractal Type-I models with no planons.}
    \label{tab:my_label}
\end{table}

\section{Topological stabilizer model zoo}
\label{zoo}

In this appendix we collect the topological stabilizer models known to us, sorted according to the type of topological order they support. 

\subsection{ Type-II models}
  \label{Type2zoo}
In this section we collect known examples of type-II models: cubic codes 1, 2, 3, 4, 7, 8, 10, from Ref.~\onlinecite{haah2011local},  Yoshida's type-II qubit and qutrit fractal spin liquids~\cite{yoshida2013exotic}, Kim's type-II  qutrit and qudit models~\cite{kim20123d}. 
Cubic codes 1 to 4 were shown to be type-II codes in Ref.~\onlinecite{haah2011local} and our results indicate cubic codes 7, 8 and 10 are also type-II. 
We also include Halasz and Hsieh's ``type-II'' model~\cite{hsieh_halasz_partons}, although it has not been shown to be type-II. Our results are inconclusive for this model as they are consistent with fractal type-I or type-II.

\subsubsection{Cubic code 1}
The stabilizer generators of cubic code 1 are given by
\begin{align}
\begin{array}{c}
\drawgenerator{XI}{II}{IX}{XI}{IX}{XX}{XI}{IX}
\quad
\drawgenerator{ZI}{ZZ}{IZ}{ZI}{IZ}{II}{ZI}{IZ}
\end{array}
\, .
\end{align}

\begin{table}[H]
\renewcommand{\arraystretch}{1.4}{
\begin{tabular}{c|cccccc c}
Model & Deformability & $n_{\text{rods}}^{3D}$ & $n_{\text{rods}}^{2D}\left(zx,xy,yz\right)$ &
$n_{\text{rods}}^{1D}\left(x,y,z\right)$ & $n_{m}\left(zx,xy,yz\right)$ & $d$ & Type\tabularnewline
\hline
CC1 & $\checkmark$ & 0 & $\left(0,0,0\right)$ & $\left(0,0,0\right)$ & $\left(c,c,c\right)$ & 0 & type-II\tabularnewline
\end{tabular}}
\caption{Operator data for 3D stabilizer models. For definitions, see table~\ref{table_invariants}}
\label{CC1_data}
\end{table}

\subsubsection{Cubic code 1B}
It was shown in Ref.~\onlinecite{haah2014bifurcation} that real-space entanglement renormalization of cubic code 1 yields cubic code 1, cubic code 1B and disentangled qubits in the trivial state. 
It was further shown that cubic code 1B bifurcates into two copies of itself under entanglement renormalization.  The stabilizer generators of cubic code 1B are given by
\begin{align}
\begin{array}{c}
\drawgenerator{}{}{XXXI}{}{IIXX}{IXIX}{}{XIII}
\quad
\drawgenerator{}{}{XIIX}{}{IIXI}{XXXX}{}{IXII}
\quad
\drawgenerator{ZZII}{IZIZ}{}{IIZI}{}{}{ZIZZ}{}
\quad
\drawgenerator{ZIII}{ZZZZ}{}{IIIZ}{}{}{IZZI}{}
\end{array}\, .
\end{align}
Cubic code 1B is type-II since any nontrivial string operators in the model would imply string operators in cubic code 1, which was shown to contain no string operators~\cite{haah2011local}.

\subsubsection{Cubic code 2}
The stabilizer generators of cubic code 2 are given by
\begin{align}
\begin{array}{c}
\drawgenerator{XI}{XI}{IX}{XI}{XX}{XI}{IX}{XX}
\quad
\drawgenerator{ZZ}{IZ}{ZI}{ZZ}{IZ}{IZ}{ZI}{IZ}
\end{array}
\, .
\end{align}

\begin{table}[H]
\renewcommand{\arraystretch}{1.4}{
\begin{tabular}{c|cccccc c}
Model & Deformability & $n_{\text{rods}}^{3D}$ & $n_{\text{rods}}^{2D}\left(zx,xy,yz\right)$ &
$n_{\text{rods}}^{1D}\left(x,y,z\right)$ & $n_{m}\left(zx,xy,yz\right)$ & $d$ & Type\tabularnewline
\hline
CC2 & $\checkmark$ & 0 & $\left(0,0,0\right)$ & $\left(0,0,0\right)$ & $\left(c,c,c\right)$ & 0 & type-II\tabularnewline
\end{tabular}}
\caption{Operator data. For definitions, see table~\ref{table_invariants}}
\label{CC2_data}
\end{table}

\subsubsection{Cubic code 3}
The stabilizer generators of cubic code 3 are given by
\begin{align}
\begin{array}{c}
\drawgenerator{XI}{XI}{IX}{II}{IX}{XX}{XI}{IX}
\quad
\drawgenerator{ZI}{ZZ}{IZ}{ZI}{IZ}{IZ}{ZI}{II}
\end{array}
\, .
\end{align}

\begin{table}[H]
\renewcommand{\arraystretch}{1.4}{
\begin{tabular}{c|cccccc c}
Model & Deformability & $n_{\text{rods}}^{3D}$ & $n_{\text{rods}}^{2D}\left(zx,xy,yz\right)$ &
$n_{\text{rods}}^{1D}\left(x,y,z\right)$ & $n_{m}\left(zx,xy,yz\right)$ & $d$ & Type\tabularnewline
\hline
CC3 & $\checkmark$ & 0 & $\left(0,0,0\right)$ & $\left(0,0,0\right)$ & $\left(c,c,c\right)$ & 0 & type-II\tabularnewline
\end{tabular}}
\caption{Operator data. For definitions, see table~\ref{table_invariants}}
\label{CC1_data}
\end{table}

\subsubsection{Cubic code 4}
The stabilizer generators of cubic code 4 are given by
\begin{align}
\begin{array}{c}
\drawgenerator{XI}{II}{IX}{XI}{XI}{XX}{IX}{IX}
\quad
\drawgenerator{IZ}{ZZ}{ZI}{ZI}{IZ}{II}{ZI}{IZ}
\end{array}
\, .
\end{align} 

\begin{table}[H]
\renewcommand{\arraystretch}{1.4}{
\begin{tabular}{c|cccccc c}
Model & Deformability & $n_{\text{rods}}^{3D}$ & $n_{\text{rods}}^{2D}\left(zx,xy,yz\right)$ &
$n_{\text{rods}}^{1D}\left(x,y,z\right)$ & $n_{m}\left(zx,xy,yz\right)$ & $d$ & Type\tabularnewline
\hline
CC4 & $\checkmark$ & 0 & $\left(0,0,0\right)$ & $\left(0,0,0\right)$ & $\left(c,c,c\right)$ & 0 & type-II\tabularnewline
\end{tabular}}
\caption{Operator data. For definitions, see table~\ref{table_invariants}}
\label{CC1_data}
\end{table}

\subsubsection{Cubic code 7}
The stabilizer generators of cubic code 7 are given by
\begin{align}
\begin{array}{c}
\drawgenerator{II}{XI}{IX}{II}{IX}{XX}{XI}{XX}
\quad
\drawgenerator{ZI}{ZZ}{IZ}{ZZ}{II}{IZ}{ZI}{II}
\end{array}
\, .
\end{align}
Our results indicate that this model is type-II. 

\begin{figure}[h!]
\vspace{6mm}
\centering
\sidesubfloat[]{\includegraphics[scale=0.28]{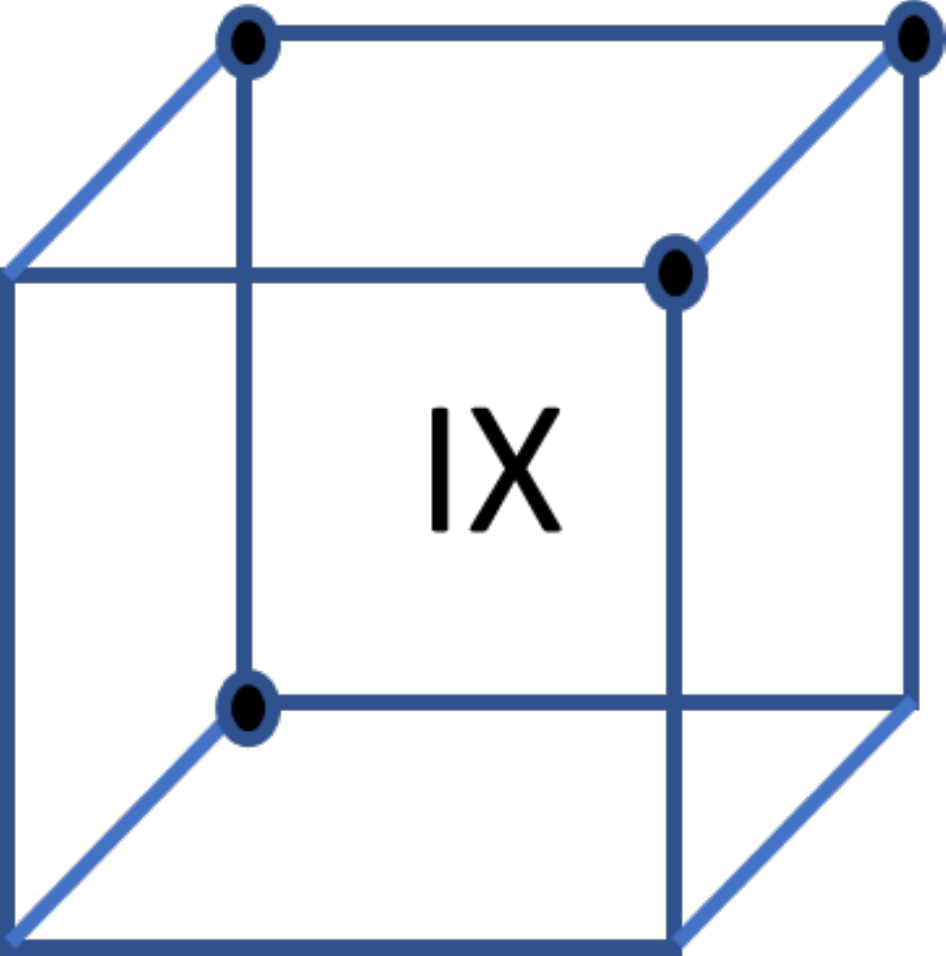}}
\sidesubfloat[]{\includegraphics[scale=0.28]{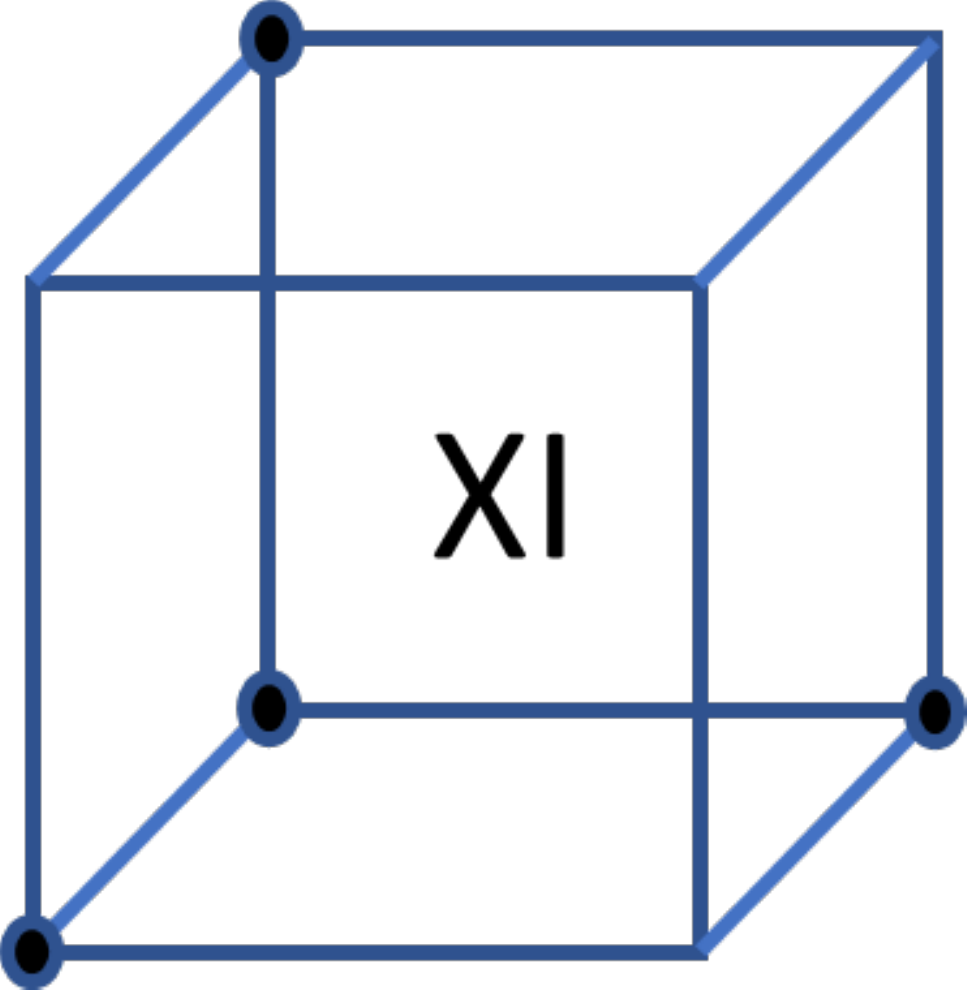}}
\caption{Excitation patterns for CC7. }
\label{CC7b}
\end{figure}

\begin{table}[H]
\renewcommand{\arraystretch}{1.4}{
\begin{tabular}{c|cccccc c}
Model & Deformability & $n_{\text{rods}}^{3D}$ & $n_{\text{rods}}^{2D}\left(zx,xy,yz\right)$ &
$n_{\text{rods}}^{1D}\left(x,y,z\right)$ & $n_{m}\left(zx,xy,yz\right)$ & $d$ & Type\tabularnewline
\hline
CC7 & $\checkmark$ & 0 & $\left(0,0,0\right)$ & $\left(0,0,0\right)$ & $\left(c,c,c\right)$ & 0 &  type-II\tabularnewline
\end{tabular}}
\caption{Operator data. For definitions, see table~\ref{table_invariants}}
\label{CC1_data}
\end{table}
  
\subsubsection{Cubic code 8}
The stabilizer generators of cubic code 8 are given by
\begin{align}
\begin{array}{c}
\drawgenerator{XI}{II}{IX}{XI}{XI}{XX}{XX}{XX}
\quad
\drawgenerator{IZ}{ZZ}{ZZ}{ZZ}{IZ}{II}{ZI}{IZ}
\end{array}
\, .
\end{align}
Our results indicate this model is type-II.

\begin{table}[H]
\renewcommand{\arraystretch}{1.4}{
\begin{tabular}{c|cccccc c}
Model & Deformability & $n_{\text{rods}}^{3D}$ & $n_{\text{rods}}^{2D}\left(zx,xy,yz\right)$ &
$n_{\text{rods}}^{1D}\left(x,y,z\right)$ & $n_{m}\left(zx,xy,yz\right)$ & $d$ & Type\tabularnewline
\hline
CC8 & $\checkmark$ & 0 & $\left(0,0,0\right)$ & $\left(0,0,0\right)$ & $\left(c,c,c\right)$ & 0 & type-II\tabularnewline
\end{tabular}}
\caption{Operator data. For definitions, see table~\ref{table_invariants}}
\label{CC1_data}
\end{table}
\subsubsection{Cubic code 10}
\label{zoo:cc10}

The stabilizer generators of cubic code 10 are given by
\begin{align}
\begin{array}{c}
\drawgenerator{XI}{XI}{IX}{XI}{XI}{XX}{XX}{IX}
\quad
\drawgenerator{IZ}{ZZ}{ZZ}{ZI}{IZ}{IZ}{ZI}{IZ}
\end{array}\, .
\end{align}
Our results indicate this model is type-II. 

\begin{table}[H]
\renewcommand{\arraystretch}{1.4}{
\begin{tabular}{c|cccccc c}
Model & Deformability & $n_{\text{rods}}^{3D}$ & $n_{\text{rods}}^{2D}\left(zx,xy,yz\right)$ &
$n_{\text{rods}}^{1D}\left(x,y,z\right)$ & $n_{m}\left(zx,xy,yz\right)$ & $d$ & Type\tabularnewline
\hline
CC10 & $\checkmark$ & 0 & $\left(0,0,0\right)$ & $\left(0,0,0\right)$ & $\left(c,c,c\right)$ & 0 & type-II\tabularnewline
\end{tabular}}
\caption{Operator data. For definitions, see table~\ref{table_invariants}}
\label{CC1_data}
\end{table}

\subsubsection{Type-II fractal spin liquid}
The local stabilizer generators of this model~\cite{yoshida2013exotic} are given by
\begin{align}
\begin{array}{c}
\xymatrix@!0{%
 && XI \ar@{-}[dl] \ar@{-}[rr] &&   \\
&  XI \ar@{-}[rr] \ar@{-}[ur] &&  \ar@{-}[ur] && IX \ar@{-}[dl]   \\
  XI \ar@{-}[ur] && XX \ar@{-}[ll]\ar@{-}[ur] && II  \ar@{-}[uu]  \\
&  \ar@{.}[uu]\ar@{.}[dl]\ar@{.}[rr] && IX \ar@{-}[ur] \ar@{-}[uu] \\
  \ar@{-}[uu] && IX \ar@{-}[uu]\ar@{-}[ll]\ar@{-}[ur]
}
\quad
\xymatrix@!0{%
  &&& ZI \ar@{-}[rr]\ar@{-}[dl] &&    \\
&&  ZI \ar@{-}[dl]\ar@{-}[dd] &&  \ar@{-}[ur]\ar@{-}[ll]\ar@{-}[dd]   \\
 & II \ar@{-}[dd] && ZZ \ar@{.}[uu]\ar@{.}[rr]\ar@{.}[dl] && IZ \ar@{-}[uu]\ar@{-}[dl]   \\
 ZI  \ar@{-}[ur]&& \ar@{-}[dl]\ar@{-}[rr] && IZ  \\
 &&& IZ \ar@{-}[ll]\ar@{-}[ur]  \\
}
\end{array}
\, .
\end{align}
It was shown to be type-II in Ref.~\onlinecite{yoshida2013exotic}.

\subsubsection{Qutrit Type-II fractal spin liquid}
This is a type-II model with stabilizer generators given by
\begin{align}
\begin{array}{c}
\xymatrix@!0{%
&  IX \ar@{-}[rr] &&   \\
  IX \ar@{-}[ur] && XX \ar@{-}[ll]\ar@{-}[ur] && XI   \\
&  \ar@{.}[uu]\ar@{.}[dl]\ar@{.}[rr] && II \ar@{-}[ur] \ar@{-}[uu] \\
  \ar@{-}[uu] && XI \ar@{-}[uu]\ar@{-}[ll]\ar@{-}[ur]
}
\quad
\xymatrix@!0{%
  && IZ \ar@{-}[rr]\ar@{-}[dl] && &&   \\
&  II \ar@{-}[dl]\ar@{-}[dd] && \ar@{-}[ll]\ar@{-}[ur]\ar@{-}[dd]  \\
 IZ && ZZ \ar@{.}[uu]\ar@{.}[rr]\ar@{.}[dl] && ZI \ar@{-}[uu]\ar@{-}[dl]   \\
 &  \ar@{-}[uu]\ar@{-}[rr] && ZI    
}
\end{array}
\, ,
\end{align}
where $X$ and $Z$ are generalized Pauli operators associated with a qutrit such that $X^3=Z^3=1$. It was shown to be type-II in Ref.~\onlinecite{yoshida2013exotic}. 

\subsubsection{Kim's qutrit models}
This is another 3D qutrit stabilizer model which is likely a type-II model. It was shown numerically in Ref.~\onlinecite{kim20123d} that there are no strings operators up to width 20. The stabilizer generator is given by
\begin{align}
\begin{array}{c}
\drawgenerator{X}{Z}{XZ}{X\bar{Z}}{\bar{X}}{\bar{Z}}{\bar{X}\bar{Z}}{\bar{X}Z}
\end{array}
\, .
\end{align}

A related model is given by the following stabilizer generator,
\begin{align}
\begin{array}{c}
\drawgenerator{X}{Z}{XZ}{\bar{X} Z}{\bar{X}}{\bar{Z}}{\bar{X}\bar{Z}}{X \bar{Z}} 
\end{array}
\, .
\end{align}
There are also closely related ``symmetric" versions, where the back plane of the generator is conjugated, which have the same bulk properties. 

\subsubsection{Kim's $d=5$ qudit model}
This is a 3D qudit ($d=5$) stabilizer model. The no string condition was proven in Ref.~\onlinecite{kim20123d} with a string operator length to width aspect ratio 5.

\begin{align}
\begin{array}{c}
\drawgenerator{X}{Z}{XZ}{X^3\bar{Z}^3}{\bar{X}}{\bar{Z}}{\bar{X}\bar{Z}}{\bar{X}^3Z^3}
\end{array}
\, .
\end{align}
There is also a closely related ``symmetric" version where the back plane is conjugated. 

\subsubsection{HH-II model (?)}
This model was claimed but not shown to be type-II in Ref.~\onlinecite{hsieh_halasz_partons}. Our results were not conclusive, as the deformability condition was only checked to third order. 
This leaves open the possibility that the model is fractal type-I or type-II, which deserves further study. 
The local stabilizer generators are given by
\begin{align}
\begin{array}{c}
\drawgenerator{ZI}{IX}{II}{XZ}{ZI}{IX}{II}{XZ}
\quad
\drawgenerator{IY}{YX}{II}{XI}{IY}{YX}{II}{XI}
\end{array}
\, .
\end{align}

\begin{table}[H]
\renewcommand{\arraystretch}{1.4}{
\begin{tabular}{c|cccccc c}
Model & Deformability & $n_{\text{rods}}^{3D}$ & $n_{\text{rods}}^{2D}\left(zx,xy,yz\right)$ &
$n_{\text{rods}}^{1D}\left(x,y,z\right)$ & $n_{m}\left(zx,xy,yz\right)$ & $d$ & Type\tabularnewline
\hline
HH-II & ? & 0 & $\left(0,0,0\right)$ & $\left(0,0,0\right)$ & $\left(c,c,c\right)$ & 0 & inconclusive\footnote{Our results are consistent with fractal type-I or type-II.} 
\end{tabular}}
\caption{Operator data. For definitions, see table~\ref{table_invariants}}
\label{CC1_data}
\end{table}

\subsection{Fractal type-I  models with lineons}
\label{fractalType1zoo}
In this section we collect the fractal type-I models that support lineons, but do not appear to support planeons (although this is not rigorously shown). This includes cubic codes 0, 5, 6, 9, from Ref.~\onlinecite{haah2011local}, the Sierpinski~\cite{doi:10.1080/14786435.2011.609152} and the Fibonacci fractal spin liquids~\cite{yoshida2013exotic}. Except for cubic code 0, the rest of these examples are in fact lineon models, i.e. all excitations are lineons along a common direction. 

Below we hvae represented string operators in terms of their basic repeating unit segment, following Ref.~\onlinecite{haah2011local}. We refer to this unit as the basic string segment. We use $E[\hat{v}]_p$ to denote a Pauli operator $E$ at position $p$ times its translates along the unit vector $\hat{v}$. With this notation the basic string segment determines the whole string operator. 
For example, $XX[\hat{y}]_{(000)}XI[\hat{y}]_{(100)}$ represents a basic string segment with $XX$ and $XI$ at positions $(0,0,0)$ and $(1,0,0)$, respectively, which is repeated along the $\hat{y}$ direction to generate the full string operator. 

\subsubsection{Cubic code 0}
This is a non-CSS fracton model whose stabilizer generators are given by
\begin{align}
\begin{array}{c}
\drawgenerator{XX}{ZI}{ZY}{XY}{ZZ}{II}{XZ}{ZX}
\quad
\drawgenerator{ZZ}{II}{XZ}{ZX}{XX}{ZI}{ZY}{XY}
\end{array}
\, .
\end{align}
It has 3 basic string segments~\cite{haah2011local} given by
\begin{align}
    ZZ[10\bar{1}]_{(000)}XI[10\bar{1}]_{(001)}ZZ[10\bar{1}]_{(002)} \, ,
     \\
    ZX[\bar{1}10]_{(200)}XI[\bar{1}10]_{(100)}ZX[\bar{1}10]_{(000)} \, ,
     \\
    ZY[0\bar{1}1]_{(020)}XI[0\bar{1}1]_{(010)}ZY[0\bar{1}1]_{(000)}
    \, .
\end{align}

\begin{table}[H]
\renewcommand{\arraystretch}{1.4}{
\begin{tabular}{c|cccccc c}
Model & Deformability & $n_{\text{rods}}^{3D}$ & $n_{\text{rods}}^{2D}\left(zx,xy,yz\right)$ &
$n_{\text{rods}}^{1D}\left(x,y,z\right)$ & $n_{m}\left(zx,xy,yz\right)$ & $d$ & Type\tabularnewline
\hline
CC0 & $\checkmark$ & 0 & $\left(0,0,0\right)$ & $\left(0,0,0\right)$ & $\left(c,c,c\right)$ & 0 & fractal type-I\tabularnewline
\end{tabular}}
\caption{Operator data. For definitions, see table~\ref{table_invariants}}
\label{CC1_data}
\end{table}

\subsubsection{Cubic code 5}
This is a lineon model whose stabilizer generators are given by
\begin{align}
\begin{array}{c}
\drawgenerator{XI}{II}{IX}{II}{XI}{IX}{XX}{XX}
\quad
\drawgenerator{IZ}{ZI}{ZZ}{ZZ}{IZ}{II}{ZI}{II}
\end{array}
\, .
\end{align}

The $X$-sector excitation patterns and a string operator $IX[1\bar{1}2]_{(000)}XX[1\bar{1}2]_{(001)}$ are shown in Fig. \ref{CC5}. 
Cubic code 5 is equivalent to cubic code 9. The transformation relating these models is explained below in section~\ref{CC9_data}. 

\begin{figure}[H]
\vspace{6mm}
\centering
\sidesubfloat[]{\includegraphics[scale=0.28]{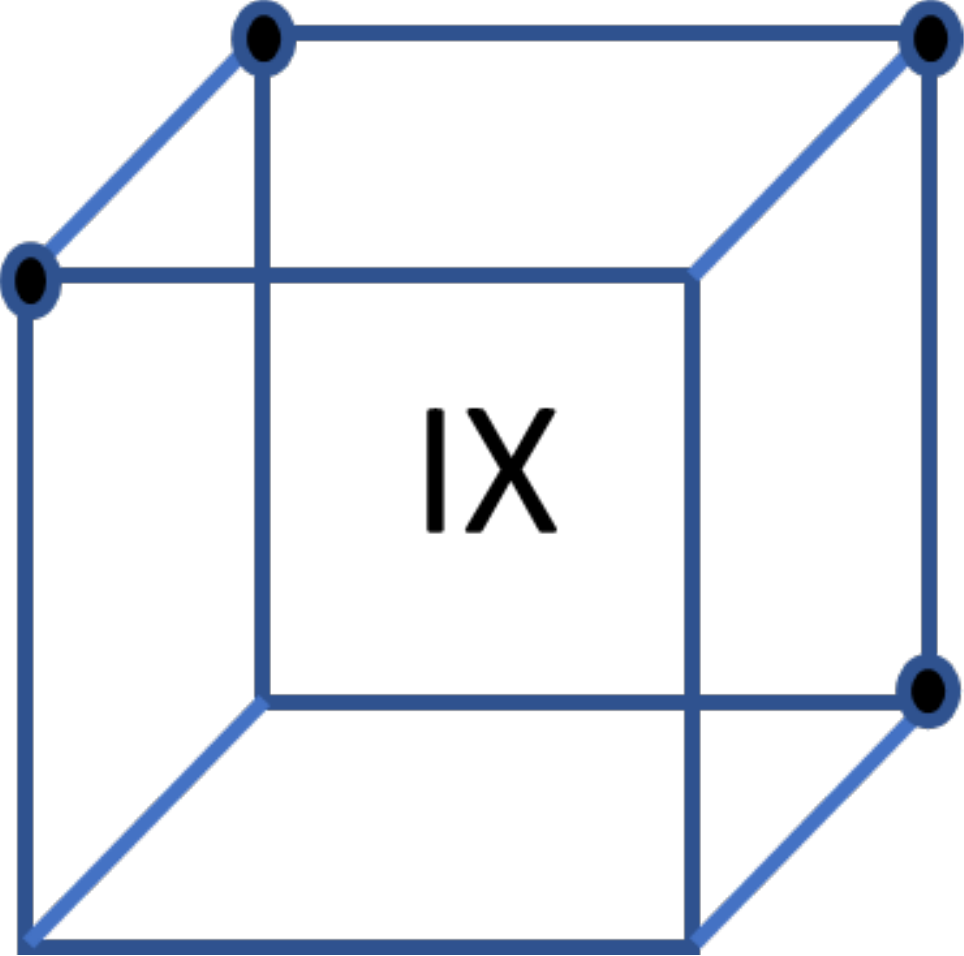}}
\sidesubfloat[]{\includegraphics[scale=0.28]{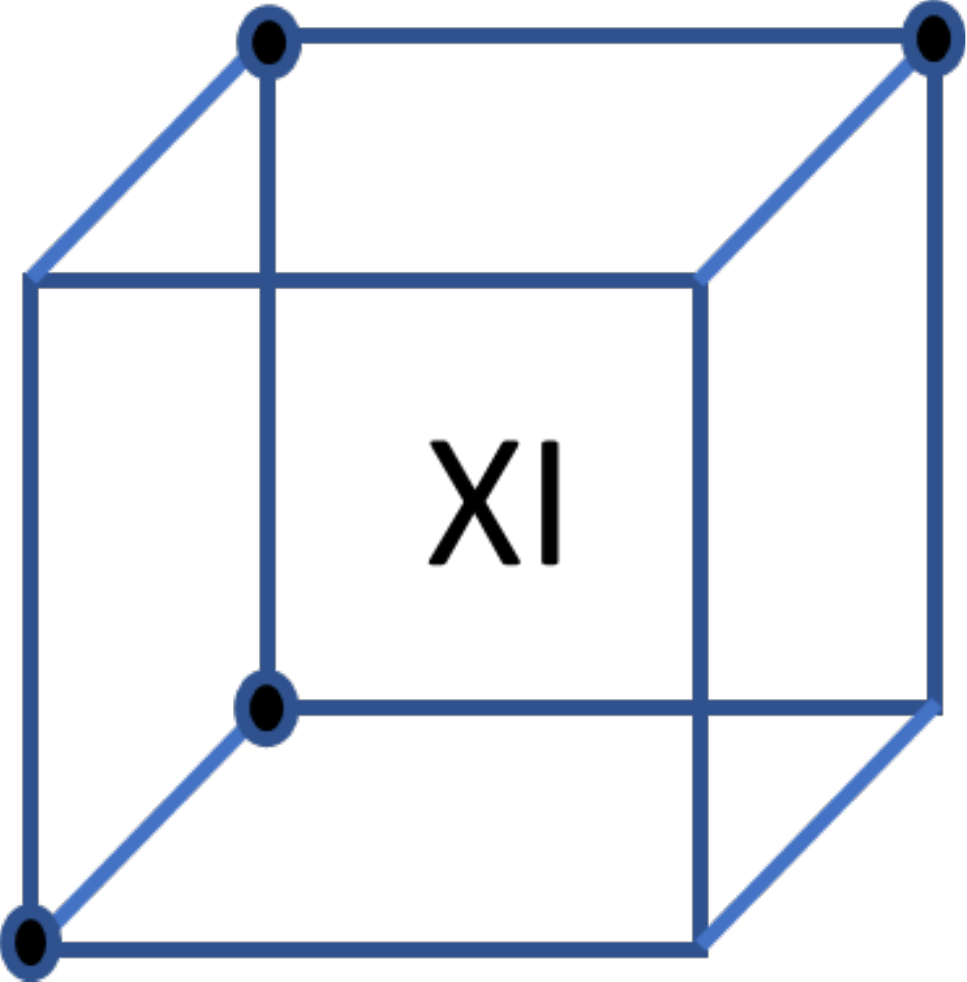}}
\sidesubfloat[]{\includegraphics[scale=0.28]{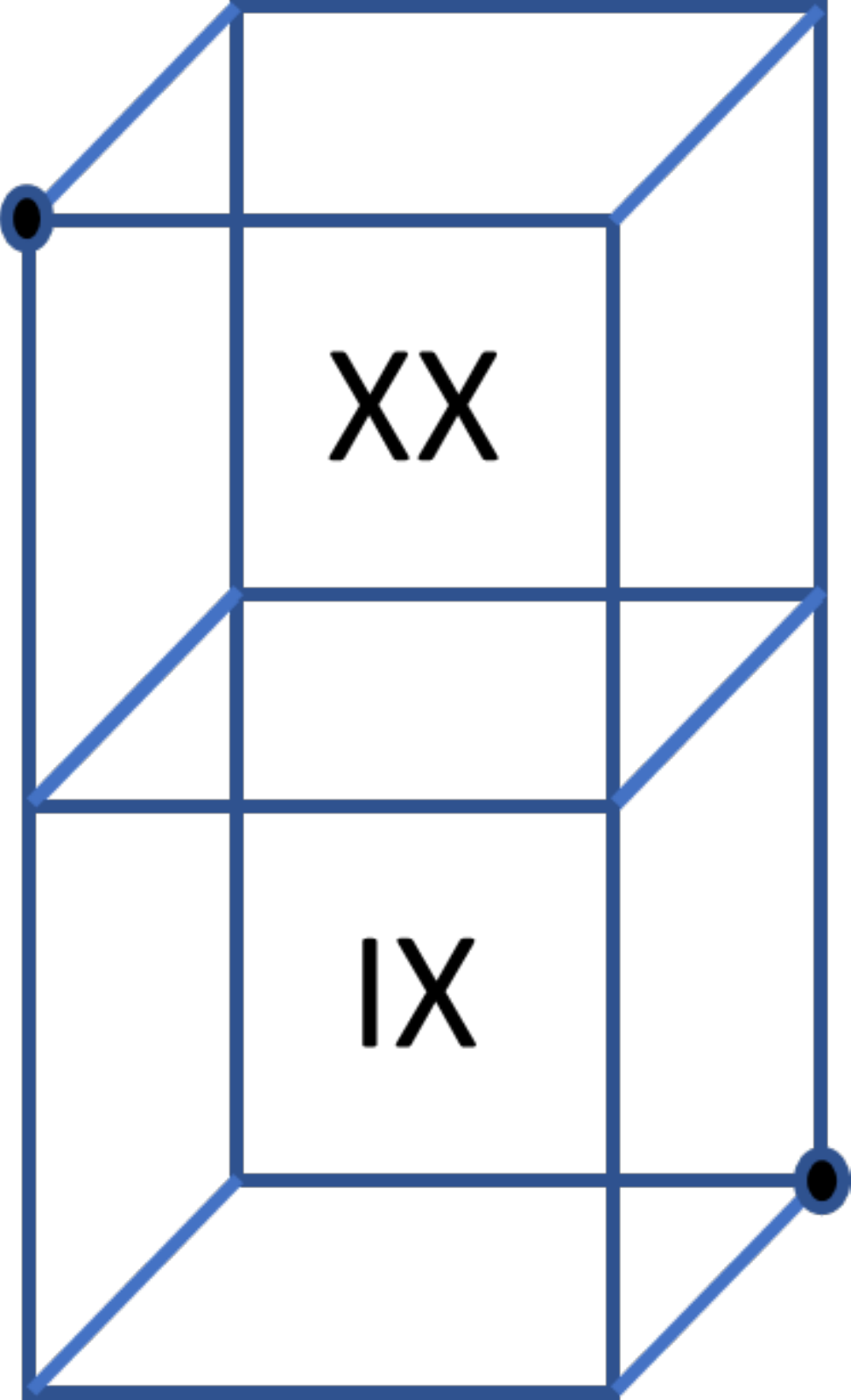}}
\caption{Excitation patterns and  a string operator for cubic code 5}
\label{CC5}
\end{figure}

\begin{table}[H]
\renewcommand{\arraystretch}{1.4}{
\begin{tabular}{c|cccccc c}
Model & Deformability & $n_{\text{rods}}^{3D}$ & $n_{\text{rods}}^{2D}\left(zx,xy,yz\right)$ &
$n_{\text{rods}}^{1D}\left(x,y,z\right)$ & $n_{m}\left(zx,xy,yz\right)$ & $d$ & Type\tabularnewline
\hline
CC5 & $\times$ & 0 & $\left(0,0,0\right)$ & $\left(0,0,0\right)$ & $\left(c,0,c\right)$ & 1 & fractal type-I\tabularnewline
\end{tabular}}
\caption{Operator data. For definitions, see table~\ref{table_invariants}}
\label{CC1_data}
\end{table}

\subsubsection{Cubic code 6}
This is a lineon model whose stabilizer generators are given by 
\begin{align}
\begin{array}{c}
\drawgenerator{XI}{II}{IX}{II}{XX}{XX}{XI}{IX}
\quad
\drawgenerator{ZZ}{ZZ}{IZ}{ZI}{IZ}{II}{ZI}{II}
\end{array}
\, .
\end{align}
The $X$-sector excitation patterns and a string operator $XX[\bar{1}02]_{(000)}XI[\bar{1}02]_{(001)}$ are shown in Fig.~\ref{CC6}.

\begin{figure}[H]
\vspace{6mm}
\centering
\sidesubfloat[]{\includegraphics[scale=0.28]{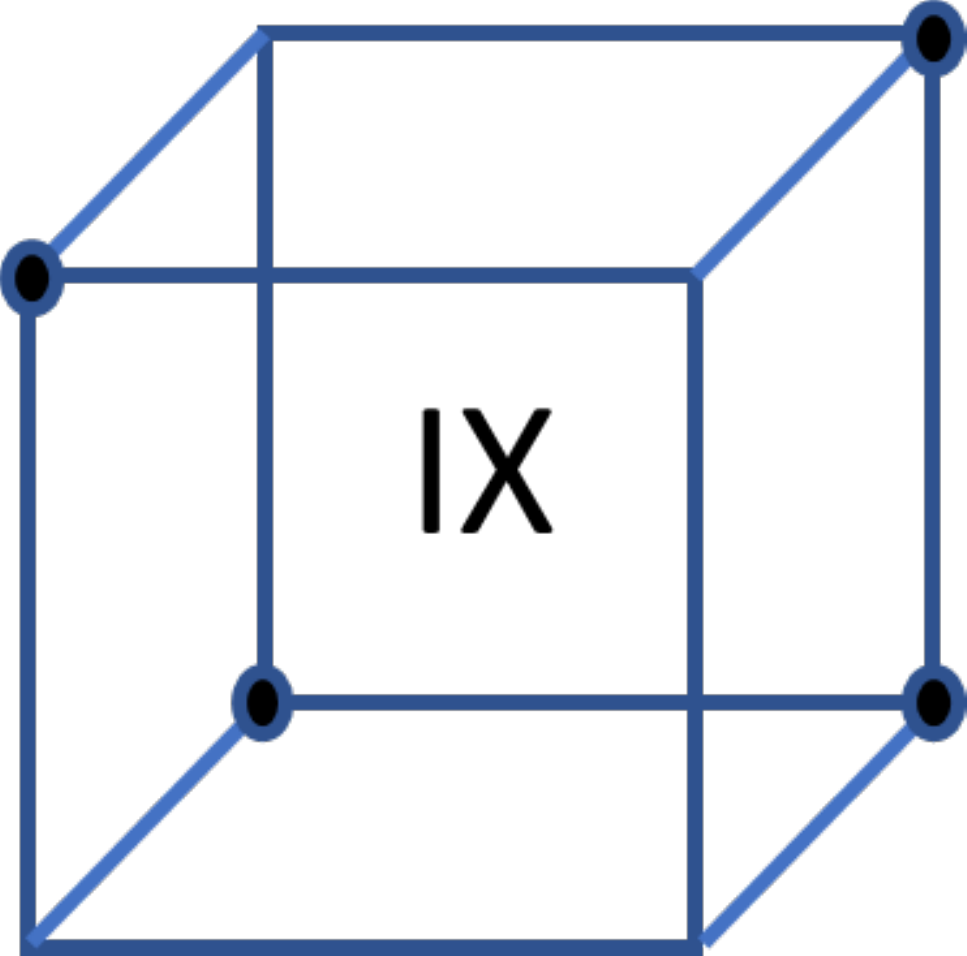}}
\sidesubfloat[]{\includegraphics[scale=0.28]{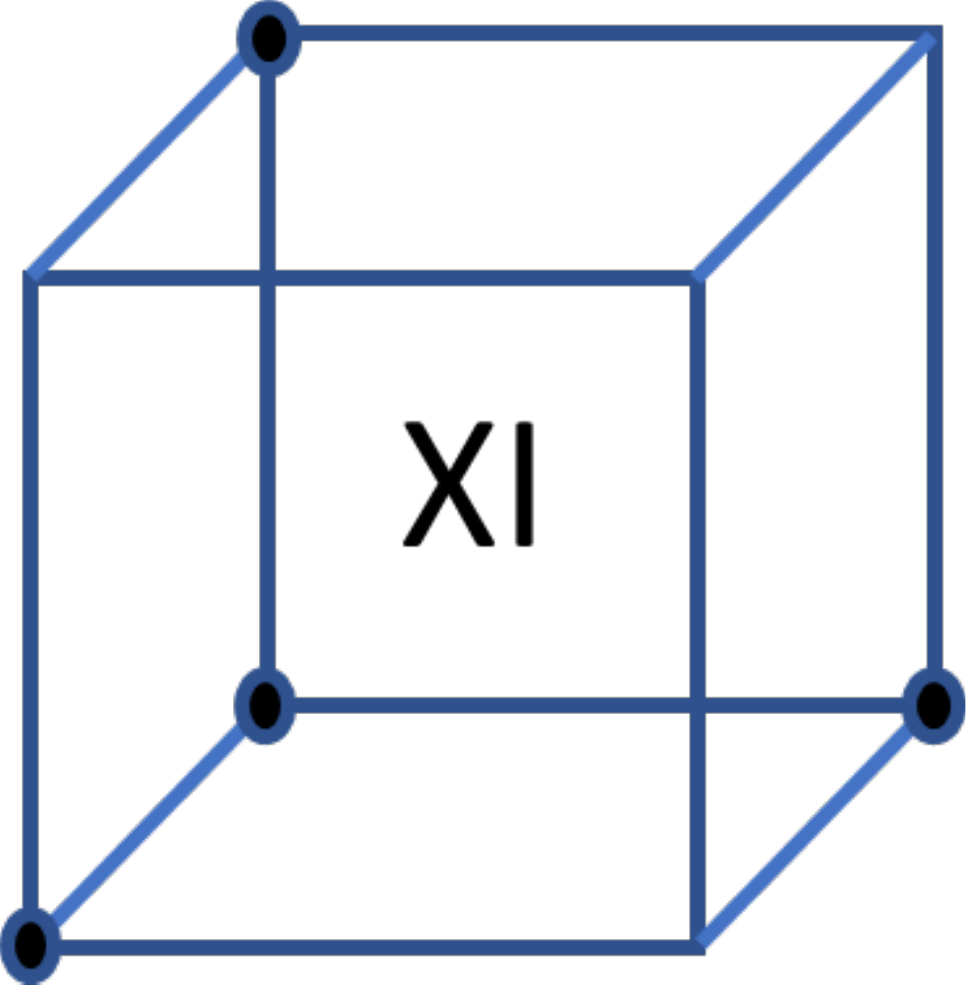}}
\sidesubfloat[]{\includegraphics[scale=0.28]{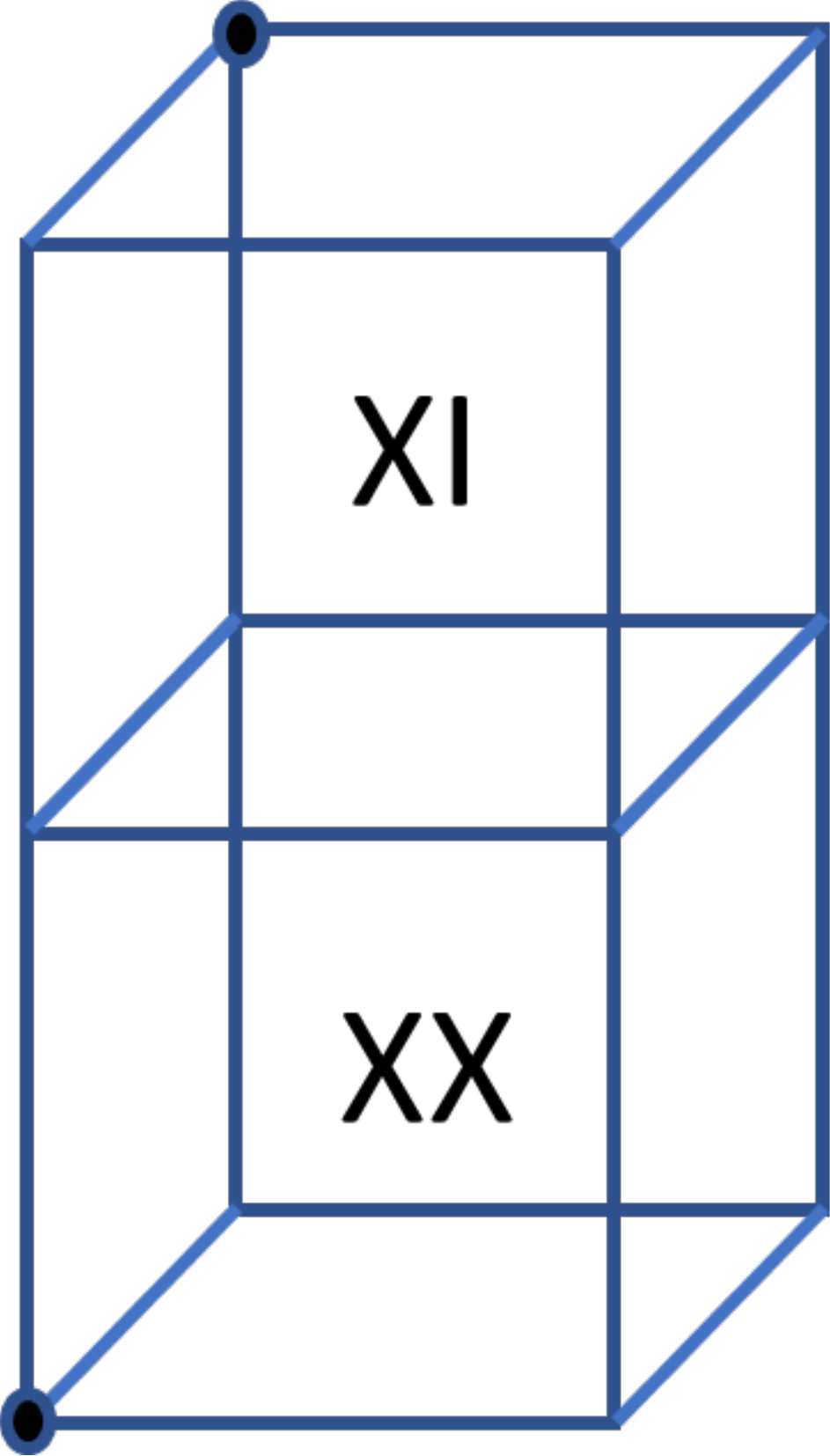}}
\caption{Excitation patterns and a string operator for Cubic code 6}
\label{CC6}
\end{figure}

\begin{table}[H]
\renewcommand{\arraystretch}{1.4}{
\begin{tabular}{c|cccccc c}
Model & Deformability & $n_{\text{rods}}^{3D}$ & $n_{\text{rods}}^{2D}\left(zx,xy,yz\right)$ &
$n_{\text{rods}}^{1D}\left(x,y,z\right)$ & $n_{m}\left(zx,xy,yz\right)$ & $d$ & Type\tabularnewline
\hline
CC6 & $\times$ & 0 & $\left(0,0,0\right)$ & $\left(0,0,0\right)$ & $\left(c,0,c\right)$ & 1 & fractal type-I\tabularnewline
\end{tabular}}
\caption{Operator data. For definitions,  see table~\ref{table_invariants}}
\label{CC1_data}
\end{table}

\subsubsection{Cubic code 9}
\label{CC9_data}
This is a lineon model whose stabilizer generators are given by
\begin{align}
\begin{array}{c}
\drawgenerator{XI}{II}{IX}{XI}{IX}{IX}{XI}{XX}
\quad
\drawgenerator{ZI}{ZI}{IZ}{ZZ}{IZ}{II}{ZI}{IZ}
\end{array}
\end{align}
The $X$-sector excitation patterns and a string operator $IX[11\bar{2}]_{(000)}XI[11\bar{2}]_{(001)}$ are shown in Fig.~\ref{CC5}.  
We have found that cubic code 9 is equivalent to cubic code 5. Applying a modular transformation $z\rightarrow zx^{-1}$, shifting the first qubit on every vertex by one unit in the $\hat{x}$ direction and applying a CNOT operation between first and second qubits on every vertex maps cubic code 9 to cubic code 5.

\begin{figure}[H]
\vspace{6mm}
\centering
\sidesubfloat[]{\includegraphics[scale=0.28]{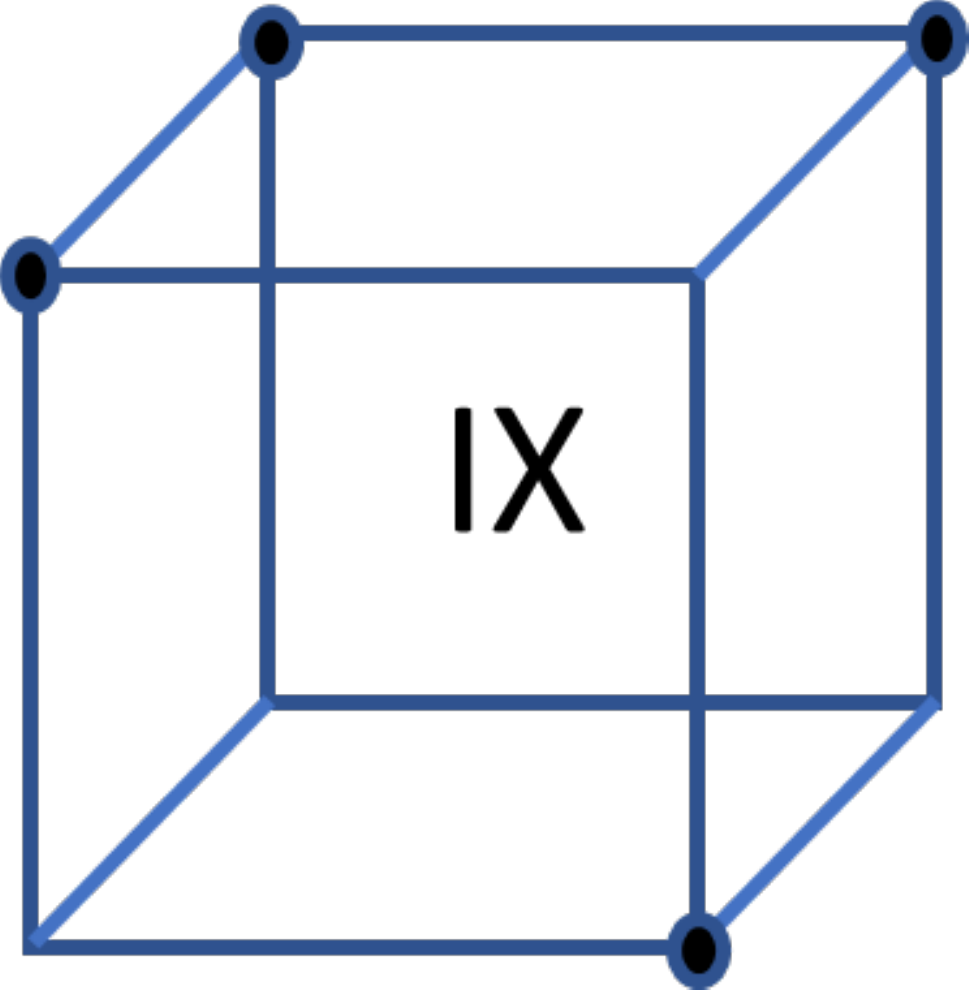}}
\sidesubfloat[]{\includegraphics[scale=0.28]{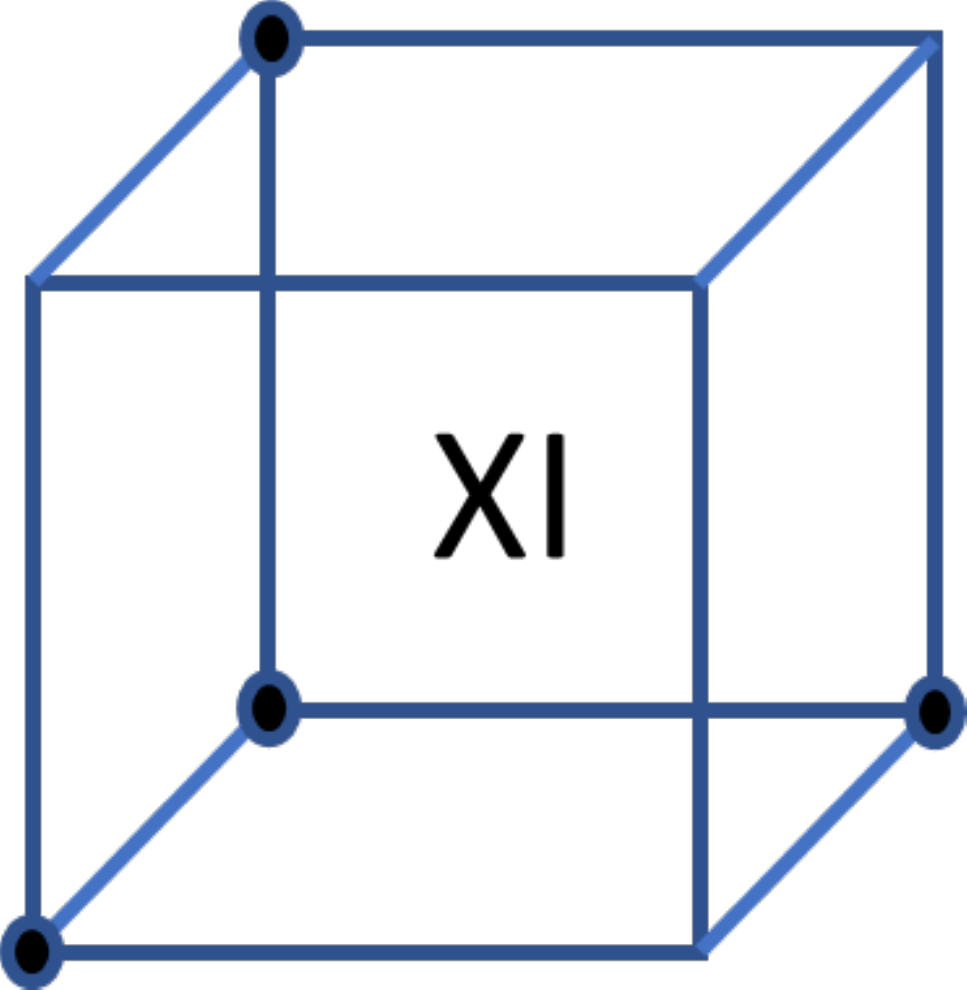}}
\sidesubfloat[]{\includegraphics[scale=0.28]{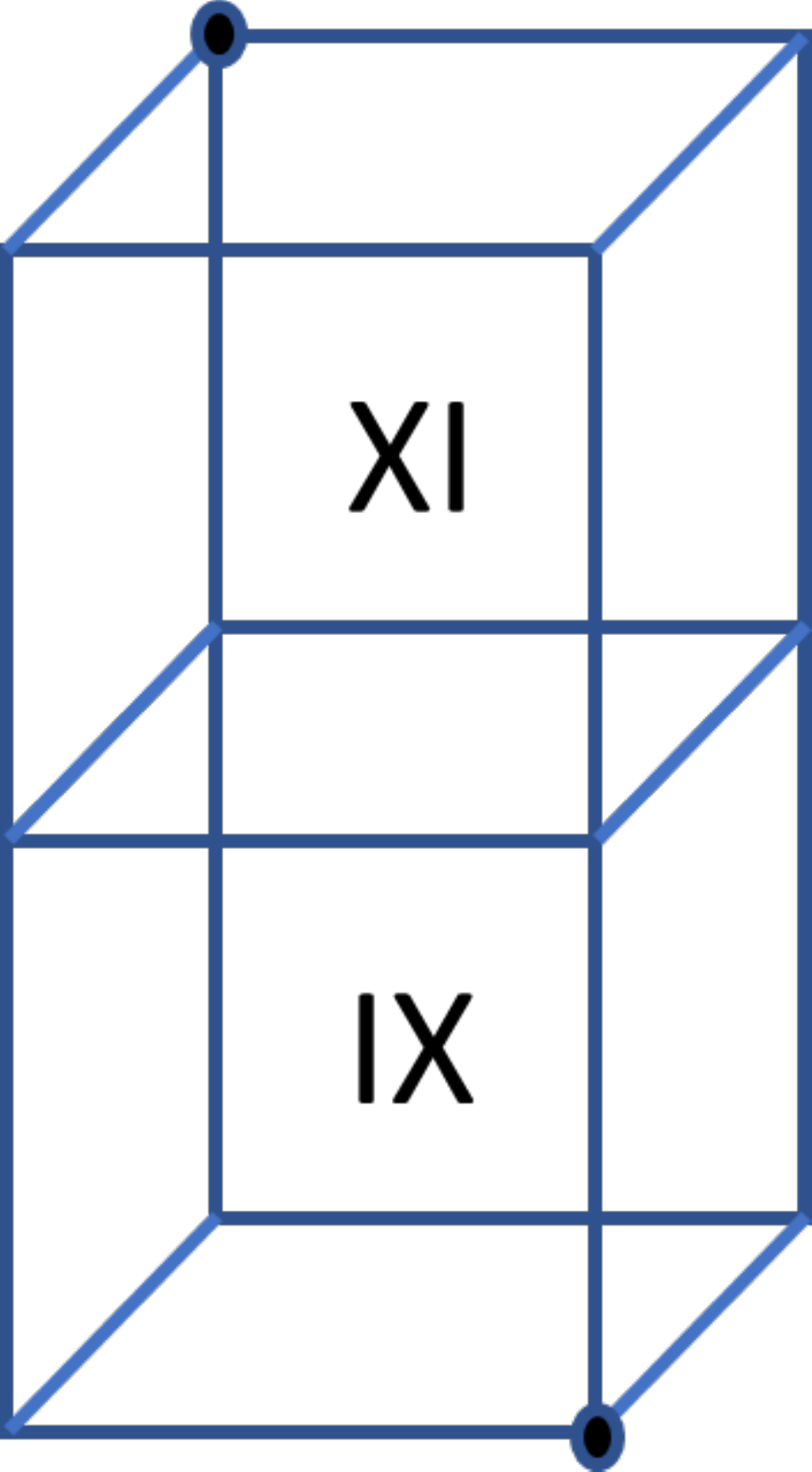}}
\caption{Excitation patterns and a string operator for Cubic code 9}
\label{CC6}
\end{figure}

\begin{table}[H]
\renewcommand{\arraystretch}{1.4}{
\begin{tabular}{c|cccccc c}
Model & Deformability & $n_{\text{rods}}^{3D}$ & $n_{\text{rods}}^{2D}\left(zx,xy,yz\right)$ &
$n_{\text{rods}}^{1D}\left(x,y,z\right)$ & $n_{m}\left(zx,xy,yz\right)$ & $d$ & Type\tabularnewline
\hline
CC9 & $\times$ & 0 & $\left(0,0,0\right)$ & $\left(0,0,0\right)$ & $\left(c,c,c\right)$ & 1 & fractal type-I\tabularnewline
\end{tabular}}
\caption{Operator data. For definitions, see table~\ref{table_invariants}}
\end{table}

\subsubsection{Sierpinski fractal spin liquid}
This is a lineon model due to Chamon-Castelnovo~\cite{doi:10.1080/14786435.2011.609152} and Yoshida~\cite{yoshida2013exotic} whose stabilizer generators are given by
\begin{align}
\begin{array}{c}
\drawgenerator{IX}{XX}{}{XI}{}{}{}{IX}
\quad
\drawgenerator{}{}{}{ZI}{ZI}{ZZ}{}{IZ}
\end{array}
\, .
\end{align}

\begin{table}[H]
\renewcommand{\arraystretch}{1.4}{
\begin{tabular}{c|cccccc c}
Model & Deformability & $n_{\text{rods}}^{3D}$ & $n_{\text{rods}}^{2D}\left(zx,xy,yz\right)$ &
$n_{\text{rods}}^{1D}\left(x,y,z\right)$ & $n_{m}\left(zx,xy,yz\right)$ & $d$ & Type\tabularnewline
\hline
SFSL & $\checkmark$ & 0 & $\left(0,0,0\right)$ & $\left(0,0,n_1\right)$ & $\left(0,0,c\right)$ & 1 & fractal type-I
\end{tabular}}
\caption{Operator data. For definitions, see table~\ref{table_invariants}}
\end{table}

\subsubsection{Fibonacci fractal spin liquid }
This is a lineon model due to Yoshida~\cite{yoshida2013exotic} whose stabilizer generators are given by
\begin{align}
\begin{array}{c}
\xymatrix@!0{%
&& && IX \ar@{-}[dl]   \\
& && IX \ar@{-}[rr] &&     \\
 && IX \ar@{-}[ur] && XX \ar@{-}[ll]\ar@{-}[ur]   \\
& && \ar@{.}[uu]\ar@{.}[dl]\ar@{.}[rr] &&  \ar@{-}[uu] \\
&&  \ar@{-}[uu] && XI \ar@{-}[uu]\ar@{-}[ll]\ar@{-}[ur]
}
\quad
\xymatrix@!0{%
 && IZ \ar@{-}[rr]\ar@{-}[dl] &&    \\
&  &&  \ar@{-}[ur]\ar@{-}[ll]\ar@{-}[dd]   \\
&& ZZ \ar@{.}[uu]\ar@{.}[rr]\ar@{.}[dl] && ZI \ar@{-}[uu]\ar@{-}[dl]   \\
&  \ar@{-}[uu]\ar@{-}[rr] && ZI    \\
 && ZI \ar@{-}[ur]  
}
\end{array}
\, .
\end{align}

\subsection{Fractal type-I  models with planons}

In this section we collect fractal type-I models that support composite planons: cubic codes 11-17 from Ref.~\onlinecite{haah2011local}. 
All of these models include planons in a single stack of parallel planes, we indicate the orientation and braiding statistics of these planons for each model. 
The planons in these models were found via generalized Gauss's laws, each formed by the product of stabilizer generators over a membrane that leave a planon string operator along the boundary. These Gauss's laws are easy to identify by inspection of a single stabilizer generator: since the product of the operators on all corners of any cubic code generator is equal to the identity, a planar Gauss's law arises whenever the product of operators on the corners of a single square face of a generator equals the identity. 

\subsubsection{Cubic code 11}
This is a fracton model with stabilizer generators given by 
\begin{align}
\begin{array}{c}
\drawgenerator{IX}{II}{XI}{XX}{IX}{XX}{II}{XI}
\quad
\drawgenerator{ZI}{ZZ}{II}{IZ}{ZI}{II}{IZ}{ZZ}
\end{array}
\, .
\end{align}
This model supports particles of all sub-dimensional mobilities 0, 1 and 2. 
Planon operators~\cite{haah2011local} are given by $ZZ[\hat{z}]_{(000)}ZI[\hat{z}]_{(100)}$ and $XI[\hat{y}]_{(000)}IX[\hat{y}]_{(100)}$ and there is also a lineon operator $XX[\hat{x}]_{(000)}IX[\hat{x}]_{(100)}XI[\hat{x}]_{(200)}$. 
The planon operators lie in the $yz$ plane due to emergent Gauss's laws which arise since the product of operators at the corners of a $yz$ square in a single generator equals the identity. 
For example $XI$, $XX$, $IX$ and $II$ on the front $yz$ face multiply to give $II$. 
The string operator $XI[\hat{y}]_{(000)}IX[\hat{y}]_{(100)}$ appears along the top and bottom edge of a Gauss's law membrane and can be seen to arise from the product of corner operators in a single generator along the boundary of the membrane, e.g. the product of $XX$ and $IX$ give $XI$ along the lower edge of the back face and $XI$, $XX$ give $IX$ along the lower edge of the front face. 

\begin{figure}[h!]
\centering
\includegraphics[scale=1]{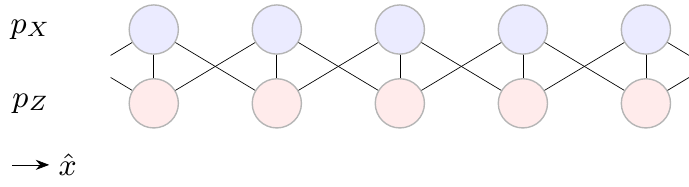}
 \caption{Anticommutation relations for planons parallel to the $yz$ plane in CC11. $p_X$ denotes composite planon excitations of $X$ stabilizers, and similarly for $Z$. An edge between a $p_X$ and $p_Z$ planon indicates that they have a braiding phase of $-1$.  }
\label{fig:CC11planons}
\end{figure}

\begin{table}[H]
\renewcommand{\arraystretch}{1.4}{
\begin{tabular}{c|cccccc c}
Model & Deformability & $n_{\text{rods}}^{3D}$ & $n_{\text{rods}}^{2D}\left(zx,xy,yz\right)$ &
$n_{\text{rods}}^{1D}\left(x,y,z\right)$ & $n_{m}\left(zx,xy,yz\right)$ & $d$ & Type\tabularnewline
\hline
CC11 & $\checkmark$ & 0 & $\left(0,0,n_1\right)$ & $\left(n_2,n_1,n_1\right)$ & $\left(c,c,\ell\right)$ & 0 & fractal type-I\tabularnewline
\end{tabular}}
\caption{Operator data. For definitions,  see table~\ref{table_invariants}}
\label{CC1_data}
\end{table}

\subsubsection{Cubic code 12}
This is a fracton model with stabilizer generators given by 
\begin{align}
\begin{array}{c}
\drawgenerator{IX}{II}{II}{XI}{IX}{XI}{XX}{XX}
\quad
\drawgenerator{ZI}{IZ}{ZZ}{ZZ}{ZI}{II}{II}{IZ}
\end{array}
\, .
\end{align}
This model has all sub-dimensional mobilities 0, 1 and 2. The planon  operators~\cite{haah2011local} are $IZ[\hat{z}]_{(000)}ZI[\hat{z}]_{(010)}$ and $XI[\hat{x}]_{(000)}XX[\hat{x}]_{(010)}$ due to Gauss's laws in the $xz$ planes.  

\begin{figure}[h!]
\centering
\includegraphics[scale=1]{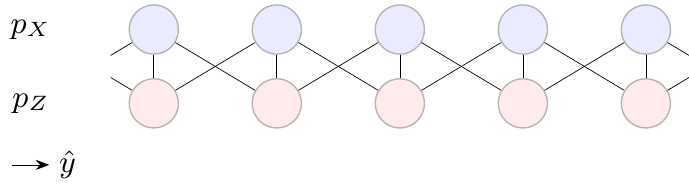}
 \caption{Anticommutation relations for planons parallel to the $xz$ plane in CC12. Where $p_X$ denotes composite planon excitations of $X$ stabilizers, and similarly for $Z$. An edge between a $p_X$ and $p_Z$ planon indicates that they have a braiding phase of $-1$.}
\label{fig:CC12planons}
\end{figure}
\begin{table}[H]
\renewcommand{\arraystretch}{1.4}{
\begin{tabular}{c|cccccc c}
Model & Deformability & $n_{\text{rods}}^{3D}$ & $n_{\text{rods}}^{2D}\left(zx,xy,yz\right)$ &
$n_{\text{rods}}^{1D}\left(x,y,z\right)$ & $n_{m}\left(zx,xy,yz\right)$ & $d$ & Type\tabularnewline
\hline
CC12 & $\checkmark$ & 0 & $\left(n_1,0,0\right)$ & $\left(n_1,0,n_1\right)$ & $\left(\ell,c,c\right)$ & 0 & fractal type-I\tabularnewline
\end{tabular}}
\caption{Operator data. For definitions, see table~\ref{table_invariants}}
\label{CC1_data}
\end{table}

\subsubsection{Cubic code 13}
This is a fracton model with stabilizer generators given by  
\begin{align}
\begin{array}{c}
\drawgenerator{XI}{II}{II}{XX}{IX}{XX}{XI}{IX}
\quad
\drawgenerator{ZI}{ZZ}{IZ}{ZI}{IZ}{II}{II}{ZZ}
\end{array}
\, .
\end{align}
This model has all sub-dimensional mobilities 0, 1 and 2. A planon operator~\cite{haah2011local} is given by $ZZ[\hat{z}]_{(000)}IZ[\hat{z}]_{(010)}$ due to Gauss's laws in the $xz$ planes.

\begin{figure}[!htb]
\centering
\includegraphics[scale=1]{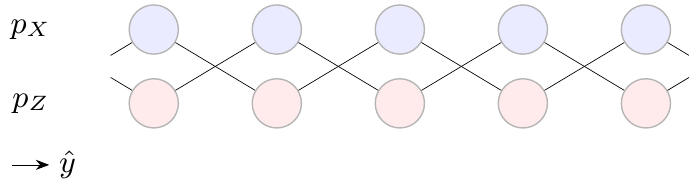}
 \caption{Anticommutation relations for planons parallel to the $xz$ plane in CC13. Where $p_X$ denotes composite planon excitations of $X$ stabilizers, and similarly for $Z$. An edge between a $p_X$ and $p_Z$ planon indicates that they have a braiding phase of $-1$.}
\label{fig:CC13planons}
\end{figure}

\begin{table}[!htb]
\renewcommand{\arraystretch}{1.4}{
\begin{tabular}{c|cccccc c}
Model & Deformability & $n_{\text{rods}}^{3D}$ & $n_{\text{rods}}^{2D}\left(zx,xy,yz\right)$ &
$n_{\text{rods}}^{1D}\left(x,y,z\right)$ & $n_{m}\left(zx,xy,yz\right)$ & $d$ & Type\tabularnewline
\hline
CC13 & $\checkmark$ & 0 & $\left(n_1,0,0\right)$ & $\left(n_1,0,n_1\right)$ & $\left(\ell,c,0\right)$ & 0 & fractal type-I\tabularnewline
\end{tabular}}
\caption{Operator data. For definitions, see table~\ref{table_invariants}}
\label{CC1_data}
\end{table}

\subsubsection{Cubic code 14}
This is a fracotn model with stabilizer generators given by 
\begin{align}
\begin{array}{c}
\drawgenerator{XI}{II}{XX}{XI}{IX}{XI}{XX}{XX}
\quad
\drawgenerator{ZI}{IZ}{ZZ}{ZZ}{IZ}{II}{ZZ}{IZ}
\end{array}
\, .
\end{align}
This model has has all sub-dimensional mobilities 0, 1 and 2. A planon operator~\cite{haah2011local} is given by $IX[\hat{z}]_{(000)}XI[\hat{z}]_{(010)}$ due to Gauss's laws in the $xz$ planes.

\begin{figure}[!htb]
\centering
\includegraphics[scale=1]{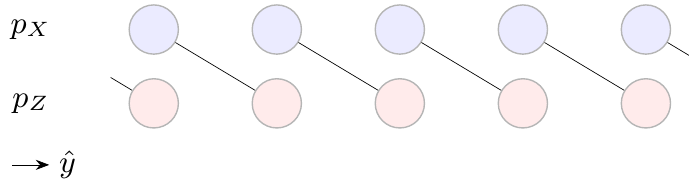}
 \caption{Anticommutation relations for planons parallel to the $xz$ plane in CC14. Where $p_X$ denotes composite planon excitations of $X$ stabilizers, and similarly for $Z$. An edge between a $p_X$ and $p_Z$ planon indicates that they have a braiding phase of $-1$.}
\label{fig:CC14planons}
\end{figure}

\begin{table}[!htb]
\renewcommand{\arraystretch}{1.4}{
\begin{tabular}{c|cccccc c}
Model & Deformability & $n_{\text{rods}}^{3D}$ & $n_{\text{rods}}^{2D}\left(zx,xy,yz\right)$ &
$n_{\text{rods}}^{1D}\left(x,y,z\right)$ & $n_{m}\left(zx,xy,yz\right)$ & $d$ & Type\tabularnewline
\hline
CC14 & $\checkmark$ & 0 & $\left(n_1,0,0\right)$ & $\left(n_1,0,n_1\right)$ & $\left(\ell,c,c\right)$ & 0 & fractal type-I\tabularnewline
\end{tabular}}
\caption{Operator data. For definitions, see table~\ref{table_invariants}}
\label{CC1_data}
\end{table}

\subsubsection{Cubic code 15}
This is a fracton model with stabilizer generators given by 
\begin{align}
\begin{array}{c}
\drawgenerator{XI}{XX}{II}{IX}{IX}{XI}{II}{XX}
\quad
\drawgenerator{ZI}{IZ}{II}{ZZ}{IZ}{ZZ}{II}{ZI}
\end{array}
\, .
\end{align}
This model has all sub-dimensional mobilities 0, 1 and 2. The planon operators~\cite{haah2011local} are $ZI[\hat{y}]_{(000)}ZZ[\hat{y}]_{(100)}$ and $IX[\hat{z}]_{(000)}XI[\hat{z}]_{(100)}$ due to Gauss's laws along $yz$ planes. 

\begin{figure}[!htb]
\centering
\includegraphics[scale=1]{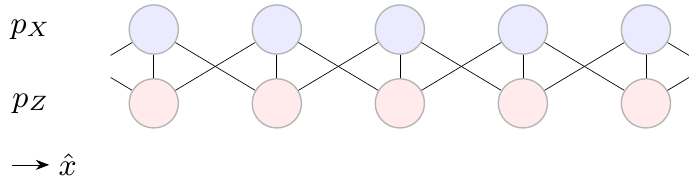}
\caption{Anticommutation relations for planons parallel to the $yz$ plane in CC15. Where $p_X$ denotes composite planon excitations of $X$ stabilizers, and similarly for $Z$. An edge between a $p_X$ and $p_Z$ planon indicates that they have a braiding phase of $-1$.}
\label{fig:CC15planons}
\end{figure}

\begin{table}[H]
\renewcommand{\arraystretch}{1.4}{
\begin{tabular}{c|cccccc c}
Model & Deformability & $n_{\text{rods}}^{3D}$ & $n_{\text{rods}}^{2D}\left(zx,xy,yz\right)$ &
$n_{\text{rods}}^{1D}\left(x,y,z\right)$ & $n_{m}\left(zx,xy,yz\right)$ & $d$ & Type\tabularnewline
\hline
CC15 & $\checkmark$ & 0 & $\left(0,0,n_1\right)$ & $\left(0,n_1,n_1\right)$ & $\left(c,c,\ell\right)$ & 0 & fractal type-I\tabularnewline
\end{tabular}}
\caption{Operator data. For definitions, see table~\ref{table_invariants}}
\label{CC1_data}
\end{table}

\subsubsection{Cubic code 16}
\label{CC16_data}
This is a fracton model that turns out to be equivalent to CC15. The stabilizer generators are given by 
\begin{align}
\begin{array}{c}
\drawgenerator{XI}{XX}{II}{XX}{IX}{IX}{II}{XI}
\quad
\drawgenerator{ZI}{ZI}{II}{IZ}{IZ}{ZZ}{II}{ZZ}
\end{array}
\, .
\end{align}
This model has all sub-dimensional mobilities 0, 1 and 2. Several string operators supported by the model~\cite{haah2011local} are $ZZ[101]_{(000)}IZ[\hat{101}]_{(100)}$ and $IX[110]_{(000)}XI[110]_{(100)}$.

Applying a modular transformation $x\mapsto x/y,\, y\mapsto y,\, z\mapsto z y$ and relabelling the axes we find
\begin{align}
\begin{array}{c}
\drawgenerator{IX}{IX}{XI}{II}{XI}{XX}{XX}{II}
\quad
\drawgenerator{IZ}{ZZ}{ZZ}{II}{ZI}{ZI}{IZ}{II}
\end{array}
\end{align}
Rotating $\pi /2$ anticlockwise around $\hat{y}$, and reflecting across the $x=y$ plane we find CC15. 

\begin{table}[H]
\renewcommand{\arraystretch}{1.4}{
\begin{tabular}{c|cccccc c}
Model & Deformability & $n_{\text{rods}}^{3D}$ & $n_{\text{rods}}^{2D}\left(zx,xy,yz\right)$ &
$n_{\text{rods}}^{1D}\left(x,y,z\right)$ & $n_{m}\left(zx,xy,yz\right)$ & $d$ & Type\tabularnewline
\hline
CC16 & $\times$ & 0 & $\left(0,0,0\right)$ & $\left(0,0,0\right)$ & $\left(c,c,c\right)$ & 0 & fractal type-I\tabularnewline
\end{tabular}}
\caption{Operator data. For definitions,  see table~\ref{table_invariants}}
\label{CC1_data}
\end{table}

\subsubsection{Cubic code 17}
This is a fracton model with stabilizer generators given by 
\begin{align}
\begin{array}{c}
\drawgenerator{XI}{IX}{IX}{XX}{IX}{XX}{XI}{IX}
\quad
\drawgenerator{ZI}{ZZ}{IZ}{ZI}{IZ}{ZI}{ZI}{ZZ}
\end{array}\, .
\end{align}
This model has all sub-dimensional mobilities 0, 1 and 2. A planon operator~\cite{haah2011local} is given by $ZZ[\hat{x}]_{(000)}IZ[\hat{x}]_{(001)}$ due to Gauss's laws parallel to the $xz$ plane.

\begin{figure}[h!]
\centering
\includegraphics[scale=1]{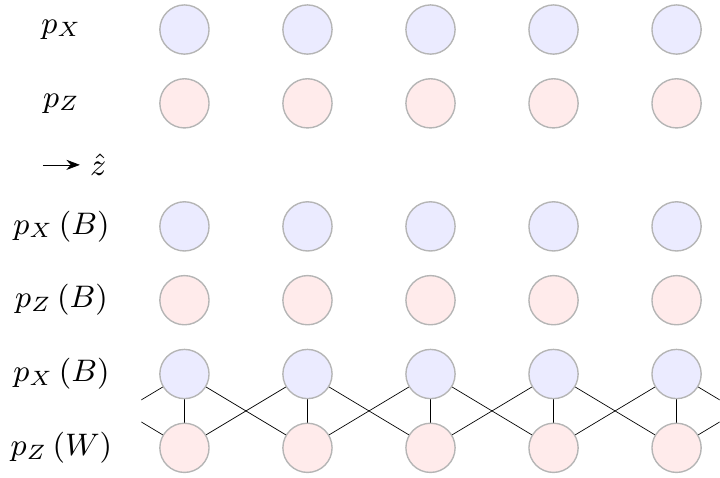}
 \caption{Anticommutation relations for planons parallel to the $xz$ plane in CC17. The planons can be decomposed into those hopping on a $(B/W)$ checkerboard coloring of the 2D $\hat{z}$-planes.  Where $p_X (B)$  denotes composite planon excitations of $X$ stabilizers on the black sublattice, and similarly for $Z$ and $(W)$. An edge between a $p_X$ and $p_Z$ planon indicates that they have a braiding phase of $-1$.}
\label{fig:CC17planons}
\end{figure}

\begin{table}[H]
\renewcommand{\arraystretch}{1.4}{
\begin{tabular}{c|cccccc c}
Model & Deformability & $n_{\text{rods}}^{3D}$ & $n_{\text{rods}}^{2D}\left(zx,xy,yz\right)$ &
$n_{\text{rods}}^{1D}\left(x,y,z\right)$ & $n_{m}\left(zx,xy,yz\right)$ & $d$ & Type\tabularnewline
\hline
CC17 & $\checkmark$ & 0 & $\left(0,n_1,0\right)$ & $\left(n_1,n_1,n_2\right)$ & $\left(c,\ell,c\right)$ & 0 & fractal type-I\tabularnewline
\end{tabular}}
\caption{Operator data. For definitions, see table~\ref{table_invariants}}
\label{CC1_data}
\end{table}

\subsection{ Foliated type-I  models}
\label{foliatedType1zoo}
In this section we collect foliated type-I models: stacks of 2D toric code in 3D, the twice foliated X-cube model~\cite{shirley2018Fractional}, the standard X-cube model~\cite{vijay2016fracton}, the Checkerboard model~\cite{new_TO_vijay}, Halasz and Hsieh's type-I model~\cite{hsieh_halasz_partons}, Chamon's model~\cite{chamon2005quantum}, the membrane coupled X-cube model~\cite{PhysRevB.95.245126}, and Halasz, Hsieh and Balents's models~\cite{HHB_models}. Only the first four models listed above have been rigorously shown to be foliated, we show that Halasz and Hsieh's type-I model is equivalent to two copies of the checkerboard model and hence also foliated. The other models are expected to be foliated but it remains an interesting open question to show this. 
We remark that closely related Majorana stabilizer models have appeared in the literature~\cite{new_TO_vijay}, and it has even been demonstrated that the Majorana checkerboard model is foliated and equivalent to the qubit X-cube model~\cite{Wang2019}. For simplicity we only consider qudit stabilizer models and so do not include any majorana stabilizer models here. 

\label{sec:Type1}
\subsubsection{Stack of 2D toric codes}
A stack of 2D toric codes forms a 3D model with planons obviously appearing in the planes of the 2D toric codes. The stabilizer generators are given by
\begin{align}
\begin{array}{c}
\drawgenerator{}{}{IX}{XX}{XI}{}{}{}
\quad
\drawgenerator{}{}{IZ}{}{ZI}{ZZ}{}{}
\end{array}
\, .
\end{align}  
We remark there are also $\Z_N$ versions of the 2D toric code and stacks made up from them. 
  
\begin{table}[H]
\renewcommand{\arraystretch}{1.4}{
\begin{tabular}{c|cccccc c}
Model & Deformability & $n_{\text{rods}}^{3D}$ & $n_{\text{rods}}^{2D}\left(zx,xy,yz\right)$ &
$n_{\text{rods}}^{1D}\left(x,y,z\right)$ & $n_{m}\left(zx,xy,yz\right)$ & $d$ & Type\tabularnewline
\hline
2DTC$_{xy}$ & $\checkmark$ & 0 & $\left(0,n_1,0\right)$ & $\left(n_1,n_1,0\right)$ & $\left(0,\ell,0\right)$ & 2 & foliated type-I \tabularnewline
 
2DTC$_{yz}$ & $\checkmark$ & 0 & $\left(0,0,n_1\right)$ & $\left(0,n_1,n_1\right)$ & $\left(\ell,0,0\right)$ & 2 & foliated type-I \tabularnewline
 
2DTC$_{xz}$ & $\checkmark$ & 0 & $\left(n_1,0,0\right)$ & $\left(n_1,0,n_1\right)$ & $\left(0,0,\ell\right)$ & 2 & foliated type-I \tabularnewline
 \end{tabular}}
\caption{Operator data. For definitions, see table~\ref{table_invariants}}
\label{CC1_data}
\end{table}

\subsubsection{Twice foliated X-cube model}
This is a lineon model whose stabilizer generators are given by 
\begin{align}
\begin{array}{c}
\drawgenerator{}{}{IX}{IX}{IX}{XX}{}{XI}
\quad
\drawgenerator{ZI}{ZZ}{}{IZ}{}{}{ZI}{ZI}
\end{array}
\, .
\end{align}  
Single stabilizer excitations are lineons along the $\hat{z}$ direction, and a composite formed by a pair of like excitations shifted along the $\hat{x}$ or $\hat{y}$ direction is a planon in the orthogonal plane respectively. 
This model was found with a foliated construction~\cite{shirley2018Fractional} and also via an anisotropic coupled layer construction~\cite{Fuji2019}.

\subsubsection{$X$-Cube model}
This is a fracton model with stabilizer generators given by
\begin{align}
\begin{array}{c}
\drawgenerator{IXI}{}{IXX}{IIX}{XIX}{XXX}{XII}{XXI}
\quad
\drawgenerator{}{}{}{}{IIZ}{}{IZZ}{IZI}
\quad
\drawgenerator{ZIZ}{}{IIZ}{}{}{}{}{ZII}
\quad
\drawgenerator{}{}{IZI}{ZZI}{ZII}{}{}{}
\end{array}
\, .
\end{align}
This model has particles of all sub-dimensional mobilities 0, 1 and 2. There are also $\Z_N$ versions of the $X$-Cube model~\cite{vijay2016fracton,vijay2017isotropic,PhysRevB.97.165106}, and other generalizations known as cage-nets~\cite{prem2018cage} and the related string-membrane-nets~\cite{Slagle2018foliated}. 

\begin{table}[H]
\renewcommand{\arraystretch}{1.4}{
\begin{tabular}{c|cccccc c}
Model & Deformability & $n_{\text{rods}}^{3D}$ & $n_{\text{rods}}^{2D}\left(zx,xy,yz\right)$ &
$n_{\text{rods}}^{1D}\left(x,y,z\right)$ & $n_{m}\left(zx,xy,yz\right)$ & $d$ & Type\tabularnewline
\hline
XC & $\checkmark$ & 0 & $\left(n_1,n_1,n_1\right)$ & $\left(n_2,n_2,n_2\right)$ & $\left(\ell,\ell,\ell\right)$ & 0 & foliated type-I\tabularnewline
\end{tabular}}
\caption{Operator data. For definitions, see table~\ref{table_invariants}}
\label{CC1_data}
\end{table}

\subsubsection{Checkerboard model}
This is a fracton model with stabilizer generators given by
\begin{align}
\begin{array}{c}
\drawgenerator{X}{X}{X}{X}{X}{X}{X}{X}
\quad
\drawgenerator{Z}{Z}{Z}{Z}{Z}{Z}{Z}{Z}
\end{array} \, ,
\end{align}
where the generators only appear on one colour of cells in a cubic lattice with bicoloured cells. The checkerboard model was shown to be local unitary equivalent to two copies of the $X$-cube model, and hence foliated, in Ref.~\onlinecite{shirley2018Foliated}. A twisted generalization of the checkerboard model was formulated in Ref.~\onlinecite{hao_twisted}. 

\begin{table}[H]
\renewcommand{\arraystretch}{1.4}{
\begin{tabular}{c|cccccc c}
Model & Deformability & $n_{\text{rods}}^{3D}$ & $n_{\text{rods}}^{2D}\left(zx,xy,yz\right)$ &
$n_{\text{rods}}^{1D}\left(x,y,z\right)$ & $n_{m}\left(zx,xy,yz\right)$ & $d$ & Type\tabularnewline
\hline
CB & $\checkmark$ & 0 & $\left(n_1,n_1,n_1\right)$ & $\left(n_2,n_2,n_2\right)$ & $\left(\ell,\ell,\ell\right)$ & 0 & foliated type-I\tabularnewline
\end{tabular}}
\caption{Operator data. For definitions, see table~\ref{table_invariants}}
\label{CC1_data}
\end{table}

\subsubsection{HH  Type-I model}
This is a fracton model that is equivalent to two decoupled copies of the checkerboard model after bicoloring the sites and applying swap to qubits on sites of one color. The stabilizer generators are given by
\begin{align}
\begin{array}{c}
\drawgenerator{IX}{XI}{XI}{IX}{XI}{IX}{IX}{XI}
\quad
\drawgenerator{IZ}{ZI}{ZI}{IZ}{ZI}{IZ}{IZ}{ZI}
\end{array}
\, .
\end{align}

\begin{table}[H]
\renewcommand{\arraystretch}{1.4}{
\begin{tabular}{c|cccccc c}
Model & Deformability & $n_{\text{rods}}^{3D}$ & $n_{\text{rods}}^{2D}\left(zx,xy,yz\right)$ &
$n_{\text{rods}}^{1D}\left(x,y,z\right)$ & $n_{m}\left(zx,xy,yz\right)$ & $d$ & Type\tabularnewline
\hline
HH-I & $\checkmark$ & 0 & $\left(n_1,n_1,n_1\right)$ & $\left(n_2,n_2,n_2\right)$ & $\left(\ell,\ell,\ell\right)$ & 0 & foliated type-I\tabularnewline
\end{tabular}}
\caption{Operator data. For definitions, see table~\ref{table_invariants}}
\label{CC1_data}
\end{table}

\subsubsection{Chamon's model}
This is the first topological fracton model to appear in the literature~\cite{chamon2005quantum}. It is expected to admit a foliation structure but this has not been shown. The stabilizer generators are given by
\begin{align}
\begin{array}{c}
\drawgenerator{}{}{IXII}{}{IIYI}{IXYZ}{}{IIIZ}
\quad
\drawgenerator{}{XIII}{}{}{IIZI}{}{XIZY}{IIIY}
\quad
\drawgenerator{YZIX}{YIII}{IZII}{}{}{}{}{IIIX}
\quad
\drawgenerator{}{ZIII}{IYII}{ZYXI}{IIXI}{}{}{}
\end{array}\, ,
\end{align}
which can also be written in terms of a single stabilizer generator by making a different choice of lattice vectors 
\begin{align}
\begin{array}{c}
\drawgenerator{Z}{I}{X}{Y}{Z}{I}{X}{Y}
\end{array}\, .
\end{align}
Some lineon string operators are given by
$X[0\bar{1}1]_{(000)}$, $Y[1\bar{1}0]_{(000)}$ and $Z[010]_{(000)}$. 
Planon string operators for excitations with mobility in the $\hat{i}$-plane appear at the boundary of a product of stabilizers over a square in the $\hat{i}$ plane of the dual lattice. The lineon and planon operators are shown in Fig.~\ref{chamon_exc}. 

\begin{figure}
    \centering
\sidesubfloat[]{\includegraphics[scale=0.22]{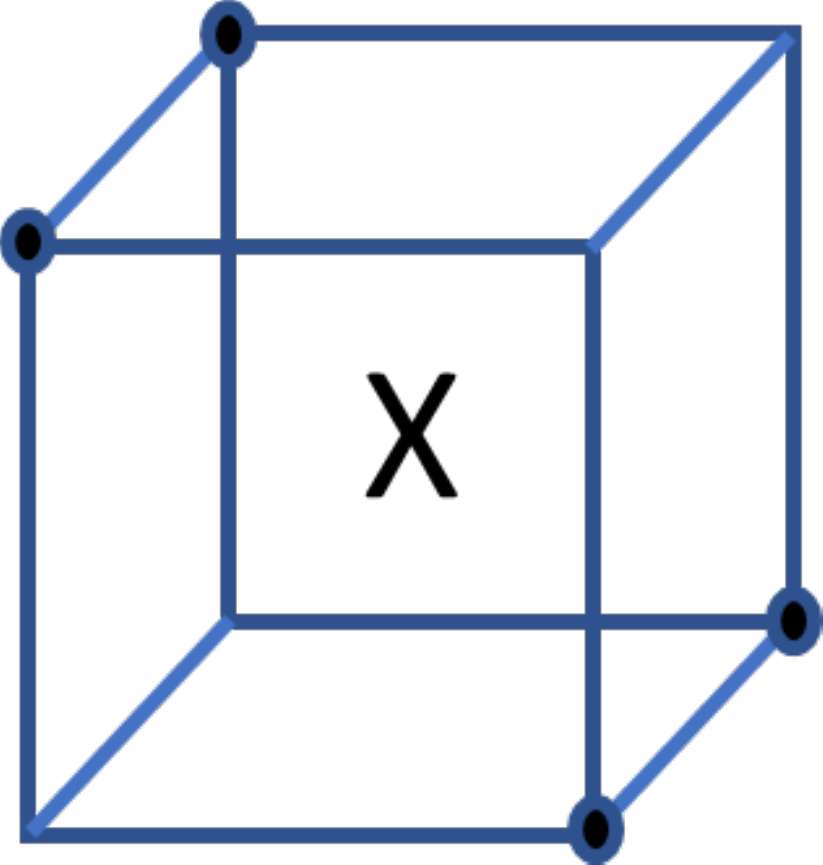}}
\sidesubfloat[]{\includegraphics[scale=0.22]{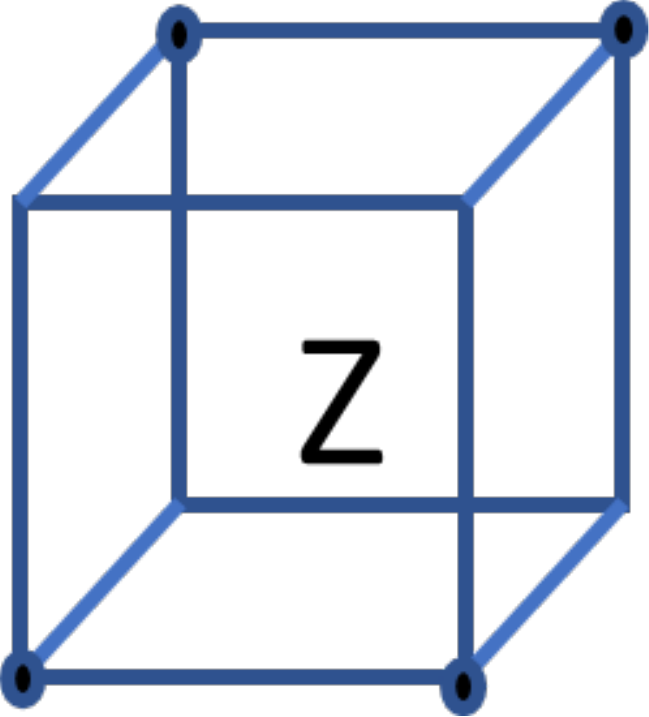}}    
\sidesubfloat[]{\includegraphics[scale=0.22]{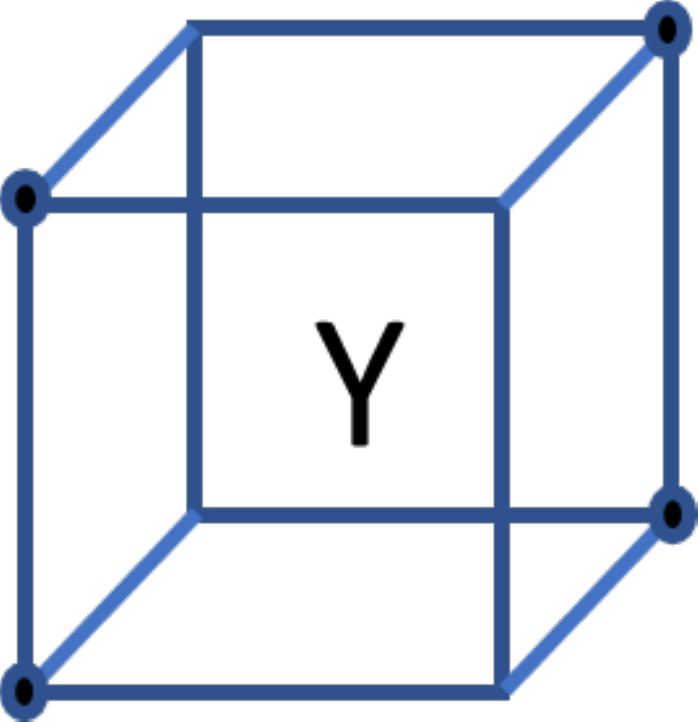}}\\    
\sidesubfloat[]{\includegraphics[scale=0.22]{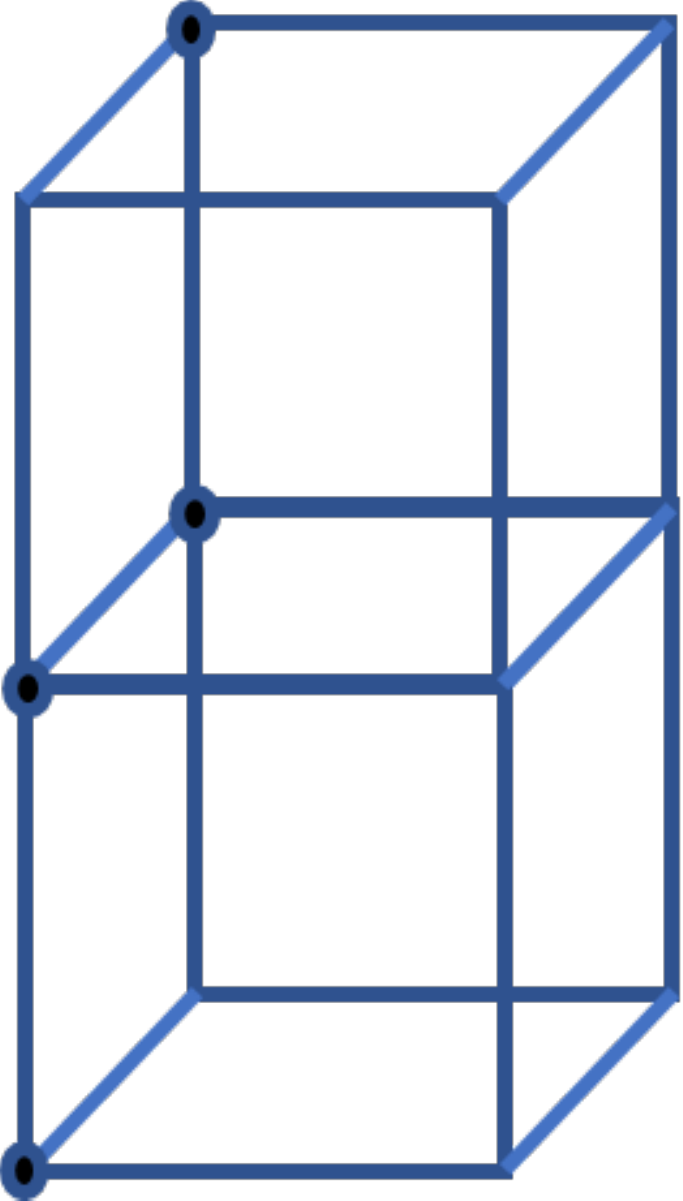}} 
\sidesubfloat[]{\includegraphics[scale=0.22]{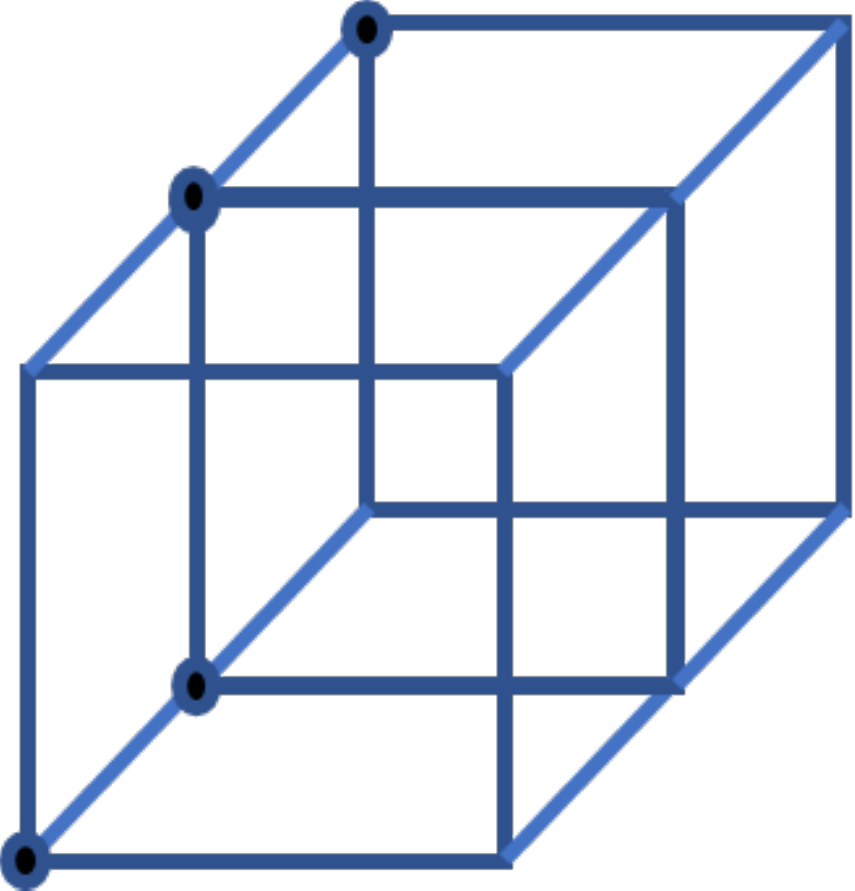}} 
\sidesubfloat[]{\includegraphics[scale=0.22]{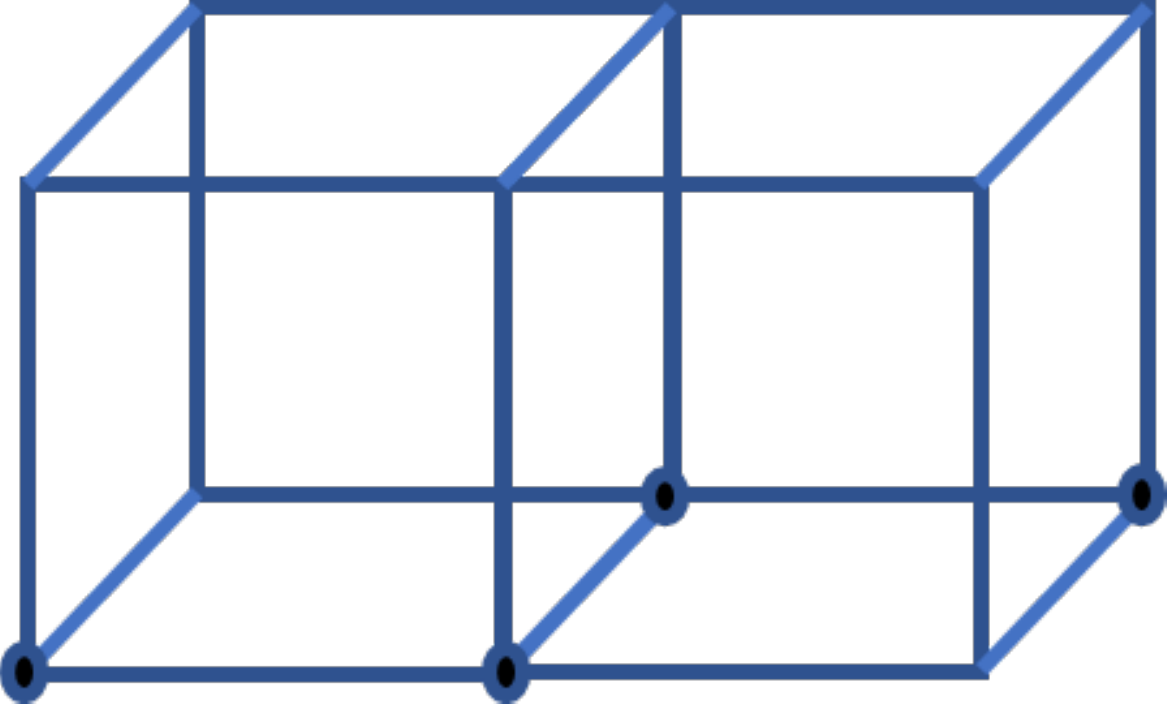}} 
\caption{Lineon and planon operators in Chamon's model. a) The excitation pattern for an $X$ operator, two of the excitations shown on the $yz$ plane form a lineon that moves along the $\hat{x}$ direction. b) The excitation pattern for a $Z$ operator, two excitations along $xz$ plane form a lineon that moves along the $\hat{y}$ direction c) The excitation pattern for a $Y$ operator, two excitations along the $xy$ plane form a lineon that moves along the $\hat{z}$ direction. (d)-(f) Excitations that form composite planons in the $xy$, $yz$ plane and $xz$ plane respectively.} 
    \label{chamon_exc}
\end{figure}

\begin{table}[H]
\renewcommand{\arraystretch}{1.4}{
\begin{tabular}{c|cccccc c}
Model & Deformability & $n_{\text{rods}}^{3D}$ & $n_{\text{rods}}^{2D}\left(zx,xy,yz\right)$ &
$n_{\text{rods}}^{1D}\left(x,y,z\right)$ & $n_{m}\left(zx,xy,yz\right)$ & $d$ & Type\tabularnewline
\hline
Chm & $\checkmark$ & $0$ & $\left(n_1,n_1,n_1\right)$ & $\left(n_2,n_2,n_2\right)$ & $\left(\ell,\ell,\ell\right)$ & 0 & foliated type-I \tabularnewline
\end{tabular}}
\caption{Operator data. For definitions, see table~\ref{table_invariants}}
\label{CC1_data}
\end{table}

\subsubsection{CSS model derived from Chamon's model}
This is a CSS model with the same charge module/structure as Chamons model~\cite{vijay2016fracton}. The stabilizer generators are given by
\begin{align}
\begin{array}{c}
\drawgenerator{IX}{II}{XI}{XX}{IX}{II}{XI}{XX}
\quad 
\drawgenerator{ZI}{II}{IZ}{ZZ}{ZI}{II}{IZ}{ZZ}
\end{array}\, .
\end{align}
It is expected to have the same foliation structure, if any, as Chamon's model. 

\subsubsection{Membrane coupled $X$-Cube model}

This is a model derived from coupling four copies of $X$-cube~\cite{PhysRevB.95.245126}. The four copies of $X$-Cube are arranged on the edges of four cubic sublattices of the face centered cubic lattice, such that an edge from three distinct cubic sublattices intersect on every edge and cube. We define a new lattice where these intersections of edges are the sites, the stabilizers are given by 
\begin{align}
\begin{array}{c}
\xymatrix@!0{%
 && III \ar@{-}[rr]\ar@{-}[dl]  && IXI &&  \\
& IIX && XII  \ar@{-}[ll]\ar@{-}[dd]\ar@{-}[ur] &&   \\
 && III \ar@{.}[uu]\ar@{.}[rr]\ar@{.}[dl] && IIX \ar@{-}[uu]\ar@{-}[rr] && XII   \\
& IXI \ar@{-}[uu]\ar@{-}[dl] && III  \ar@{-}[ll]\ar@{-}[dd]\ar@{-}[ur] && IXI  \ar@{-}[ll]\ar@{-}[dd]\ar@{-}[ur] &&   \\
XII && IIX \ar@{-}[ll]\ar@{-}[dd]\ar@{-}[ur] && III \ar@{.}[uu]\ar@{.}[rr]\ar@{.}[dl] && III \ar@{-}[uu]\ar@{-}[dl]   \\
& III \ar@{.}[uu]\ar@{.}[rr]\ar@{.}[dl] && XII \ar@{-}[rr]\ar@{-}[dl] && IIX \\
 III \ar@{-}[rr]\ar@{-}[uu] && IXI && 
 }
 \quad
 \drawgenerator{ZII}{}{IIZ}{}{ZII}{}{IIZ}{}
 \quad 
 \drawgenerator{}{}{IIZ}{IZI}{}{}{IIZ}{IZI}
 \end{array}\, ,
 \end{align}
 which is equivalent to four copies of $X$-Cube, where translations act by permuting the copies. The $X$-Cubes are then coupled with an on-site $XXX$ field, leading to the stabilizer generators
\begin{align}
\begin{array}{c}
\xymatrix@!0{%
 && II \ar@{-}[rr]\ar@{-}[dl]  && XI &&  \\
& IX && XX  \ar@{-}[ll]\ar@{-}[dd]\ar@{-}[ur] &&   \\
 && II \ar@{.}[uu]\ar@{.}[rr]\ar@{.}[dl] && IX \ar@{-}[uu]\ar@{-}[rr] && XX   \\
& XI \ar@{-}[uu]\ar@{-}[dl] && II  \ar@{-}[ll]\ar@{-}[dd]\ar@{-}[ur] && XI  \ar@{-}[ll]\ar@{-}[dd]\ar@{-}[ur] &&   \\
XX && IX \ar@{-}[ll]\ar@{-}[dd]\ar@{-}[ur] && II \ar@{.}[uu]\ar@{.}[rr]\ar@{.}[dl] && II \ar@{-}[uu]\ar@{-}[dl]   \\
& II \ar@{.}[uu]\ar@{.}[rr]\ar@{.}[dl] && XX \ar@{-}[rr]\ar@{-}[dl] && IX \\
 II \ar@{-}[rr]\ar@{-}[uu] && XI && 
 }
\quad
\xymatrix@!0{%
 && II \ar@{-}[rr]\ar@{-}[dl]  && IZ &&  \\
& ZI && ZZ  \ar@{-}[ll]\ar@{-}[dd]\ar@{-}[ur] &&   \\
 && II \ar@{.}[uu]\ar@{.}[rr]\ar@{.}[dl] && ZI \ar@{-}[uu]\ar@{-}[rr] && ZZ   \\
& IZ \ar@{-}[uu]\ar@{-}[dl] && II  \ar@{-}[ll]\ar@{-}[dd]\ar@{-}[ur] && IZ  \ar@{-}[ll]\ar@{-}[dd]\ar@{-}[ur] &&   \\
ZZ && ZI \ar@{-}[ll]\ar@{-}[dd]\ar@{-}[ur] && II \ar@{.}[uu]\ar@{.}[rr]\ar@{.}[dl] && II \ar@{-}[uu]\ar@{-}[dl]   \\
& II \ar@{.}[uu]\ar@{.}[rr]\ar@{.}[dl] && ZZ \ar@{-}[rr]\ar@{-}[dl] && ZI \\
 II \ar@{-}[rr]\ar@{-}[uu] && IZ && 
 }
\end{array}\, ,
\end{align}
within the subspace where $XXX=1$, to leading order in perturbation theory.  
This model is expected, but not shown, to be foliated.

\subsubsection{HHB  model A }
The following model is also expected to be foliated from the analysis in Ref.~\onlinecite{HHB_models} but it has not been shown. 
\begin{align}
\begin{array}{c}
\xymatrix@!0{%
 && IX \ar@{-}[rr]\ar@{-}[dl] && IX   \\
& IX && XX \ar@{-}[ur]\ar@{-}[ll]\ar@{-}[dd]   \\
&& IX \ar@{.}[uu]\ar@{.}[rr]\ar@{.}[dl] && XX \ar@{-}[uu]\ar@{-}[dl]   \\
& XX \ar@{-}[uu]\ar@{-}[rr] && IX \ar@{-}[rr] && XI    \\
 && XI \ar@{-}[ur]   \\
& && XI\ar@{-}[uu] 
}
\quad
\xymatrix@!0{%
&  && IZ \ar@{-}[dd]   \\
&& && IZ \ar@{-}[dl]   \\
& IZ \ar@{-}[rr] && ZI \ar@{-}[rr] && ZZ    \\
 && ZZ \ar@{-}[ur] && ZI \ar@{-}[ll]\ar@{-}[ur]   \\
& && ZZ\ar@{.}[uu]\ar@{.}[dl]\ar@{.}[rr] && ZI \ar@{-}[uu] \\
&& ZI \ar@{-}[uu] && ZI \ar@{-}[uu]\ar@{-}[ll]\ar@{-}[ur]
}
\end{array}
\, .
\end{align}

\subsubsection{HHB  model B}
Similarly for the following model, also from Ref.~\onlinecite{HHB_models}
\begin{align}
\begin{array}{c}
\xymatrix@!0{%
 && Z \ar@{-}[rr]\ar@{-}[dl] && Y   \\
& Y && Z \ar@{-}[ur]\ar@{-}[ll]\ar@{-}[dd]   \\
&& Z \ar@{.}[uu]\ar@{.}[rr]\ar@{.}[dl] && Y \ar@{-}[uu]\ar@{-}[dl]\ar@{-}[rr] && X   \\
& X \ar@{-}[uu]\ar@{-}[rr] && Y  && Z \ar@{-}[ll]\ar@{-}[ur] \ar@{-}[dd]  \\
 && && Z \ar@{.}[uu]\ar@{.}[rr]\ar@{.}[dl] && Y \ar@{-}[uu]\ar@{-}[dl]  \\
& && Y \ar@{-}[uu] \ar@{-}[rr]  && Z
}
\end{array}\, .
\end{align}

\subsection{TQFT models} 
\label{tqftzoo}

In this section we summarize the TQFT stabilizer models in 3D: toric code with bosonic~\cite{3D_TC_model_1} or fermionic point particle~\cite{Levin_wen_fermion} and the 3-fermion Walker-Wang model~\cite{walker2012} which is subtly nontrivial. 
 
\subsubsection{3D toric code with bosonic charge}
The stabilizer generators of the 3D toric code with bosonic charges are given by
\begin{align}
\begin{array}{c}
\drawgenerator{IXI}{XXX}{}{IIX}{}{}{XII}{}
\quad
\drawgenerator{}{}{}{}{IIZ}{IZZ}{}{IZI}
\quad
\drawgenerator{}{}{IIZ}{}{}{ZIZ}{}{ZII}
\quad
\drawgenerator{}{}{IZI}{}{ZII}{ZZI}{}{}
\end{array}\, .
\end{align}
We remark there are other models due to Bombin~\cite{Bombin2007} and Kim~\cite{Kim2010}, equivalent to copies of 3D toric code, based on lattices with special colorability properties. 
The 3D toric code can be generalized to $\Z_N$ and further other groups and twisted by 4-cocycles via Dijkgraaf-Witten models~\cite{DijkgraafWitten}. 

\begin{table}[H]
\renewcommand{\arraystretch}{1.4}{
\begin{tabular}{c|cccccc c}
Model & Deformability & $n_{\text{rods}}^{3D}$ & $n_{\text{rods}}^{2D}\left(zx,xy,yz\right)$ &
$n_{\text{rods}}^{1D}\left(x,y,z\right)$ & $n_{m}\left(zx,xy,yz\right)$ & $d$ & Type\tabularnewline
\hline
3DTC & $\checkmark$ & 1 & $\left(1,1,1\right)$ & $\left(1,1,1\right)$ & $\left(0,0,0\right)$ & 3 & TQFT\tabularnewline
\end{tabular}}
\caption{Operator data. For definitions, see table~\ref{table_invariants}}
\label{CC1_data}
\end{table}

\subsubsection{3D toric code with fermionic charge}
There are also 3D toric codes with fermionic charges. A particular Hamiltonian realization of this is due to Levin and Wen~\cite{Levin_wen_fermion} and its stabilizer generators are given by
%
%
\begin{align}
\begin{array}{c}
\drawgenerator{}{}{}{}{YX}{ZI}{IZ}{XY}
\quad
\drawgenerator{IX}{}{YZ}{}{}{XI}{}{ZY}
\quad
\drawgenerator{}{}{XZ}{IY}{ZX}{YI}{}{}
\end{array}\, .
\end{align}
There is also a version due to Walker and Wang~\cite{walker2012} with the following stabilizer generators
\begin{align}
\begin{array}{c}
\drawgenerator{IXI}{XXX}{}{IIX}{}{}{XII}{}
\quad
\drawgenerator{IZI}{XII}{IZZ}{IIZ}{}{XII}{}{}
\quad
\drawgenerator{}{IXI}{}{IIZ}{ZIZ}{IXI}{ZII}{}
\quad
\drawgenerator{IZI}{IIX}{}{}{}{IIX}{ZII}{ZZI}
\end{array}\, .
\end{align}
These Hamiltonians realize the simplest case of discrete gauge theory with fermions. More general discrete gauge theories with fermions have been argued to cover the bulk topological excitations of any 3D topological order in a qudit commuting projector model~\cite{lan2017classification,Lan2019}. 
In Ref.~\onlinecite{Levin_wen_fermion} it was explained how to verify if a string operator moves a particle with fermionic or bosonic self-statistics. This can be implemented numerically to distinguish whether a 3D point particle is fermionic or bosonic. 
More generally the test can be applied to check whether any string operator moves a bosonic or fermionic point particle. This may be relevant for a more fine grained sorting of fractonic topologial stabilizer models, particularly nonCSS models, in the future.

\subsubsection{3-fermion Walker-Wang model}
The 3-fermion Walker-Wang model~\cite{walker2012} has 3 edges per vertex and 2 qubits per edge, labeled by 1 and 2. The stabilizer generators after coarse-graining the 6 qubits adjacent to each vertex onto a single site, are given by
\begin{align}
\begin{array}{c}
\drawgenerator{IX_1I}{X_1X_1X_1}{}{IIX_1}{}{}{X_1II}{}
\quad
\drawgenerator{IZ_1I}{X_1 II}{IZ_1Z_1}{IIZ_1}{}{X_1 X_2 II}{}{}
\quad
\drawgenerator{}{IX_1I}{}{IIZ_1}{Z_1 IZ_1}{IX_1 X_2 I}{Z_1 II}{}
\quad
\drawgenerator{IZ_1 I}{IIX_1 }{}{}{}{IIX_1 X_2}{Z_1 II}{Z_1 Z_1 I}
\\
\drawgenerator{IX_2I}{X_2X_2 X_2 }{}{IIX_2 }{}{}{X_2 II}{}
\quad
\drawgenerator{IZ_2 I}{X_1 X_2 II}{IZ_2 Z_2 }{IIZ_2 }{}{X_2 II}{}{}
\quad
\drawgenerator{}{IX_1 X_2 I}{}{IIZ_2 }{Z_2 IZ_2 }{IX_2 I}{Z_2 II}{}
\quad
\drawgenerator{IZ_2 I}{IIX_1 X_2}{}{}{}{IIX_2}{Z_2 II}{Z_2 Z_2 I}
\end{array}\, .
\end{align}
Recently this model has been shown to be trivialized by a locality preserving Clifford unitary~\cite{Haah2018}, even though a local unitary circuit cannot trivialize the model without also creating a 2D commuting projector Hamiltonian for a chiral 3-fermion topological order. 
It remains an interesting open question to define a local unitary circuit invariant that detects this model, or equivalently the locality preserving Clifford unitary that creates it. Since there are no nontrivial topological sectors in the bulk methods such as ours do not apply. 

\end{widetext}
\end{document}